\theoremstyle{plain}
\newtheorem{theorem}{Theorem}[section]
\newtheorem{lemma}[theorem]{Lemma}
\newtheorem{proposition}[theorem]{Proposition}
\theoremstyle{definition}
\theoremstyle{remark}
\newtheorem{remark}[theorem]{Remark}
\numberwithin{equation}{section}
\DeclareMathAlphabet{\mathpzc}{OT1}{pzc}{m}{it}
\def\eps{\varepsilon}
\def\N {\mathbb{N}}
\def\R {\mathbb{R}}
\def\S {\mathbb{S}}
\def\Z {\mathbb{Z}}
\newcommand\1{{\ensuremath {\mathds 1} }}
\newcommand{\bp}{\mathbf{p}}
\newcommand{\bx}{\mathbf{x}}
\newcommand{\by}{\mathbf{y}}
\newcommand{\bz}{\mathbf{z}}
\newcommand{\cH}{\mathcal{H}}
\newcommand{\sx}{\textup{x}}
\newcommand{\cE}{\mathcal{E}}
\newcommand{\bDelta}{{\mbox{$\triangle$}\hspace{-8.0pt}\scalebox{0.8}{$\triangle$}}}
\newcommand{\scriptbDelta}{\scalebox{0.8}{$\bDelta$}}
\title{Lieb--Thirring inequalities for wave functions vanishing on the diagonal set}
\author[S. Larson]{Simon LARSON}
\address{Department of Mathematics, California Institute of  Technology, Pasadena, CA 91125, USA}
\email{larson@caltech.edu}
\author[D. Lundholm]{Douglas LUNDHOLM}
\address{Department of Mathematics, Uppsala University,
Box 480, SE-751 06 Uppsala, Sweden}
\email{douglas.lundholm@math.uu.se}
\author[P. T. Nam]{Phan Th\`anh NAM}
\address{Department of Mathematics, LMU Munich, Theresienstrasse 39, 80333 Munich, Germany}
\email{nam@math.lmu.de}
\subjclass[2010]{81V70, 35R11, 46E35, 81Q10}
\begin{document}

\begin{abstract} We propose a general strategy to derive Lieb--Thirring inequalities for scale-covariant quantum many-body systems. As an application, we obtain a generalization of the Lieb--Thirring inequality to wave functions vanishing on the diagonal set of the configuration space, without any statistical assumption on the particles. 
\end{abstract}

\maketitle
\setcounter{tocdepth}{2}
\tableofcontents

\section{Introduction}

The celebrated Lieb--Thirring inequality states that the expected kinetic energy of a free Fermi gas is bounded from below by its semiclassical approximation up to a universal factor, namely
\begin{equation} \label{eq:LT}
	\biggl\langle \Psi_N,  \sum_{i=1}^N (-\Delta_{\bx_i})^s \Psi_N\biggr\rangle 
	\ge K \int_{\R^d} \varrho_{\Psi_N}(\bx)^{1+2s/d} d\bx.  
\end{equation}
Here $\Psi_N$ is an $N$-particle wave function in $L^2((\R^d)^N)$, 
normalized so that $\|\Psi_N\|_{L^2(\R^{dN})}=1$ and thus encoding in its squared amplitude 
a probability distribution
for particle positions $\sx = (\bx_1,\ldots,\bx_N)$, $\bx_j \in \R^d$,
with one-body density 
\begin{equation*} 
	\varrho_{\Psi_N}(\bx) := \sum_{j=1}^N \int_{\R^{d(N-1)}} |\Psi(\bx_1,\dots,\bx_{j-1},\bx,\bx_{j+1},\dots,\bx_N)|^2 \prod\limits_{i\ne j} d\bx_i,
\end{equation*} 
and, crucially, subject to the anti-symmetry 
\begin{equation} \label{eq:Pauli}
	\Psi_N(\bx_1,\ldots,\bx_i,\ldots,\bx_j,\ldots,\bx_N) = - \Psi_N(\bx_1,\ldots,\bx_j,\ldots,\bx_i,\ldots,\bx_N), \quad \forall i\ne j.
\end{equation}
This is Pauli's exclusion principle for fermions\footnote{
Here we ignore the spin of particles for simplicity (in our analysis the effect of the spin is mathematically trivial).}.
Replacing the minus sign in \eqref{eq:Pauli} by a plus sign defines bosonic particles,
while if the particles are non-identical, i.e.\ distinguishable, no exchange symmetry may be imposed.

The inequality \eqref{eq:LT} was first proved by Lieb and Thirring in 1975 for the case $s=1$ relevant to non-relativistic particles \cite{LieThi-75,LieThi-76}, and extended by Daubechies in 1983 to general $s>0$, 
thus also including the relativistic case $s=1/2$ \cite{Daubechies-83}. 
The constant $K=K(d,s)>0$ is independent of $N$ and $\Psi_N$ (see \cite{FraHunJexNam-18} for the best known value of $K$). 

The Lieb--Thirring inequality is a beautiful combination of the uncertainty and exclusion principles
of quantum mechanics, and has also been very actively studied in the mathematical literature
from the dual perspective of estimation of eigenvalues of one-body
Schr\"odinger operators (see e.g.\ \cite{LieSei-09,Laptev-12} for reviews). Historically, the Lieb--Thirring inequality was invented to give a short, elegant proof of the stability of ordinary non-relativistic matter with Coulomb forces \cite{LieThi-75}. 
In that context it is well known that stability of the \emph{first} kind, i.e.\ that the ground state energy of the Coulomb system is finite, follows easily from some sort of the uncertainty principle (e.g. Sobolev's inequality). On the other hand, the stability of the \emph{second} kind, that the ground state energy does not diverge faster than the number of particles, is much more subtle: for this the fermionic nature of particles is crucial. In fact, the stability of the second kind fails for bosonic 
(or distinguishable) charged systems \cite{Dyson-67}.  

Without the anti-symmetry condition \eqref{eq:Pauli}, the Lieb--Thirring inequality \eqref{eq:LT} fails and the best one can get is 
the Gagliardo-Nirenberg-Sobolev inequality 
\begin{equation} \label{eq:LT-boson}
	\biggl\langle \Psi_N, \sum_{i=1}^N (-\Delta_{\bx_i})^s \Psi_N\biggr\rangle \ge K N^{-2s/d} \int_{\R^d} \varrho_{\Psi_N}(\bx)^{1+2s/d} d\bx  
\end{equation}
(see e.g.~\cite{LunNamPor-16}). The emergence of the factor $N^{-2s/d}$ can be seen by considering the bosonic trial state $\Psi_N=u^{\otimes N}$ (whose density is $\varrho_{\Psi_N}(x)=N|u(x)|^2$).  This factor is small when $N$ becomes large, making \eqref{eq:LT-boson} not very useful in applications. 

Note that Pauli's exclusion principle \eqref{eq:Pauli} implies that the wave function $\Psi_N$ vanishes on the diagonal set
\begin{equation} \label{eq:diagonal}
	\bDelta := \bigl\{(\bx_1,\ldots,\bx_N) \in (\R^d)^{N}: \bx_i=\bx_j \text{ for some } i\neq j \bigr\},
\end{equation}
namely there is zero probability for two quantum particles to occupy a common single position in the configuration space. 
	
In this paper, we want to address the following 
	
\smallskip

\noindent
{\bf Question:} Does the Lieb--Thirring inequality \eqref{eq:LT} remain valid if the 
anti-symmetry assumption \eqref{eq:Pauli} is replaced by the weaker condition 
$\Psi_N\rvert_{\scriptbDelta}=0\,?$

\smallskip

We will show that the answer is {\bf yes} if and only if $2s>d$. 
In fact, $2s>d$ is the optimal condition for the vanishing assumption 
$\Psi_N\rvert_{\scriptbDelta}=0$ to be non-trivial (heuristically this follows from Sobolev's embedding 
$  H^{s}(\R^d)\subseteq C(\R^d)$ for $2s>d$). 
The precise statement of our result and its consequences will be presented in the next section.

\section{Main results}

Recall that for every $s>0$ (not necessarily an integer) the operator $(-\Delta)^s$ 
on $L^2(\R^d)$ is defined as the multiplication operator $|\bp|^{2s}$ 
in Fourier space, namely
$$
	\bigl[(-\Delta)^s f\bigr]^{\!\wedge\!}(\bp)= |\bp|^{2s} \widehat{f}(\bp), 
	\qquad \widehat{f}(\bp):= \frac{1}{(2\pi)^{d/2}}\int_{\R^d}f(\bx)e^{-i\bp\cdot\bx}\,d\bx.
$$
The associated space $H^s(\R^d)$ is a Hilbert space with norm
$$
	\|u\|_{H^s(\R^d)}^2 := \|u\|_{L^2(\R^d)}^2 + \|u\|_{\dot H^s(\R^d)}^2,
	\qquad
	\|u\|_{\dot H^s(\R^d)}^2 := \langle u, (-\Delta)^s u\rangle.
$$
The $N$-particle space $H^s(\R^{dN})$ is defined in the same way. 
Let us denote the subspace of functions vanishing on the diagonal set $\bDelta$ in \eqref{eq:diagonal} by 
$$ 
	\cH^{s,N}(\R^d):= \overline{\bigl\{ \Psi_N \in C_c^\infty(\R^{dN}): \Psi_N\rvert_{\scriptbDelta}=0 \bigr\} }^{\,H^s(\R^{dN})}.
$$

Our main result is 
\begin{theorem}[Lieb--Thirring inequality for wave functions vanishing on diagonals] \label{thm:main}
	Let $2s>d\ge 1$. Then for every $N \ge 1$ 
	and $\Psi_N\in\cH^{s,N}(\R^d)$, with $\|\Psi_N\|_{L^2(\R^{dN})}=1$, we have
	\begin{equation} \label{eq:LT-diagonal}
		\biggl\langle \Psi_N,  \sum_{i=1}^N (-\Delta_{\bx_i})^s\Psi_N \biggr\rangle
		\ge C\int_{\R^d} \varrho_{\Psi_N}(\bx)^{1+2s/d} \,d\bx.
	\end{equation}
	Here $C=C(d,s)>0$ is a universal constant independent of $N$ and $\Psi_N$. 
\end{theorem}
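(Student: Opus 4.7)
The plan is to follow the general localization/covering paradigm for Lieb--Thirring inequalities: partition $\R^d$ into disjoint cubes $Q$, in each cube prove a ``local'' lower bound on the kinetic energy, and then sum up and optimize against the density $\varrho_{\Psi_N}$. Two complementary local estimates should feed into this machine. The first is a \emph{local uncertainty principle} derived from the bosonic Gagliardo--Nirenberg--Sobolev inequality \eqref{eq:LT-boson} applied in $Q$: its local form controls $\int_Q \varrho^{1+2s/d}$ by the fractional kinetic energy in $Q\times(\R^d)^{N-1}$, but with a penalizing factor involving the local ``particle number'' $n_Q := \int_Q \varrho_{\Psi_N}$. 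The second is a \emph{local exclusion principle} extracted from the hypothesis $\Psi_N|_{\scriptbDelta}=0$, which should say, roughly, that the kinetic energy in $Q$ is bounded from below by $c\,\ell^{-2s}\cdot(n_Q-1)_+$ or $c\,\ell^{-2s}\cdot\mathbb{E}[n_Q(n_Q-1)]/N$, where $\ell$ is the side length of $Q$. Combining the two in each cube and choosing a (dyadic or Besicovitch-type) covering so that $n_Q\sim 1$ for every cube used, one recovers $\int\varrho^{1+2s/d}$ on the right-hand side with a constant independent of $N$.

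The decisive input is the local exclusion, where the condition $2s>d$ enters. Since the pair diagonal $\{\bx_i=\bx_j\}\subset (\R^d)^2$ has codimension $d$, the trace/restriction theorem for $H^s$ kicks in exactly when $2s>d$, and one obtains a two-body Hardy-type inequality of the form
\[
\int_{\R^{2d}}\!\bigl(|(-\Delta_{\bx_i})^{s/2}\psi|^2+|(-\Delta_{\bx_j})^{s/2}\psi|^2\bigr)\,d\bx_i\,d\bx_j \;\ge\; c_{d,s}\!\int_{\R^{2d}}\!\frac{|\psi|^2}{|\bx_i-\bx_j|^{2s}}\,d\bx_i\,d\bx_j
\]
for $\psi$ vanishing on $\{\bx_i=\bx_j\}$. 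This is most cleanly obtained by switching to center-of-mass and relative coordinates and using the pointwise Sobolev/Hardy inequality on $H^s(\R^d)$ (whose condition is exactly $2s>d$), combined with the elementary convexity bound $|\xi|^{2s}+|\eta|^{2s}\gtrsim|\xi-\eta|^{2s}$ on Fourier side. Localizing to a cube $Q$ of side $\ell$ and using $|\bx_i-\bx_j|\le \sqrt{d}\,\ell$ there converts the Hardy weight into the scale-covariant factor $\ell^{-2s}$, producing the desired local exclusion for the two-body density restricted to $Q\times Q$.

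Having the two local estimates, I would combine them: in cubes where the pair density is comparable to $n_Q^2$, the exclusion gives the linear-in-$n_Q$ bound needed to cancel the bosonic loss in \eqref{eq:LT-boson}; in cubes that are ``almost empty'' ($n_Q\lesssim 1$), the one-body GN-Sobolev bound already suffices. A covering by cubes with $n_Q$ of order one (chosen via a stopping-time/Calder\'on--Zygmund style argument on $\varrho_{\Psi_N}$) then summed over all cubes gives \eqref{eq:LT-diagonal}.

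The main obstacle is the passage from the pairwise Hardy inequality (which involves the two-body density) to a lower bound on the kinetic energy that is linear in the one-body density $\varrho_{\Psi_N}$ with a constant independent of $N$. Naively summing the pair Hardy inequality over all $\binom{N}{2}$ pairs produces an unwanted factor of $N-1$. Circumventing this requires a careful averaging: one should apply the exclusion \emph{locally} to the fraction of configurations that place two or more particles in the same cube, and exploit the relation between $\mathbb{E}[n_Q(n_Q-1)]$ and $n_Q=\mathbb{E}[n_Q]$ through the covering scale. Making this trade-off quantitative, so that the final constant depends only on $d$ and $s$, is the technical heart of the proof.
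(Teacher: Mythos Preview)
Your overall architecture---localization to cubes, local uncertainty from Gagliardo--Nirenberg--Sobolev, local exclusion, and a covering argument---matches the paper exactly. The genuine gap is in the local exclusion step.

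The two-body Hardy inequality you propose,
\[
\bigl\langle \psi, \bigl((-\Delta_{\bx_i})^s + (-\Delta_{\bx_j})^s\bigr)\psi\bigr\rangle \;\ge\; c_{d,s}\int_{\R^{2d}}\frac{|\psi|^2}{|\bx_i-\bx_j|^{2s}}\,d\bx_i\,d\bx_j,
\]
is \emph{false} on $\cH^{s,2}(\R^d)$ in general. The space $\cH^{s,N}$ is the closure of smooth functions with $\Psi|_{\scriptbDelta}=0$, which only enforces first-order vanishing; the Hardy weight $|\bx_i-\bx_j|^{-2s}$ requires vanishing of order roughly $s-d/2$. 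Concretely, for $d=1$, $s=2$ the function $\psi(x,y)=(x-y)e^{-x^2-y^2}$ lies in $\cH^{2,2}(\R)$ but $\int|\psi|^2/|x-y|^4=\infty$. More damagingly, even the weaker \emph{localized} two-body Poincar\'e (which is all you actually need for exclusion on a cube) fails: for $d=3$, $s=2$ the polynomial $u(\bx,\by)=x_1-y_1$ vanishes on the diagonal yet has $\|u\|_{\dot H^{s,2}(Q)}=0$ for the Neumann seminorm. So the two-body local exclusion is simply unavailable for some $(d,s)$ with $2s>d$, and the obstacle you flag (the $N$-counting from pairs to one-body) is not the real one.

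The paper resolves this by abandoning the pairwise route entirely. It proves a \emph{many-body} Poincar\'e inequality: for $N$ large enough (roughly $N\ge\lceil s\rceil^d\cdot 2$), the Neumann ground-state energy $E_N(Q)$ of $\cH^{s,N}$-functions on $Q^N$ is strictly positive. The proof shows that any minimizer would be a polynomial of degree $<s$ in each $\bx_j$ separately, and a combinatorial lemma says that such a polynomial vanishing on $\bDelta$ must be identically zero once $N$ is large enough. Then a scale-covariance bootstrap (splitting $Q$ into $2^d$ subcubes and inducting) upgrades the bare positivity $E_N(Q)>0$ for $N\ge q$ to the quantitative bound $E_N(Q)\ge C|Q|^{-2s/d}N^{1+2s/d}$ with $C$ independent of $N$. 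This yields the local exclusion $\cE_Q[\Psi_N]\ge C|Q|^{-2s/d}\bigl[\int_Q\varrho_{\Psi_N}-q\bigr]_+$ without ever invoking a two-body estimate.
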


We have some immediate remarks. 

1. The condition $2s>d$ in Theorem \ref{thm:main} is optimal. If $2s\le d$, then
$$\cH^{s,N}(\R^d)=H^s(\R^{dN})$$
by the relatively small size, i.e. the large codimensionality, of the diagonal set (see Appendix~\ref{app:subspaces}) 
and thus the Lieb--Thirring inequality fails. 

2. For $d=1$ and $s=1$, it is well known that a \emph{symmetric} wave function which vanishes on the diagonal set is equal to an anti-symmetric wave function up to multiplication 
by an appropriate sign function~\cite{Girardeau-60}, and hence \eqref{eq:LT-diagonal} 
reduces to the usual Lieb--Thirring inequality~\cite{LieThi-76} in this case. 
However, when $d>1$ this boson-fermion correspondence is no longer available and our result is new. 
Furthermore, one may consider \emph{hard-core} bosons
defined by the higher-order vanishing around diagonals
\begin{equation} \label{eq:cH0}
	\cH_0^{s,N}(\R^d):= \overline{\bigl\{ \Psi_N \in C_c^\infty(\R^{dN} \setminus \bDelta) \bigr\} }^{\, H^s(\R^{dN})},
\end{equation}
and subject to symmetry. For large enough order $2s>d$ there is even for $d=1$ a non-trivial difference 
between these spaces, and our result assumes only the weaker vanishing conditions imposed by 
$\cH^{s,N}(\R^d)$ (see Appendix~\ref{app:subspaces} for some further remarks).

3. Theorem \ref{thm:main} verifies a conjecture in \cite[page 1362]{LunNamPor-16} that the Lieb--Thirring inequality~\eqref{eq:LT-diagonal} holds for all wave functions in the form domain of the interaction potential 
$$W_s(\sx) := \sum_{1\le i<j \le N} |\bx_i-\bx_j|^{-2s}, \qquad \sx=(\bx_1,\ldots,\bx_N)\in (\R^d)^N.$$
In fact, we have
(again, see Appendix~\ref{app:subspaces} for details)
\begin{equation} \label{eq:interaction-form-domain}
	\biggl\{  \Psi_N \in H^s(\R^{dN}):  \int_{\R^{dN}} W_s(\sx) |\Psi_N(\sx)|^2 \,d\sx  < \infty \biggr\} 
	\subseteq \cH_0^{s,N}(\R^d)
	\subseteq \cH^{s,N}(\R^d),
\end{equation}
by the singular nature of the potential at the diagonals. We may think of the potential $W_s$ as defining (by Friedrichs extension)
a one-parameter family of non-negative and scale-covariant 
(scaling homogeneously to degree $-2s$)
interacting $N$-body Hamiltonian operators
$$
	H_\beta := \sum_{j=1}^N (-\Delta_{\bx_j})^s + \beta W_s,
	\qquad \beta \ge 0.
$$
(In the case $\beta=0$, then $H_\beta$ is still defined on the quadratic form domain of $W_s$.) One may then ask about the best constant $C(\beta)\ge 0$ in the bound
\begin{equation} \label{eq:Hbeta-LT}
	\langle \Psi_N, H_\beta \Psi_N \rangle \ge C(\beta)\int_{\R^d}\varrho_{\Psi_N}(\bx)^{1+2s/d}\,d\bx,
\end{equation}
i.e.\ a Lieb--Thirring inequality (generalized uncertainty principle) for $H_\beta$. The case $\beta>0$ was treated in \cite{LunPorSol-15,LunNamPor-16}, while
our setting here concerns the limit $\beta \to 0$ of zero-range/contact interaction.
A crucial difference is the strength of the interaction term, 
which is of order $\beta N^2$ and thus provides a large repulsive energy
for fixed $\beta > 0$, while for $\beta \ll 1/N$ it ought to be much weaker than the kinetic term. Nevertheless, for $2s > d$ the potential $W_s$ is singular enough to impose the vanishing condition 
at $\bDelta$, and Theorem~\ref{thm:main} thus implies that the limiting constant is positive $C(0)>0$. Furthermore, for $2s=d$ we have $C(0)=0$ which follows from a straightforward argument through contradiction using the density of $C_c^\infty(\R^{dN}\setminus\bDelta)$ in $H^s(\R^{dN})$; again see our first remark and Appendix~\ref{app:subspaces}. Together with the case $2s<d$ that was treated in~\cite{LunNamPor-16} and for which also $C(0)=0$, this settles the question concerning which $H_\beta$ satisfy the Lieb--Thirring inequality \eqref{eq:Hbeta-LT} with $C(\beta)>0$.

4. The original proof of the Lieb--Thirring inequality \cite{LieThi-75,LieThi-76} is based on the following operator bound  
\begin{equation} \label{eq:Pauli-ope}
0\le \gamma_{\Psi_N}^{(1)}\le \1
\end{equation}
which is a consequence of Pauli's exclusion principle \eqref{eq:Pauli}. 
Here $\gamma_{\Psi_N}^{(1)}$ is the one-body density matrix of $\Psi_N$, 
a trace-class operator on $L^2(\R^d)$ with kernel
$$
	\gamma_{\Psi_N}^{(1)}(\bx;\bx') 
	= \sum_{j=1}^N \int_{\R^{d(N-1)}} \Psi_N(\bx_1,\ldots,\bx_j=\bx,\ldots,\bx_N) \overline{\Psi_N(\bx_1,\ldots,\bx_j=\bx',\ldots,\bx_N)} \prod_{k\neq j} d\bx_k.
$$
However, unlike the full anti-symmetry condition \eqref{eq:Pauli}, 
the vanishing condition $\Psi_N\rvert_{\scriptbDelta}=0$ alone is not known to be 
sufficient to ensure the operator inequality \eqref{eq:Pauli-ope}, 
and therefore the original proof in \cite{LieThi-75,LieThi-76} 
as well as subsequent proofs based on \eqref{eq:Pauli-ope} (e.g.\ Rumin's method~\cite{Rumin-11}) do not apply. 

\medskip

Our result is in fact more general than as previously formulated. More precisely, define 
for any $k \ge 2$ the diagonal set of $k$-particle coincidences
\begin{equation} \label{eq:k-diagonal}
	\bDelta_k := \bigl\{(\bx_1,\ldots,\bx_N) \in (\R^d)^{N}: \bx_{j_1}= \ldots = \bx_{j_k} \text{ for distinct indices } j_1,\ldots,j_k \bigr\},
\end{equation}
and the corresponding space of $N$-particle wave functions with a vanishing condition on $\bDelta_k$
$$
	\cH^{s,N}_k(\R^d) := \overline{\bigl\{ \Psi_N \in C_c^\infty(\R^{dN}): \Psi_N\rvert_{\scriptbDelta_k}=0 \bigr\} }^{\, H^s(\R^{dN})}.
$$
We have 
\begin{theorem}[Lieb--Thirring inequality for wave functions vanishing on $k$-diagonals] \label{thm:main-general} 
	Let $d\ge 1$, $k \ge 2$ and $2s>d(k-1)$. Then for every $N \ge 1$ and every 
	$\Psi_N \in\cH^{s,N}_k(\R^d)$, with $\|\Psi_N\|_{L^2(\R^{dN})}=1$, we have
	\begin{equation} \label{eq:LT-diagonal-general}
		\biggl\langle \Psi_N,  \sum_{i=1}^N (-\Delta_{\bx_i})^s\Psi_N \biggr\rangle
		\ge C\int_{\R^d} \varrho_{\Psi_N}(\bx)^{1+2s/d} \,d\bx.
	\end{equation}
	Here $C=C(d,s,k)>0$ is a universal constant independent of $N$ and $\Psi_N$. 
\end{theorem}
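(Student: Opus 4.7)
I would follow the general \emph{scale-covariant} strategy promised in the introduction (and, by construction, used for Theorem~\ref{thm:main}), adapted to accommodate $k$-diagonal vanishing in place of pairwise vanishing. The plan has two ingredients: a local uncertainty principle on a cube, and a covering/box-selection argument in the spirit of Lundholm--Solovej~\cite{LunPorSol-15}.

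\textbf{Step 1 (Local uncertainty principle).} First I would fix a cube $Q\subset\R^d$ of side length $\ell$ and establish a fractional Poincar\'e-type inequality: for every $n\ge k$ and every $\Phi\in H^s(Q^n)$ vanishing on $\bDelta_k\cap Q^n$,
\begin{equation*}
	\sum_{i=1}^n \langle \Phi,(-\Delta_{\bx_i})^s\Phi\rangle_{L^2(Q^n)} \ge c_k\,\ell^{-2s}\, \|\Phi\|_{L^2(Q^n)}^2,
\end{equation*}
with $c_k=c_k(d,s,k)>0$ independent of $n$ and $\ell$. The hypothesis $2s>d(k-1)$ is precisely the threshold for the restriction map $H^s\to L^2(\bDelta_k)$ to be bounded (the set $\bDelta_k$ has codimension $d(k-1)$), which is what makes the vanishing condition meaningful. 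By scaling, one reduces to $\ell=1$, and the Poincar\'e bound then follows from a standard compactness/contradiction argument that converts the bounded trace into a spectral gap.

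\textbf{Step 2 (Covering and summation).} Next I would apply the local inequality via a box-selection/stopping-time decomposition of $\R^d$ adapted to the one-body density $\varrho=\varrho_{\Psi_N}$, producing cubes $\{Q_\alpha\}$ in which $\int_{Q_\alpha}\varrho\,d\bx$ is of order $k$ and $\ell_\alpha\sim\varrho(Q_\alpha)^{-1/d}$, so that $\ell_\alpha^{-2s}\cdot k\asymp\int_{Q_\alpha}\varrho^{1+2s/d}d\bx$. After splitting the non-local kinetic form $\sum_i(-\Delta_{\bx_i})^s$ among the cubes via a fractional IMS-type identity (with a non-local remainder to be absorbed), summing Step 1 over all ``non-light'' cubes would yield
\[
	\biggl\langle \Psi_N,\sum_{i=1}^N (-\Delta_{\bx_i})^s\Psi_N\biggr\rangle
	\ge C\sum_\alpha \int_{Q_\alpha}\varrho^{1+2s/d}\,d\bx
	= C\int_{\R^d}\varrho^{1+2s/d}\,d\bx,
\]
which is \eqref{eq:LT-diagonal-general}.

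\textbf{Main obstacle.} The hard part will be Step 1, specifically obtaining a Poincar\'e constant $c_k$ that is \emph{uniform in the particle number} $n$; a naive Sobolev embedding produces a constant that deteriorates as $n\to\infty$, which would defeat the $N$-independence demanded by the theorem. I expect the way out is to exploit the local character of the constraint (only $k$ out of $n$ coordinates must coincide) and reduce, by a symmetrization/tensorization argument, to the genuine $k$-body problem on the unit cube, where standard fractional Sobolev theory supplies a finite constant depending only on $d,s,k$. A secondary difficulty is the non-locality of $(-\Delta)^s$ in the gluing step; this should be handled by now-standard fractional IMS-type estimates, yielding a controllable off-diagonal error.
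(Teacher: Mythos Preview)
Your overall architecture (local exclusion on cubes + covering) matches the paper's, but the heart of your plan --- Step~1 and your proposed resolution of its ``main obstacle'' --- contains a genuine gap.

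You propose to obtain a Poincar\'e constant uniform in $n$ by \emph{reducing to the $k$-body problem on the unit cube}. This fails: the $k$-body Poincar\'e inequality is simply false in the regime of interest. For instance, with $d=3$, $s=2$, $k=2$ the polynomial $u(\bx_1,\bx_2)=x_{1,1}-x_{2,1}$ vanishes on $\bDelta_2$ yet has $\|u\|_{\dot H^{2,2}(Q)}=0$ (all second derivatives vanish). More generally, whenever $s$ is an integer with $d(k-1)<2s<(d+k)(k-1)$ one can build such counterexamples (see Proposition~\ref{prop:counterex}). So no tensorization or symmetrization argument can reduce the $n$-body bound to a nonexistent $k$-body bound; the constant $c_k$ you seek for $n=k$ is zero.

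The paper's way around this is the opposite of what you suggest. One first proves the Poincar\'e inequality only for \emph{sufficiently large} $n$ (concretely $n\ge\lceil s\rceil^d k$), where a polynomial-rigidity argument (Lemma~\ref{lem:Polynomial_vanishing}) kicks in: a minimizer with vanishing seminorm would have to be a polynomial of degree $<s$ in each variable, and such a polynomial vanishing on $\bDelta_k$ must be identically zero once there are enough variables. The uniformity in $n$ is then manufactured \emph{a posteriori} by a scale-covariant bootstrap (Lemma~\ref{lem:energy}): splitting a cube into $2^d$ subcubes and using superadditivity plus scaling upgrades mere positivity $E_n(Q)>0$ for $n\ge q$ into the quantitative bound $E_n(Q)\ge C|Q|^{-2s/d}n^{1+2s/d}$ with $C$ independent of $n$. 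This bootstrap is the missing idea in your plan.

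Two smaller points. First, your Step~2 conflates local exclusion with local uncertainty: the $\int_Q\varrho^{1+2s/d}$ term does not come from the Poincar\'e/exclusion bound alone but from a separate Gagliardo--Nirenberg-type estimate (Lemma~\ref{lem:local-uncertainty}) that holds for \emph{all} wave functions; the exclusion bound is only used to cancel its negative error term via the covering lemma. Second, your worry about fractional IMS localization is unnecessary: the Neumann-type restriction $(-\Delta)^s_{|\Omega}$ (defined through the localized Sobolev seminorm) already satisfies $(-\Delta)^s\ge\sum_\Omega(-\Delta)^s_{|\Omega}$ as quadratic forms, with no off-diagonal remainder to absorb.
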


\smallskip

The proof of Theorem~\ref{thm:main-general} occupies the rest of the paper. Our proof is based on a general strategy of deriving Lieb--Thirring inequalities for wave functions satisfying some partial exclusion properties, which was proposed by Lundholm and Solovej in \cite{LunSol-13a} and developed further in \cite{FraSei-12,LunSol-13b,LunSol-14,LunPorSol-15,LunNamPor-16,LarLun-18,Nam-18,LunSei-18,Lundholm-18}. 
We will quickly review this strategy in Section~\ref{sec:general-strategy} for the reader's convenience, following the simplification by Lundholm, Nam and Portmann~\cite{LunNamPor-16}. 

The main new ingredient is a local version of the exclusion principle using the vanishing condition on the diagonal set. In Section \ref{sec:reduction}, we will discuss a very useful reduction of the desired local exclusion to 
simply the positivity of a local energy using 
the scale-covariance of the kinetic operator $(-\Delta)^s$. This step refines and generalizes a recent bootstrap argument for the energy of ideal anyons by Lundholm and Seiringer \cite{LunSei-18}. In Section \ref{sec:Poincare}, the remaining crucial fact that the local energy eventually becomes positive with increasing particle number will be settled by means of a new many-particle Poincar\'e inequality.  Some standard and non-standard results on relevant function spaces are collected in the appendices for completeness. 

We stress that our method will also work for any other deformations of the Laplacian which retain similar
positivity and scale-covariance properties, including other types of point interactions 
as well as particles subject to intermediate statistics (ideal anyons) in one and two dimensions.

\medskip

\noindent\textbf{Acknowledgments.} S.L. and D.L. thank John Andersson for helpful discussions.  S.L. acknowledges financial support from the Swedish Research Council grant no.~2012-3864.
D.L. acknowledges financial support by the grant no.~1804 from the G\"oran Gustafsson Foundation
and the Swedish Research Council grant no. 2013-4734.  Part of this work was carried out during the Conference ``Eigenvalues and Inequalities" at the Institut Mittag-Leffler, Stockholm, May 2018.

\section{General strategy of deriving Lieb--Thirring inequalities} \label{sec:general-strategy}

In the following we will summarize the proof of the usual Lieb--Thirring inequality~\eqref{eq:LT} 
for fermionic wave functions, mainly following the simplified representation in  \cite{LunNamPor-16}. The starting point is the following obvious localization formula: if $\{\Omega\}$ is a collection of disjoint subsets of $\R^d$, then 
\begin{equation} \label{eq:operator-localization}
(-\Delta)^s_{|\R^d} \ge \sum_{\Omega} (-\Delta)^s_{|\Omega},
\end{equation}
where the Neumann localization $(-\Delta)^s_{|\Omega}$ is defined via the quadratic form 
(Sobolev seminorm)
$$\langle u, (-\Delta)^s_{|\Omega} u\rangle= \|u\|_{\dot{H}^s(\Omega)}^2 	:= \!\begin{cases}
	\sum\limits_{|\alpha|=m} \dfrac{m!}{\alpha!} \int_{\Omega} |D^\alpha u|^2 & \text{if~}s=m, \\[0.2cm]  
	c_{d,\sigma}\sum\limits_{|\alpha|=m} \dfrac{m!}{\alpha!} \iint_{\Omega\times \Omega}\! \dfrac{|D^\alpha u (\bx) 
	-D^\alpha u (\by)|^2}{|\bx-\by|^{d+2\sigma}}d\bx d\by & \text{if~} s=m+\sigma, \\
	\end{cases}
$$
for all $u\in H^s(\R^d)$, with $m \in \N_0$, $\alpha \in \N_0^d$ multi-indices, 
$D^\alpha$ corresponding derivatives,
and 
$$
 0<\sigma<1, \quad c_{d,\sigma} :=\frac{2^{2\sigma-1}}{\pi^{d/2}} \frac{\Gamma((d+2\sigma)/2)}{|\Gamma(-\sigma)|}.
$$
Consequently, for any $N$-body wave function $\Psi_N \in H^s(\R^{dN})$ we have
\begin{equation} \label{eq:localization}
	\cE_{\R^d}[\Psi_N] \ge \sum_\Omega {\mathcal E}_\Omega[\Psi_N],
\end{equation}
where the expected local energy on $\Omega$ is
\begin{equation} \label{eq:def-EQ}
	\cE_\Omega[\Psi_N] := \biggl\langle \Psi_N, \sum_{j=1}^N (-\Delta_{\bx_j})^s_{|\Omega} \Psi_N \biggr\rangle = \sum_{j=1}^N \int_{\R^{d(N-1)} } \| \Psi_N\|^2_{\dot H^s_{\bx_j}(\Omega)} \prod_{\ell \ne j} d \bx_{\ell} 
\end{equation}

Next, we have the following three key tools \cite[Lemmas 8, 11, 12]{LunNamPor-16}.

\begin{lemma}[Local uncertainty]\label{lem:local-uncertainty}
	Let $d\ge 1$ and $s>0$. Let $\Psi_N$ be a wave function in $H^s(\R^{dN})$ 
	for arbitrary $N \ge 1$
	and let $Q$ be an arbitrary cube in $\R^d$. Then 
	\begin{equation} \label{eq:local-uncertainty}
		\cE_Q[\Psi_N] \ge \frac{1}{C}\frac{\int_Q \varrho_{\Psi_N}^{1+2s/d}}{\Bigl(\int_Q \varrho_{\Psi_N} \Bigr)^{2s/d}} 
		- \frac{C}{|Q|^{2s/d}} \int_Q \varrho_{\Psi_N}. 
\end{equation}
\end{lemma}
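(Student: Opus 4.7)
My plan is to reduce the many-body statement to a one-body local Gagliardo--Nirenberg--Sobolev inequality on $Q$ via a spectral decomposition of the one-body density matrix. The bound makes no symmetry assumption on $\Psi_N$, and is in effect a ``bosonic'' local uncertainty: the denominator $(\int_Q \varrho_{\Psi_N})^{2s/d}$ is exactly the factor of locally present particles, measuring the gap to a fermionic local Lieb--Thirring estimate. Write the spectral decomposition of the trace-class operator $\gamma^{(1)}_{\Psi_N} = \sum_k \lambda_k |\phi_k\rangle\langle\phi_k|$ with $\lambda_k \ge 0$, $\sum_k \lambda_k = N$, and $\{\phi_k\}$ $L^2(\R^d)$-orthonormal. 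Then
$$\cE_Q[\Psi_N] = \sum_k \lambda_k \|\phi_k\|_{\dot H^s(Q)}^2 \qquad \text{and} \qquad \varrho_{\Psi_N}(\bx) = \sum_k \lambda_k |\phi_k(\bx)|^2,$$
which decouples the many-body problem into a family of one-body problems.

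The analytic engine I will establish is the local Gagliardo--Nirenberg--Sobolev inequality: with $p := 2 + 4s/d$,
$$\|u\|_{L^p(Q)}^2 \le C\bigl(\|u\|_{\dot H^s(Q)}^2 + |Q|^{-2s/d}\|u\|_{L^2(Q)}^2\bigr)^{d/(d+2s)}\|u\|_{L^2(Q)}^{4s/(d+2s)}$$
for every $u \in H^s(\R^d)$. With this, the argument assembles in two short steps. First, since $p/2 = 1 + 2s/d \ge 1$, Minkowski's inequality in $L^{p/2}(Q)$ applied to $\varrho_{\Psi_N} = \sum_k \lambda_k |\phi_k|^2$ yields $\|\varrho_{\Psi_N}\|_{L^{p/2}(Q)} \le \sum_k \lambda_k \|\phi_k\|_{L^p(Q)}^2$. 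Second, substituting the one-body bound on each $\|\phi_k\|_{L^p(Q)}^2$ and applying H\"older in the spectral index $k$ with conjugate exponents $(d+2s)/d$ and $(d+2s)/(2s)$, the weights $\lambda_k$ distribute so that the two resulting factors become $\cE_Q[\Psi_N] + |Q|^{-2s/d}\int_Q \varrho_{\Psi_N}$ and $\int_Q \varrho_{\Psi_N}$ respectively, using $\sum_k \lambda_k \|\phi_k\|_{\dot H^s(Q)}^2 = \cE_Q[\Psi_N]$ and $\sum_k \lambda_k \|\phi_k\|_{L^2(Q)}^2 = \int_Q \varrho_{\Psi_N}$. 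Raising to the power $p/2$ and rearranging recovers \eqref{eq:local-uncertainty}.

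The main obstacle is the local Gagliardo--Nirenberg--Sobolev inequality itself, which must hold for all $s>0$ including non-integer and large $s$. For integer $s$ it is classical, following from a bounded Sobolev extension $H^s(Q) \to H^s(\R^d)$ combined with global Gagliardo--Nirenberg on $\R^d$. For fractional $s$ the intrinsic seminorm $\|u\|_{\dot H^s(Q)}$ is the Gagliardo seminorm restricted to $Q\times Q$ rather than the restriction of the full $\R^d\times\R^d$ seminorm, so the extension step is more delicate. The strategy I would follow is to rescale $Q$ to the unit cube, invoke a standard bounded extension on Lipschitz domains, and carefully track how the $\dot H^s$ seminorm and the $|Q|^{-s/d}$-weighted $L^2$ norm transform under the rescaling in order to secure a constant $C$ independent of $|Q|$. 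This is technical but not conceptually novel, and completes the scheme.
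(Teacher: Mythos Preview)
The paper does not include a proof of this lemma; it is quoted as one of the key tools from \cite{LunNamPor-16} (Lemma~8 there). Your argument is correct and follows the standard route, which is essentially the one given in that reference: spectrally decompose the one-body density matrix $\gamma^{(1)}_{\Psi_N}$, apply a one-body local Gagliardo--Nirenberg--Sobolev inequality on $Q$ to each eigenfunction, and reassemble via Minkowski's inequality in $L^{p/2}(Q)$ followed by H\"older in the spectral index. The identification $\cE_Q[\Psi_N]=\mathrm{Tr}\bigl[(-\Delta)^s_{|Q}\,\gamma^{(1)}_{\Psi_N}\bigr]=\sum_k\lambda_k\|\phi_k\|_{\dot H^s(Q)}^2$ holds without any symmetry assumption on $\Psi_N$, as you note, and the eigenfunctions $\phi_k$ with $\lambda_k>0$ lie in $H^s(\R^d)$ because $\Psi_N\in H^s(\R^{dN})$ gives $\sum_k\lambda_k\|\phi_k\|_{\dot H^s(\R^d)}^2<\infty$.

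The only technical point requiring care in your sketch is that the extension operator $H^s(Q_0)\to H^s(\R^d)$ used to derive the local GNS inequality must be \emph{simultaneously} bounded on $L^2$, so that both factors in the global GNS bound $\|Eu\|_{L^p(\R^d)}\le C\|Eu\|_{\dot H^s(\R^d)}^{d/(d+2s)}\|Eu\|_{L^2(\R^d)}^{2s/(d+2s)}$ are controlled by the intrinsic $Q_0$-norms. This is satisfied by the reflection-based extension on cubes described in Appendix~\ref{app:norms} of the present paper (Lemma~\ref{lem:Extension_on_cubes}), or by the Stein extension on Lipschitz domains, so there is no gap.
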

Hereafter, 	$C=C(d,s)>0$ denotes a universal constant (independent of $N$, $\Psi_N$ and $Q$).

Lemma \ref{lem:local-uncertainty} can be interpreted as a local version of the  lower bound \eqref{eq:LT-boson} (the negative term appears due to the lack of Dirichlet boundary condition).

\begin{lemma}[Local exclusion for fermions] \label{lem:local-exclusion-fermions} 
	Let $d\ge 1$ and $s>0$. Let $\Psi_N$ be a fermionic wave function in $H^s(\R^{dN})$ satisfying \eqref{eq:Pauli} for $N \ge 2$ and let $Q$ be an arbitrary cube in $\R^d$. Then 
\begin{equation} \label{eq:local-exclusion}
		\cE_Q[\Psi_N] \ge  C |Q|^{-2s/d}  \biggl[ \int_Q \varrho_{\Psi_N}(\bx)\,d\bx - q \biggr]_+,
\end{equation}
	where
	$q := \# \{ \textup{multi-indices}\ \alpha \in \N_0^d : 0 \le |\alpha|<s \}$.
\end{lemma}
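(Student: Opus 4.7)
The plan is to reduce the many-body local energy to a one-body spectral problem via the reduced density matrix, then use Pauli's principle in its operator form $0 \le \gamma^{(1)}_{\Psi_N} \le \1$ together with the low-lying spectrum of the Neumann fractional Laplacian on $Q$.

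First I would rewrite the local energy as a trace:
$$
\cE_Q[\Psi_N] = \operatorname{tr}_{L^2(Q)}\!\bigl( (-\Delta)^s_{|Q}\, \gamma^{(1)}_{\Psi_N,Q}\bigr),
$$
where $\gamma^{(1)}_{\Psi_N,Q}$ is the operator on $L^2(Q)$ with integral kernel $\gamma^{(1)}_{\Psi_N}(\bx;\bx')$ restricted to $Q\times Q$. This identification follows, eigenfunction by eigenfunction of $(-\Delta)^s_{|Q}$, directly from the definitions of the seminorm $\|\cdot\|_{\dot H^s_{\bx_j}(Q)}$ and of $\gamma^{(1)}_{\Psi_N}$. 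Note that $\operatorname{tr}_{L^2(Q)}\gamma^{(1)}_{\Psi_N,Q} = \int_Q \varrho_{\Psi_N}$. From \eqref{eq:Pauli-ope} applied to test functions $u\in L^2(Q)$ extended by zero to $\R^d$, we obtain $0 \le \gamma^{(1)}_{\Psi_N,Q} \le \1$ as operators on $L^2(Q)$.

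Next, I would analyze the spectrum of $(-\Delta)^s_{|Q}$ on $L^2(Q)$. By the explicit form of the seminorm, the null space consists exactly of polynomials $\bx^\alpha$ with $|\alpha| < s$ (for integer $s=m$, vanishing of $D^\alpha u$ for $|\alpha|=m$; for $s=m+\sigma$, constancy of $D^\alpha u$ for $|\alpha|=m$). Hence the null space has dimension $q$. Since the quadratic form is closed and has compact embedding into $L^2(Q)$, the operator has discrete spectrum, and the $(q+1)$-th eigenvalue $\lambda_{q+1}(Q)$ is strictly positive. A straightforward scaling argument then yields
$$
\lambda_{q+1}(Q) = c_{d,s} \,|Q|^{-2s/d}, \qquad c_{d,s} := \lambda_{q+1}(Q_1) > 0,
$$
where $Q_1$ is the unit cube. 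Letting $P_q$ denote the spectral projection onto the zero eigenspace, the spectral theorem gives the operator inequality
$$
(-\Delta)^s_{|Q} \ge \lambda_{q+1}(Q)\,(\1 - P_q).
$$

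Finally, I would combine these ingredients. Testing the operator inequality against $\gamma^{(1)}_{\Psi_N,Q}$ and using $0\le \gamma^{(1)}_{\Psi_N,Q}\le \1$,
$$
\cE_Q[\Psi_N] \ge \lambda_{q+1}(Q) \Bigl[ \operatorname{tr} \gamma^{(1)}_{\Psi_N,Q} - \operatorname{tr}\bigl(P_q \gamma^{(1)}_{\Psi_N,Q}\bigr) \Bigr] \ge \lambda_{q+1}(Q)\Bigl[\int_Q \varrho_{\Psi_N} - q\Bigr],
$$
using $\operatorname{tr}(P_q\gamma^{(1)}_{\Psi_N,Q}) = \operatorname{tr}(P_q \gamma^{(1)}_{\Psi_N,Q} P_q) \le \operatorname{tr} P_q = q$. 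Combining with the trivial bound $\cE_Q[\Psi_N]\ge 0$ yields the positive-part estimate \eqref{eq:local-exclusion}. The main obstacle is the identification of the zero eigenspace of $(-\Delta)^s_{|Q}$ with polynomials of degree $<s$ in the fractional regime, together with verifying that it is exactly $q$-dimensional and that $\lambda_{q+1}(Q)$ is controlled purely by scaling; everything else is routine once the trace/scaling machinery is in place.
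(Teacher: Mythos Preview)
Your approach is correct and is precisely the standard argument for this lemma. Note, however, that the present paper does not actually prove Lemma~\ref{lem:local-exclusion-fermions}; it is quoted as one of the ``key tools'' from \cite[Lemma~11]{LunNamPor-16}, where the proof follows exactly the route you outline: the trace identity $\cE_Q[\Psi_N]=\operatorname{tr}\bigl((-\Delta)^s_{|Q}\,\gamma^{(1)}_{\Psi_N,Q}\bigr)$, the Pauli bound $0\le\gamma^{(1)}_{\Psi_N,Q}\le\1$ on $L^2(Q)$ (via extension by zero), identification of $\ker(-\Delta)^s_{|Q}$ with the $q$-dimensional space of polynomials of degree strictly less than $s$, and the spectral gap inequality $(-\Delta)^s_{|Q}\ge\lambda_{q+1}(Q)(\1-P_q)$ with $\lambda_{q+1}(Q)=c_{d,s}|Q|^{-2s/d}$ by scaling.

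Your handling of the ``main obstacle'' is fine: for integer $s=m$ the kernel is $\{u:D^\alpha u=0\ \forall|\alpha|=m\}$, i.e.\ polynomials of degree $\le m-1$; for $s=m+\sigma$ with $0<\sigma<1$ the kernel is $\{u:D^\alpha u\ \text{constant}\ \forall|\alpha|=m\}$, i.e.\ polynomials of degree $\le m$; in both cases this is exactly $\{\alpha:|\alpha|<s\}$, of dimension $q$. Compactness of the form-domain embedding into $L^2(Q)$ (hence discreteness of the spectrum and $\lambda_{q+1}>0$) is standard Rellich--Kondrachov for $H^s(Q)$. One small stylistic point: the null space is \emph{spanned} by the monomials $\bx^\alpha$, $|\alpha|<s$, rather than ``consisting'' of them, but your dimension count and conclusion are unaffected.
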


In the non-relativistic case $s=1$,
Lemma~\ref{lem:local-exclusion-fermions} simply states that as soon as there is
more than one particle on $Q$ the energy must be strictly positive,
and furthermore that it grows at least linearly with the number of particles.
Such a weak formulation of the exclusion principle was used by Dyson and Lenard in their first proof of the stability of matter~\cite{DysLen-67}, while its general applicability in the above format was noted by Lundholm and Solovej in~\cite{LunSol-13a,LunSol-13b}.

\begin{lemma}[Covering lemma] \label{lem:covering} Let $0\le f\in L^1(\R^d)$ be a function with compact support such that $\int_{\R^d} f \ge \Lambda>0$. Then the support of $f$ can be covered by a collection of disjoint cubes $\{Q\}$ in $\R^d$ such that 
\begin{equation} \label{eq:covering-0} \int_{Q} f \le \Lambda, \quad \forall Q
 \end{equation}
and
\begin{equation} \label{eq:covering}
			\sum_{Q} \frac{1}{|Q|^{\alpha}} \Biggl( \biggl[\int_{Q} f - q \biggr]_+ 
				- b \int_{Q} f \Biggr) \ge 0
		\end{equation}
		for all $\alpha>0$ and $0\le q< \Lambda 2^{-d}$, where
		$$
			b:= \biggl( 1- \frac{2^dq}{\Lambda} \biggr) \frac{2^{d\alpha}-1}{2^{d\alpha} + 2^d - 2}>0. 
		$$
\end{lemma}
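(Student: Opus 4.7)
The plan is to realize the covering by a top-down dyadic stopping-time decomposition and then verify \eqref{eq:covering} by an amortized induction over the resulting tree. First I would fix a large dyadic cube $Q_0$ containing $\operatorname{supp}(f)$ and process subcubes recursively: declare $Q$ a leaf when $\int_Q f \le \Lambda$, and otherwise partition $Q$ into its $2^d$ equal dyadic children. Since $f \in L^1$ has compact support, the Lebesgue differentiation theorem guarantees the process terminates almost everywhere, producing a countable disjoint family $\mathcal{F}$ of dyadic cubes that covers $\operatorname{supp}(f)$ up to a Lebesgue-null set (which contributes nothing to either side). Condition \eqref{eq:covering-0} is then immediate from the stopping rule, so the work is entirely in \eqref{eq:covering}.

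For \eqref{eq:covering}, let $g(Q) := |Q|^{-\alpha}\bigl([\int_Q f - q]_+ - b\int_Q f\bigr)$. The goal is to show $\sum_{Q \in \mathcal{F}} g(Q) \ge 0$ by induction over the stopping-time tree, propagated from leaves up to the root $Q_0$, for which $\int_{Q_0} f \ge \Lambda$ and the given $b$ satisfies $b < 1 - q/\Lambda$ (clear from the formula since $(2^{d\alpha}-1)/(2^{d\alpha}+2^d-2) \le 1$). The two basic ingredients at each subdivision $\tilde Q \to (Q_1,\ldots,Q_{2^d})$ are the scaling $|Q_i|^{-\alpha} = 2^{d\alpha}|\tilde Q|^{-\alpha}$ and the lower bound
\begin{equation*}
\sum_{i=1}^{2^d} \Bigl[\int_{Q_i} f - q\Bigr]_+ \;\ge\; \int_{\tilde Q} f - 2^d q,
\end{equation*}
valid because the stopping rule forces $\int_{\tilde Q} f > \Lambda > 2^d q$. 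The surplus $\int_{\tilde Q} f - \Lambda > 0$ available at every internal node will serve as credit used to offset the negative contributions from ``bad'' leaf children (those with $\int_Q f < q$).

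The main obstacle will be carrying out this inductive bookkeeping with the \emph{specific} constant $b$ in the statement. A naive per-step telescoping $\sum_i g(Q_i) \ge g(\tilde Q)$ would require $(1-b)\Lambda \ge q (2^{d\alpha+d}-1)/(2^{d\alpha}-1)$, which is strictly more restrictive than the hypothesis $q < \Lambda 2^{-d}$ once $\alpha > 1$. The induction therefore cannot close at each subdivision in isolation; the geometric growth of the weights $2^{jd\alpha}$ down a branch must instead be balanced against the geometric series of debits contributed by bad leaves along that branch. Matching the resulting amortized inequality against the given value of $b$ is the key algebraic step: the denominator $2^{d\alpha}+2^d-2 = (2^{d\alpha}-1)+(2^d-1)$ arises as exactly the combination that, in the deep-chain limit, equates the accumulated credit proportional to $(2^{d\alpha}-1)\Lambda$ with the accumulated debit proportional to $(2^d-1)(\Lambda - 2^d q)$, recovering the sharp constant stated.
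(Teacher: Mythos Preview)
The paper itself does not prove this lemma; it is quoted as one of the three key tools imported from~\cite{LunNamPor-16} (there Lemma~12), so there is no in-paper argument to compare against. I will therefore assess your plan on its own terms.

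Your construction is the right one: the dyadic stopping-time decomposition with threshold~$\Lambda$ gives a disjoint covering of $\operatorname{supp} f$ (up to a null set) by cubes satisfying~\eqref{eq:covering-0}, exactly as you describe. Your diagnosis of the obstacle is also correct: demanding the one-step inequality $\sum_i g(Q_i)\ge g(\tilde Q)$ at every internal node $\tilde Q$ forces the stronger condition $(1-b)\Lambda \ge q(2^{d\alpha+d}-1)/(2^{d\alpha}-1)$, which is strictly more than the hypothesis $q<\Lambda 2^{-d}$ allows for the specific $b$ in the statement. So a purely local, step-by-step induction cannot close.

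Where the proposal falls short is precisely at this point. You announce an ``amortized induction'' but do not specify the invariant being carried along the tree, and the closing paragraph is a heuristic for where the denominator $2^{d\alpha}+2^d-2$ should come from rather than an argument that any concrete bookkeeping scheme produces it. This is the whole difficulty: many natural amortizations (for instance, telescoping against $g$ evaluated at internal nodes, or bounding the weighted count $\sum_P |P|^{-\alpha}$ of internal nodes via $m_P>\Lambda$) yield a strictly smaller admissible $b$ than the one stated. The worst configurations are not single deep chains but mixtures of deep branches and broad fan-outs, so the ``deep-chain limit'' intuition does not by itself pin down the sharp constant. Until you write down an explicit potential $\Phi(P)$ on internal nodes (or an explicit combination of the global identities relating $\sum_{\text{leaves}}|Q|^{-\alpha}m_Q$, $\sum_{\text{internal}}|P|^{-\alpha}m_P$, and the weighted node counts) and verify algebraically that it closes with exactly this $b$, the proof is incomplete at its central step.
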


\noindent 
{\bf  Conclusion of \eqref{eq:LT}.}  Let $q$ be as in Lemma \ref{lem:local-exclusion-fermions} and let $\Lambda=2^d q+1$. 
If $N\le \Lambda$, then~\eqref{eq:LT} follows immediately from \eqref{eq:LT-boson}, whose proof is similar to (indeed simpler than) that of Lemma \ref{lem:local-uncertainty}. 
If $N> \Lambda$, then we can apply Lemma \ref{lem:covering} with $f=\varrho_{\Psi_N}$
(by standard approximation we may reduce to compact support), 
$\alpha=2s/d$, and obtain a collection of disjoint cubes $\{Q\}$. Combining with \eqref{eq:localization}, \eqref{eq:local-uncertainty} and \eqref{eq:local-exclusion} we obtain
\begin{align*}
(\eps+1) \cE_{\R^d}[\Psi_N] &\ge \eps \sum_Q \left[ \frac{1}{C_1}\frac{\int_Q \varrho_{\Psi_N}^{1+2s/d}}{\bigl(\int_Q \varrho_{\Psi_N} \bigr)^{2s/d}} - \frac{C_1}{|Q|^{2s/d}} \int_Q \varrho_{\Psi_N}\right] \\
& \quad + \sum_Q C_2 |Q|^{-2s/d}  \biggl[ \int_Q \varrho_\Psi(x)\,dx - q \biggr]_+  \\
&\ge \frac{\eps}{C_1} \frac{\int_{\R^d} \varrho_{\Psi_N}^{1+2s/d}}{\Lambda^{2s/d}} 
\end{align*}
for any fixed constant $\eps>0$ satisfying $\eps C_1\le C_2 b$.  Thus  \eqref{eq:LT} holds true. 

\smallskip

As we can see from the above strategy, the only place where the anti-symmetry \eqref{eq:Pauli} plays a role is the local exclusion bound in Lemma \ref{lem:local-exclusion-fermions}. Extending this result to the weaker condition $\Psi_N\rvert_{\scriptbDelta}=0$ is the main task of our proof below.

\section{Reduction of local exclusion} \label{sec:reduction}

In this section, we prove a very useful observation, that allows to reduce the local exclusion \eqref{eq:local-exclusion} to the positivity of the local energy, using the scale-covariance of the kinetic energy. 
This step is inspired by the recent work of Lundholm and Seiringer \cite{LunSei-18} on the energy of ideal anyons. We formulate it abstractly as follows:

\begin{lemma}[Covariant energy bound]\label{lem:energy} 
	Assume that to any $n\in \mathbb{N}_0$ and any cube $Q\subset \R^d$ there is associated a non-negative number (`energy') $E_n(Q)$ satisfying the following properties, for some constant $s>0$: 
	\begin{itemize}
		\item (scale-covariance)
			$E_n(\lambda Q) = \lambda^{-2s} E_n(Q)$ for all $\lambda > 0$;
		\item (translation-invariance)
			$E_n(Q+\bx) = E_n(Q)$ for all $\bx \in \R^d$;
		\item (superadditivity)
			For any collection of disjoint cubes $\{Q_j\}_{j=1}^J$ such that their union is a cube, 
			$$E_n\Bigl(\bigcup_{j=1}^J Q_j\Bigr) \ge \min_{\{n_j\} \in \mathbb{N}_0^J \, s.t.\, \sum_j n_j = n}\ \sum_{j=1}^J E_{n_j}(Q_j) ;$$
		\item (a priori positivity) There exists $q \ge 0$ such that $E_n(Q) > 0$ for all $n \ge q$.
	\end{itemize}
	Then there exists a constant $C>0$ independent of $n$ and $Q$ such that
	\begin{equation}\label{eq:energy-bound}
		E_n(Q) \ge C |Q|^{-2s/d} n^{1+2s/d}, \quad \forall n\ge q.
	\end{equation}
\end{lemma}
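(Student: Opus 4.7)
The plan is to reduce to the unit cube, derive a master recursion from superadditivity together with scale-covariance, and close the bound by strong induction on $n$. Scale-covariance and translation-invariance give $E_n(Q) = |Q|^{-2s/d} f(n)$ with $f(n) := E_n([0,1]^d)$, so the claim reduces to $f(n) \ge C n^{1+2s/d}$ for all $n \ge q$. Partitioning $[0,1]^d$ into its $2^d$ dyadic sub-cubes of side $1/2$ and noting that each sub-cube contributes $2^{2s} f(m_j)$ by scale-covariance yields the master recursion
$$
	f(n) \,\ge\, 2^{2s} \min_{(m_j) \in \mathbb{N}_0^{2^d},\ \sum_j m_j = n} \sum_{j=1}^{2^d} f(m_j).
$$

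For a threshold $N_0 = N_0(d,s,q)$ to be fixed below, the base case $q \le n \le N_0$ follows from the a priori positivity via $c_0 := \min_{q \le n \le N_0} f(n) > 0$ and any $C \le c_0/N_0^{1+2s/d}$. For the inductive step $n > N_0$, I separate the admissible partitions into family (i), with $m_j < n$ for every $j$, and family (ii), the single concentrated partition $(n, 0, \ldots, 0)$. On family (i) the inductive hypothesis plus $f \ge 0$ give $\sum_j f(m_j) \ge C \sum_{m_j \ge q} m_j^{1+2s/d}$; on family (ii), $\sum_j f(m_j) = f(n) + (2^d - 1) f(0)$. If the minimum were attained on family (ii), the recursion would give $f(n) \ge 2^{2s}(f(n) + (2^d - 1) f(0))$, hence $(2^{2s} - 1) f(n) \le 0$, contradicting $f(n) > 0$. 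Therefore the minimum lies in family (i), and it remains to prove $\min_{\text{(i)}} \sum_{m_j \ge q} m_j^{1+2s/d} \ge n^{1+2s/d}/2^{2s}$. With $J_0$ the number of indices where $m_j \ge q$ and $M_0$ their sum, the pigeonhole bound $M_0 \ge n - (2^d - J_0)(q-1)$ together with Jensen's inequality reduces the claim to $(M_0/n)^{1+2s/d} \ge (J_0/2^d)^{2s/d}$, equivalently $t \ge r^{\alpha}$ with $t = M_0/n$, $r = J_0/2^d$, $\alpha = 2s/(d+2s)$; the tangent bound $r^\alpha \le 1 - \alpha(1-r)$ (from concavity of $r \mapsto r^\alpha$) yields this whenever $n \ge 2^d(q-1)/\alpha$, which fixes $N_0$. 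Inserting back into the recursion closes the induction with the same constant $C$.

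The main obstacle is matching the partition-minimization in family (i) with exactly the factor $2^{-2s}$ arising from scaling on the right-hand side of the recursion: any additional multiplicative loss would prevent the induction from closing. That no such loss occurs is a manifestation of the dimensional compatibility between the target exponent $1+2s/d$, the recursion factor $2^{2s}$, and Jensen's inequality on $x \mapsto x^{1+2s/d}$; a priori positivity enters only through the base case and through ruling out the concentrated partition in family (ii).
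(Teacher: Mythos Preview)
Your proof is correct and follows essentially the same route as the paper's: reduce to the unit cube, set up the dyadic recursion $f(n)\ge 2^{2s}\min\sum f(m_j)$, rule out the concentrated partition by positivity, apply Jensen/H\"older to the remaining partitions, and close the induction via an elementary convexity inequality (your tangent bound for $r\mapsto r^\alpha$ plays the same role as the paper's Bernoulli inequality, and your threshold $N_0=2^d(q-1)/\alpha$ matches the paper's $q2^d(1+d/(2s))$ up to the harmless $q$ vs.\ $q-1$). One cosmetic caveat: your pigeonhole $M_0\ge n-(2^d-J_0)(q-1)$ assumes $q$ is an integer; for general real $q\ge 0$ replace $q-1$ by $q$ (as the paper does) and adjust $N_0$ accordingly.
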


\begin{proof}
	Note that for $q \le n \le N$,
	\eqref{eq:energy-bound} holds for some $C=C_N > 0$ by the a priori positivity. The main point here is to remove the $N$-dependence of the constant. 

	Denote $E_n := E_n(Q_0)$ with $Q_0=[0,1]^d$. Assume by induction in $N$ that
	\begin{equation}\label{eq:energy-bound-induc}
		E_n\ge C n^{1+2s/d}, \quad \forall q \le n \le N-1
	\end{equation}
	with a uniform constant $C>0$ and consider $n=N$.
	Split $Q_0$ into $2^d$ subcubes of half side length and obtain by  the superadditivity, translation-invariance and scale-covariance 
	\begin{equation}\label{eq:energy-split}
		E_{N} \ge 2^{2s} \min_{\{n_j\} \, s.t.\, \sum_j n_j = N} \ \sum_{j=1}^{2^d} E_{n_j}.
	\end{equation}
	
	Consider a configuration $\{n_j\}\subset \mathbb{N}_0^{2^d}$ such that the minimum in  \eqref{eq:energy-split} is attained. 
	The a priori positivity $E_N>0$ ensures that none of the $n_j$ can be $N$
	(in the same way we deduce that $E_0 = 0$). 
	Assume that there exist exactly $M$ numbers $n_j < q$ with $0 \le M \le 2^d$. Then 
	$$
		\sum_{n_j \ge q} 1 = 2^d - M 
		\quad \text{and} \quad
		\sum_{n_j\ge q} n_j = N - \sum_{n_j<q} n_j \ge N - qM.
	$$
	Therefore, from \eqref{eq:energy-split}, \eqref{eq:energy-bound-induc} and H\"older's inequality we deduce that  
	\begin{equation}\label{eq:energy-split-2}
		E_{N} \ge C 2^{2s} \sum_{n_j \ge q} n_j^{1+2s/d} 
		\ge C 2^{2s}  \frac{\Bigl( \sum_{n_j\ge q} n_j \Bigr)^{1+2s/d}}{\Bigl(\sum_{n_j\ge q} 1\Bigr)^{2s/d}}
		\geq  C N^{1+2s/d}  \frac{ (1 -  qM N^{-1})^{1+2s/d} }{ (1 -M 2^{-d})^{2s/d} } 
	\end{equation}
	with the same constant $C$ as in \eqref{eq:energy-bound-induc}. If we take 
	$$
		N \ge q2^{d} \Bigl(1+\frac{d}{2s}\Bigr),
	$$
	so that also $qMN^{-1} \le 1$,
	then by Bernoulli's inequality
	$$
		(1-qMN^{-1})^{1+d/(2s)} \ge 1- qMN^{-1} \Bigl(1+ \frac{d}{2s}\Bigr) \ge 1-M2^{-d},
	$$
	and hence \eqref{eq:energy-split-2} reduces to 
	\begin{equation}\label{eq:energy-split-3}
		E_{N} \ge CN^{1+2s/d}
	\end{equation}
	with the same constant $C$ as in~\eqref{eq:energy-bound-induc}. 

	By induction we obtain~\eqref{eq:energy-split-3} for all $N\ge q$, with a constant $C$ independent of $N$. This is the desired bound~\eqref{eq:energy-bound} for the unit cube $Q_0$. 
	The result for the general cube follows from scale-covariance and translation-invariance. 
\end{proof}

\begin{remark}
	It is in fact also possible to allow for $E_n < 0$ for finitely many $n>0$ 
	in Lemma~\ref{lem:energy},
	under a small refinement of the assumption of a priori positivity.
	It is sufficient that there exists $q > 0$ and $c > 1$ 
	such that for all $n \ge q$
	\begin{equation}\label{eq:energy-positivity-refined}
		E_n > c \frac{d}{2s} 2^{d+2s} E_-,
		\qquad E_- := \max_{0 \le n < q} (-E_n).
	\end{equation}
	Namely, with this assumption, the bound in \eqref{eq:energy-split-2}
	may again be used for all $n_j \ge q$, and one obtains
	$E_N \ge C N^{1+2s/d} f(M/2^d)$,
	$C = \min_{q \le n \le N-1} E_n/n^{1+2s/d}$,
	where the function
	$$
		f(x) := \frac{(1-q2^d N^{-1} x)^{1+2s/d}}{(1-x)^{2s/d}} - \frac{E_- 2^{d+2s}}{C N^{1+2s/d}} x,
		\qquad x \in [0,1),
	$$
	is strictly increasing if $N$ is large enough.
\end{remark}

We will apply the above general bound to the local ground-state energy among wave functions satisfying the vanishing condition on $k$-particle diagonals
\begin{equation} \label{eq:def-EnQ}
	E_N(\Omega) := \inf \Bigl\{ \|\Psi_N\|_{\dot H^{s,N}(\Omega)}^2 : \Psi_N \in \cH^{s,N}_k(\R^d), \| \Psi_N\|_{L^2(\Omega^N)}=1 \Bigr\},
\end{equation}
where we have introduced the `completely localized' kinetic functional 
\begin{align} \label{eq:def-local-kinetic}
	\|\Psi_N\|_{\dot H^{s,N}(\Omega)}^2 := \biggl\langle \1_{\Omega^N} \Psi_N, \sum_{j=1}^N (-\Delta_{\bx_j})^s_{|\Omega} \1_{\Omega^N} \Psi_N \biggr\rangle = \sum_{j=1}^N \int_{\Omega^{N-1}} \|\Psi_N\|_{\dot H^{s}_{\bx_j}(\Omega)}^2 \prod_{\ell \ne j} d \bx_\ell.
\end{align}

Note that $\|\Psi_N\|_{\dot H^{s,N}(\Omega)}^2$ is {\em different} from the functional $\cE_{\Omega}[\Psi_N]$ in \eqref{eq:def-EQ}, and its properties will be crucial to deduce the desired local exclusion for  $\cE_{\Omega}[\Psi_N]$. The seminorm $\|\cdot\|_{\dot H^{s,N}(\Omega)}$ in general contains only some of the terms of  the standard homogeneous Sobolev seminorm $\|\cdot\|_{\dot H^{s}(\Omega^N)}$; 
however, the corresponding norms (i.e.\ the seminorms plus the $L^2$-norm) are actually equivalent modulo $N$-dependent constants, not only globally on $\R^{dN}$ but also locally on $Q^N$ (see Appendix~\ref{app:norms}). 

The superadditivity of the energy $E_N(\Omega)$ follows  from 
the partitioning of the many-body space
and by locality respectively non-negativity of any non-local 
part of the kinetic energy, i.e.~\eqref{eq:operator-localization}.
The method was also used in \cite[Lemma~4.2]{LunSei-18} for anyons.

\begin{lemma}[Superadditivity of $E_n(\Omega)$]\label{lem:superadd}
	Let $\{\Omega_j\}_{j=1}^J$ be a collection of disjoint
	subsets of~$\R^d$ and $\Omega=\cup_{j} \Omega_j$. Then 
	\begin{equation}\label{eq:superadd}
		E_N(\Omega) \geq \min_{\{n_j\} \in \mathbb{N}_0^J \, s.t. \, \sum_{j} n_j=N} \, \sum_{j=1}^J E_{n_j}(\Omega_j).
	\end{equation}
\end{lemma}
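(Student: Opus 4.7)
The plan is to decompose the configuration space $\Omega^N$ according to how particles distribute among the cells $\Omega_j$, apply the subadditivity \eqref{eq:operator-localization} of the Sobolev seminorm in each single-particle variable, and then invoke the variational definition \eqref{eq:def-EnQ} of $E_{n_j}(\Omega_j)$ slice-wise on the resulting product cells.

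First, for each assignment $\pi\colon\{1,\ldots,N\}\to\{1,\ldots,J\}$ I would set $D_\pi := \prod_{i=1}^N \Omega_{\pi(i)}$ and $n_j(\pi) := \#\pi^{-1}(j)$, so that $\Omega^N = \bigsqcup_\pi D_\pi$ up to null sets and each $D_\pi$ has a fixed occupation profile $(n_1(\pi),\ldots,n_J(\pi))$ with $\sum_j n_j(\pi) = N$. Starting from \eqref{eq:def-local-kinetic}, I would apply in each single-particle variable $\bx_i$ the one-variable inequality $\|u\|^2_{\dot H^s(\Omega)} \ge \sum_{j=1}^J \|u\|^2_{\dot H^s(\Omega_j)}$ (a consequence of \eqref{eq:operator-localization}; for fractional $s$ it amounts to retaining only the diagonal blocks $\Omega_j\times\Omega_j$ in the Gagliardo double integral and discarding the non-negative off-diagonal ones). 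Rearranging sums and integrals according to the partition $\{D_\pi\}$ would then yield
\[
\|\Psi_N\|^2_{\dot H^{s,N}(\Omega)} \ge \sum_\pi \sum_{j=1}^J \int \|\Psi_N(\,\cdot\,,\sz^{(\ne j)})\|^2_{\dot H^{s,n_j(\pi)}(\Omega_j)}\, d\sz^{(\ne j)},
\]
where $\sz^{(\ne j)}$ collects the $\pi$-assigned positions of the $N - n_j(\pi)$ particles outside $\Omega_j$ and the integrand is the $n_j(\pi)$-particle local kinetic functional \eqref{eq:def-local-kinetic} on $\Omega_j$ applied to the slice of $\Psi_N$ in the variables indexed by $\pi^{-1}(j)$.

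The next step is a slice-wise application of the variational bound \eqref{eq:def-EnQ}: for a.e.\ $\sz^{(\ne j)}$ the slice lies in $\cH^{s,n_j(\pi)}_k(\R^d)$, and hence
\[
\|\Psi_N(\,\cdot\,,\sz^{(\ne j)})\|^2_{\dot H^{s,n_j(\pi)}(\Omega_j)} \ge E_{n_j(\pi)}(\Omega_j)\,\|\Psi_N(\,\cdot\,,\sz^{(\ne j)})\|^2_{L^2(\Omega_j^{n_j(\pi)})}.
\]
Integrating back, bounding the $j$-sum below by $\min_{\{n_j\}\colon\sum_j n_j = N}\sum_{j=1}^J E_{n_j}(\Omega_j)$, summing over $\pi$, and using $\sum_\pi \int_{D_\pi}|\Psi_N|^2\,d\sx = \|\Psi_N\|^2_{L^2(\Omega^N)}$, I would obtain $\|\Psi_N\|^2_{\dot H^{s,N}(\Omega)} \ge \bigl(\min_{\{n_j\}}\sum_{j=1}^J E_{n_j}(\Omega_j)\bigr)\|\Psi_N\|^2_{L^2(\Omega^N)}$; taking the infimum over admissible $\Psi_N$ then produces \eqref{eq:superadd}.

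The main technical obstacle is establishing that the slice $\Psi_N(\,\cdot\,,\sz^{(\ne j)})$ belongs to $\cH^{s,n_j(\pi)}_k(\R^d)$ for a.e.\ $\sz^{(\ne j)}$. I would handle this by approximating $\Psi_N$ in $H^s(\R^{dN})$ by smooth compactly supported $\Psi_N^{(\eps)}$ vanishing on $\bDelta_k\subset(\R^d)^N$; each slice $\Psi_N^{(\eps)}(\,\cdot\,,\sz^{(\ne j)})$ is then smooth, compactly supported, and vanishes on the $n_j(\pi)$-particle $k$-diagonal, since any $k$-fold coincidence among the restricted particles is in particular a $k$-fold coincidence among the full $N$. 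A Fubini argument on the integral representation of the norms, combined with passage to a subsequence, then transfers $H^s$-convergence of the full approximants to $H^s(\R^{dn_j(\pi)})$-convergence of their slices for a.e.\ $\sz^{(\ne j)}$, placing the slice in $\cH^{s,n_j(\pi)}_k(\R^d)$.
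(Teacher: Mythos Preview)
Your proposal is correct and follows essentially the same approach as the paper's proof: the assignments $\pi$ are exactly the partitions $A=\{A_j\}$ used in the paper (with $A_j=\pi^{-1}(j)$), and both arguments proceed by localizing the one-body kinetic operator via \eqref{eq:operator-localization}, inserting the configuration-space decomposition $\Omega^N=\bigsqcup_\pi D_\pi$, and applying the variational definition of $E_{n_j}(\Omega_j)$ slice-wise. Your discussion of the slice membership $\Psi_N(\,\cdot\,,\sz^{(\ne j)})\in\cH_k^{s,n_j(\pi)}(\R^d)$ is in fact more detailed than the paper, which simply asserts this as a fact.
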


\begin{proof} For any partition $A=\{A_j\}_{j=1}^J$ of $\{1, 2, \ldots, N\}$ (i.e.\ the $A_j$ are disjoint subsets of $\{1, 2, \ldots, N\}$ such that $\sum_j |A_j|=N$), we denote by $\1_{A}$ the characteristic function of the set 
$$ 
	\Bigl\{ (\bx_1,\dots,\bx_N) \in (\R^d)^N : \bx_i\in \Omega_j  \Leftrightarrow\, i \in A_j, \text{ for all } i,j\Bigr\}.
$$
Using the operator bound similar to \eqref{eq:operator-localization}
$$
(-\Delta_{\bx_i})^s_{|\Omega} \ge \sum_{j=1}^J (-\Delta_{\bx_i})^s_{|\Omega_j},
$$
the partition of unity
\begin{align}\label{eq:pt-A}
\1_{\Omega^N} = \sum_{A} \1_{A},   
\end{align}
and the fact that $\1_A$ commutes with $ (-\Delta_{\bx_i})^s_{|\Omega_j}$, we can write for any $\Psi\in \cH_{k}^{s,N}(\R^d)$
\begin{align*}
	\|\Psi\|_{\dot{H}^{s,N}(\Omega)}^2 &= \biggl \langle \1_{\Omega^N}\Psi,  \sum_{i=1}^N (-\Delta_{\bx_i})^s_{|\Omega} \1_{\Omega^N} \Psi \biggr\rangle \nonumber \\
	&\ge 
	\sum_{j=1}^J  \biggl \langle  \1_{\Omega^N} \Psi,  \sum_{i=1}^N (-\Delta_{\bx_i})^s_{|\Omega_j}  \1_{\Omega^N} \Psi\biggr\rangle \\
	&= 
	\sum_{j=1}^J \sum_{A}  \biggl \langle \1_A \Psi,  \sum_{i=1}^N (-\Delta_{\bx_i})^s_{|\Omega_j} \1_A \Psi\biggr\rangle\\
	&= 
	\sum_{j=1}^J \sum_{A}  \int_{ \R^{d (N-|A_j|)}} \| \Psi(\,\cdot\,; \sx_{A_j^c})  \|^2_{\dot H^{s,|A_j|}(\Omega_j) } \prod_{\ell\ne j}  \Bigl[ \1_{\Omega_\ell^{|A_\ell|}} (\sx_{A_\ell}) d \sx_{A_\ell}\Bigr]. 
\end{align*}
Here we have introduced the shorthand notation
$$
	(\bx_1,\ldots,\bx_N)=(\sx_{A_j};\sx_{A_j^c}), \quad    \sx_{A_j}= (\bx_\ell)_{\ell \in A_j} \in (\R^{d})^{ |A_j|}.
$$
Since $\Psi\in \cH_{k}^{s,N}(\R^d)$, for a.e. $\sx_{A_j^c}\in \R^{d(N-|A_j|)}$ the function $\Psi(\,\cdot\,;\sx_{A_j^c})$ is in $\cH^{s,|A_j|}_{k}(\R^d)$, and hence 
\begin{align*}
	\| \Psi(\,\cdot\,; \sx_{A_j^c})  \|^2_{\dot H^{s,|A_j|}(\Omega_j) } 	
	& \ge 
	E_{|A_j|}(\Omega_j) \int_{\Omega_j^{|A_j|}} |\Psi(\sx_{A_j}; \sx_{A_j^c})|^2  d \sx_{A_j} \\
	&=
	E_{|A_j|}(\Omega_j) \int_{\R^{d|A_j|}} |\Psi(\sx_{A_j}; \sx_{A_j^c})|^2  \1_{\Omega_j^{|A_j|}} (\sx_{A_j}) d \sx_{A_j}. 
\end{align*}
Thus in summary
\begin{align*}
	\|\Psi\|_{\dot{H}^{s,N}(\Omega)}^2 &\ge \sum_{j=1}^J  \sum_{A} E_{|A_j|}(\Omega_j) \int_{ \R^{dN}} | \Psi |^2 \prod_{\ell=1}^J  \Bigl[ \1_{\Omega_\ell^{|A_\ell|}} (\sx_{A_\ell}) d \sx_{A_\ell} \Bigr]  \\
	&=  
	\sum_{j=1}^J  \sum_{A} E_{|A_j|}(\Omega_j) \langle \Psi, \1_A \Psi\rangle
	= 
	\sum_{A} \Bigl[ \sum_{j=1}^J E_{|A_j|}(\Omega_j) \Bigr] \langle \Psi, \1_A \Psi\rangle\\
	&\ge 
	\Bigl[\hspace{1pt} \min_{\{n_j\} \in \mathbb{N}_0^J \, s.t. \, \sum_{j} n_j=N} \, \sum_{j=1}^J E_{n_j}(\Omega_j) \Bigr] \sum_A \langle \Psi, \1_A \Psi\rangle\\
	&= 
	\Bigl[\hspace{1pt} \min_{\{n_j\} \in \mathbb{N}_0^J \, s.t. \, \sum_{j} n_j=N} \, \sum_{j=1}^J E_{n_j}(\Omega_j) \Bigr] \|\Psi\|^2_{L^2(\Omega^N)}.
\end{align*}
Here in the last identity we have used the partition of unity~\eqref{eq:pt-A} again. This implies the desired estimate~\eqref{eq:superadd}.  
\end{proof}

Now we are ready to prove the reduction of the local exclusion. 

\begin{lemma}[Energy positivity implies local exclusion] \label{lem:sim-local-exclusion}Assume that there exists a constant $q>0$ such that for any cube $Q\subset \R^d$, 
\begin{equation} \label{eq:local-exclusion-positivity}
	E_N(Q) > 0, \quad \forall N\ge q.
\end{equation} 
Then for all $N \ge 1$ and for all wave functions 
$\Psi_N\in \cH^{s,N}_k(\R^d)$, $\|\Psi_N\|_{L^2(\R^{dN})}=1$, we have
\begin{equation} \label{eq:loca-exclusion-vanishing}
	\cE_Q[\Psi_N] \ge  C |Q|^{-2s/d}  \biggl[ \int_Q \varrho_{\Psi_N}(\bx)\,d\bx - q \biggr]_+.
\end{equation} 
Here $C>0$ is a constant independent of $N$, $\Psi_N$ and $Q$. 
\end{lemma}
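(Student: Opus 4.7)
The plan is to apply the covariant energy bound (Lemma~\ref{lem:energy}) to the local ground-state energies $E_n(Q)$ in order to upgrade the qualitative positivity hypothesis~\eqref{eq:local-exclusion-positivity} to the quantitative estimate $E_n(Q)\ge C|Q|^{-2s/d}n^{1+2s/d}$ for $n\ge q$, and then to extract this bound from $\cE_Q[\Psi_N]$ through a partition-of-unity argument in the spirit of the proof of Lemma~\ref{lem:superadd}. The four hypotheses of Lemma~\ref{lem:energy} are all at hand for the family $E_n(Q)$: scale-covariance $E_n(\lambda Q)=\lambda^{-2s}E_n(Q)$ follows from the $-2s$-homogeneity of $(-\Delta)^s$ together with the dilation-invariance of $\cH_k^{s,n}(\R^d)$; translation-invariance is immediate; superadditivity is exactly Lemma~\ref{lem:superadd}; and a priori positivity is the standing assumption.

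Next I would decompose $\cE_Q[\Psi_N]$ according to which particle coordinates lie in $Q$. Since the seminorm $\|\cdot\|_{\dot H^s_{\bx_j}(Q)}$ only involves the wave function for $\bx_j\in Q$ while leaving the remaining $N-1$ variables unconstrained, splitting the outer integration over $(\bx_\ell)_{\ell\ne j}$ into the contributions from $Q$ and $Q^c$, and grouping $\bx_j$ with those indices whose coordinates lie in $Q$, yields (with $\1_A$ the characteristic function of the event that exactly the particles indexed by $A\subseteq\{1,\ldots,N\}$ lie in $Q$)
\begin{equation*}
\cE_Q[\Psi_N]=\sum_{A\subseteq\{1,\ldots,N\}}\int_{(Q^c)^{N-|A|}}\bigl\|\Psi_N(\,\cdot\,;\sx_{A^c})\bigr\|^2_{\dot H^{s,|A|}(Q)}\,d\sx_{A^c}.
\end{equation*}
Because $\Psi_N\in\cH^{s,N}_k(\R^d)$, the slices $\Psi_N(\,\cdot\,;\sx_{A^c})$ belong to $\cH_k^{s,|A|}(\R^d)$ for a.e.\ $\sx_{A^c}$, so the definition~\eqref{eq:def-EnQ} gives
\begin{equation*}
\bigl\|\Psi_N(\,\cdot\,;\sx_{A^c})\bigr\|^2_{\dot H^{s,|A|}(Q)}\ge E_{|A|}(Q)\bigl\|\Psi_N(\,\cdot\,;\sx_{A^c})\bigr\|^2_{L^2(Q^{|A|})},
\end{equation*}
and summing produces $\cE_Q[\Psi_N]\ge\sum_{A}E_{|A|}(Q)\langle\Psi_N,\1_A\Psi_N\rangle$.

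To conclude, I would insert the quantitative lower bound on $E_n(Q)$ (using $E_n(Q)\ge 0$ for $n<q$) together with the elementary inequality $n^{1+2s/d}\ge(n-q)_+$ valid for all $n\in\N_0$, to obtain
\begin{equation*}
\cE_Q[\Psi_N]\ge C|Q|^{-2s/d}\sum_{A}(|A|-q)_+\,\langle\Psi_N,\1_A\Psi_N\rangle=C|Q|^{-2s/d}\,\bigl\langle\Psi_N,(N_Q-q)_+\Psi_N\bigr\rangle,
\end{equation*}
where $N_Q:=\sum_i\1_Q(\bx_i)$ is the number-in-$Q$ operator satisfying $\langle\Psi_N,N_Q\Psi_N\rangle=\int_Q\varrho_{\Psi_N}$. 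Applying Jensen's inequality to the convex function $x\mapsto(x-q)_+$ then yields $\langle\Psi_N,(N_Q-q)_+\Psi_N\rangle\ge\bigl[\int_Q\varrho_{\Psi_N}-q\bigr]_+$, which is~\eqref{eq:loca-exclusion-vanishing}.

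The main obstacle I anticipate is the partition-of-unity identity in the second paragraph. The subtlety is that $\cE_Q[\Psi_N]$ is built from the \emph{partially} localized seminorms $\|\cdot\|_{\dot H^s_{\bx_j}(Q)}$ (only one variable restricted to $Q$ at a time), while the auxiliary energies $E_n(Q)$ are defined via the \emph{completely} localized functional $\|\cdot\|^2_{\dot H^{s,n}(Q)}$ (all variables simultaneously restricted); reconciling the two requires careful combinatorial bookkeeping of which coordinates lie in $Q$ versus $Q^c$, and is precisely the place where the introduction of the completely localized auxiliary functional pays off.
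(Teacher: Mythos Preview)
Your proposal is correct and follows essentially the same approach as the paper: apply Lemma~\ref{lem:energy} to upgrade positivity to the quantitative bound $E_n(Q)\ge C|Q|^{-2s/d}[n-q]_+$, decompose $\cE_Q[\Psi_N]$ via the partition of unity $\1=\sum_B\1_B$ indexed by the subsets of particles in $Q$ (using that $\1_B$ commutes with $(-\Delta_{\bx_i})^s_{|Q}$), bound each piece by $E_{|B|}(Q)\langle\Psi_N,\1_B\Psi_N\rangle$, and finish with Jensen's inequality for $t\mapsto[t-q]_+$. The paper's proof is identical in structure, differing only in notation.
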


\begin{proof} Given \eqref{eq:local-exclusion-positivity},
the energy functional $E_{n}(Q)$ defined in \eqref{eq:def-EnQ} verifies all conditions in Lemma~\ref{lem:energy}. 
Therefore, there exists a constant $C>0$ independent of $n$ and $Q$ such that
\begin{equation} \label{eq:EnQ-lower-bound}
	E_n(Q) \ge C |Q|^{-2s/d} n^{1+2s/d} \1_{\{n\ge q\}} \ge C |Q|^{-2s/d} [n-q]_+, \quad \forall n\ge 0. 
\end{equation}

Now we adapt the localization method in the proof of Lemma \ref{lem:superadd} to treat the functional $\cE_Q[\Psi_N]$ (instead of $\|\Psi_N\|_{\dot H^{s,N}(Q)}^2$). To be precise, for any subset $B$ of $\{1,\ldots,N\}$ we denote by $\1_B$ the characteristic function of the set 
$$ 
	\Bigl\{ (\bx_1,\dots,\bx_N) \in (\R^d)^N : \bx_i\in Q \Leftrightarrow\, i \in B, \text{ for all } i\Bigr\}.
$$

For any $\Psi_N\in \cH^{s,N}_k(\R^d)$,  $\|\Psi_N\|_{L^2(\R^{dN})}=1$, by inserting the partition of unity 
\begin{align} \label{eq:pt-B}
	\1_{\R^{dN}} = \sum_{B} \1_B
\end{align}
we can write
\begin{align} \label{eq:local-EQ-lowerbound}
	\cE_Q[\Psi_N] &=\biggl\langle \Psi_N, \sum_{i=1}^N (-\Delta_{\bx_i})^s_{|Q} \Psi_N \biggr\rangle 
	= 
	\sum_{B} \biggl\langle \1_B \Psi_N, \sum_{i=1}^N (-\Delta_{\bx_i})^s_{|Q} \1_{B} \Psi_N \biggr\rangle\nonumber\\
	&=  
	\sum_{B} \int_{(\R^{d}\setminus Q)^{N-|B|}} \|\Psi_N(\,\cdot\,;\sx_{B^c})\|^2_{\dot H^{s,|B|}(Q)} \1_{ (\R^d\setminus Q)^d}d \sx_{B^c} \nonumber\\
	&\ge   
	\sum_{B} \int_{(\R^{d}\setminus Q)^{N-|B|}} E_{|B|}(Q)\|\Psi_N(\,\cdot\,;\sx_{B^c})\|^2_{L^2(Q^{|B|})} d \sx_{B^c}\nonumber\\
	&= 
	\sum_{B} E_{|B|}(Q) \langle \Psi_N, \1_B \Psi_N\rangle.
\end{align}
Here we have used the fact that $\1_B$ commutes with $(-\Delta_{\bx_i})^s_{|Q}$ and the shorthand notation
$$
	(\bx_1,\ldots,\bx_N)=(\sx_{B};\sx_{B^c}), \quad    \sx_{B}= (\bx_\ell)_{\ell \in B} \in (\R^{d})^{ |B|}.
$$
On the other hand, the partition of unity \eqref{eq:pt-B} implies that 
$$
	\sum_{B} \langle \Psi_N, \1_B \Psi_N\rangle = \langle \Psi_N, \Psi_N\rangle =1 
$$
and 
$$
	\sum_{B} |B|  \langle \Psi_N, \1_B \Psi_N\rangle = \sum_{B} \biggl \langle \1_B \Psi_N, \sum_{i=1}^N \1_Q(\bx_i) \1_B \Psi_N \biggr \rangle 
	= \biggl \langle \Psi_N, \sum_{i=1}^N \1_Q(\bx_i)  \Psi_N \biggr \rangle= \int_Q \varrho_{\Psi_N}. 
$$
Thus from \eqref{eq:local-EQ-lowerbound} and \eqref{eq:EnQ-lower-bound} we conclude that 
\begin{align*}
	\cE_Q[\Psi_N]	&\ge \frac{C}{|Q|^{2s/d}} \sum_{B} \Bigl[|B|-q\Bigr]_+ \langle \Psi_N, \1_B \Psi_N\rangle \\
	&\ge 
	\frac{C}{|Q|^{2s/d}} \Bigl[ \sum_{B} (|B|-q)  \langle \Psi_N, \1_B \Psi_N\rangle\Bigr]_+\\
	&= 
	\frac{C}{|Q|^{2s/d}} \Bigl[ \int_{Q} \varrho_{\Psi_N} - q \Bigr]_+
\end{align*}
by Jensen's inequality and the convexity of the function $t\mapsto [t]_+$.
\end{proof}

\section{Many-body Poincar\'e inequality} \label{sec:Poincare}

The crucial fact that the local energy $E_n(\Omega)$ in \eqref{eq:def-EnQ} eventually becomes positive
with increasing particle number is the content of the following Poincar\'e inequality:

\begin{theorem}[Poincar\'e inequality for functions vanishing on diagonals]\label{thm:Poincare}
	Fix an integer $k\geq 2$ and a bounded connected Lipschitz domain $\Omega\subset \R^d$. Assume that $2s>d(k-1)$. 
	For $N \in \N$ large enough ($N \ge \lceil s \rceil^d k$ is sufficient) there exists a positive constant $C$ depending only on $s, k, N, \Omega$ so that 
	\begin{equation}\label{eq:Poincare_ineq_thm}
		\|u\|_{\dot{H}^{s,N}(\Omega)}\geq C \|u\|_{L^2(\Omega^N)}
	\end{equation}
	for all $u\in C^\infty(\Omega^N)$ whose restriction to $\bDelta_k$ is zero.
\end{theorem}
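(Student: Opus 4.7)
The plan is a compactness--contradiction argument that reduces the analytic inequality to a pigeonhole fact about polynomials vanishing on $\bDelta_k$. Suppose the claim fails; then there is a sequence $(u_n)\subset C^\infty(\Omega^N)$ with $u_n\rvert_{\scriptbDelta_k}=0$, $\|u_n\|_{L^2(\Omega^N)}=1$, and $\|u_n\|_{\dot H^{s,N}(\Omega)}\to 0$. The norm equivalence of Appendix~\ref{app:norms} (with an $N$-dependent constant, which is harmless since $N$ is fixed) shows that $(u_n)$ is bounded in $H^s(\Omega^N)$, so Rellich's compactness theorem on the bounded Lipschitz domain $\Omega^N$ produces a subsequence converging weakly in $H^s$ and strongly in $L^2$ to some $u^*$, with $\|u^*\|_{L^2(\Omega^N)}=1$ and, by weak lower semicontinuity, $\|u^*\|_{\dot H^{s,N}(\Omega)}=0$.

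To identify the kernel of the seminorm I use the identity
\[
\|u\|_{\dot H^{s,N}(\Omega)}^2 = \sum_{j=1}^N\int_{\Omega^{N-1}} \|u\|_{\dot H^s_{\bx_j}(\Omega)}^2 \prod_{\ell\ne j}d\bx_\ell,
\]
so that its vanishing forces $u^*$, as a function of each $\bx_j$ alone and for almost every fixing of the other variables, to lie in the kernel of $\|\cdot\|_{\dot H^s(\Omega)}$ on the connected Lipschitz domain $\Omega$; that kernel is the space $\mathcal P$ of polynomials on $\R^d$ of degree $<\lceil s\rceil$ (of dimension $D:=\binom{\lceil s\rceil-1+d}{d}\le \lceil s\rceil^d$). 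Iterating over $j$ shows that $u^*$ agrees on $\Omega^N$ with an element of the tensor product $V_N:=\mathcal P^{\otimes N}$, i.e.\ a polynomial of degree $<\lceil s\rceil$ in each variable $\bx_j$. The zero-trace condition $u_n\rvert_{\scriptbDelta_k}=0$ passes to the limit: the hypothesis $2s>d(k-1)$ is precisely the threshold that makes the trace map $H^s(\R^{dN})\to H^{s-d(k-1)/2}(L)$ continuous for every codimension-$d(k-1)$ linear component $L$ of $\bDelta_k$, and weak $H^s$-convergence transfers the zero trace to $u^*$. Because $u^*$ is polynomial, this upgrades to pointwise vanishing on all of $\bDelta_k$.

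It remains to prove, as a purely algebraic statement, that no nonzero element of $V_N$ vanishes identically on $\bDelta_k$ once $N\ge \lceil s\rceil^d k$. Fix an interpolation set $\{p_1,\ldots,p_D\}\subset \R^d$ that is unisolvent for $\mathcal P$, i.e.\ the evaluation map $\mathcal P\to\R^D$, $f\mapsto(f(p_i))_{i=1}^D$, is a linear isomorphism (generic points suffice). Then the product family of evaluations at the points $\{(p_{i_1},\ldots,p_{i_N}):(i_1,\ldots,i_N)\in\{1,\ldots,D\}^N\}$ is unisolvent for $V_N=\mathcal P^{\otimes N}$. Since $N\ge \lceil s\rceil^d k\ge Dk > (k-1)D$, the pigeonhole principle guarantees that every index tuple $(i_1,\ldots,i_N)$ has some value repeated at least $k$ times, and the corresponding evaluation point therefore lies in $\bDelta_k$. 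Hence $u^*$ vanishes at every point of a unisolvent set for $V_N$, which by unisolvency forces $u^*\equiv 0$, contradicting $\|u^*\|_{L^2(\Omega^N)}=1$.

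The pigeonhole step is elementary once the polynomial structure is exposed; the delicate part is really the passage to the limit, namely (i) combining the Appendix~\ref{app:norms} norm equivalence with Rellich compactness to produce the limit $u^*$, and (ii) arranging $H^s$-trace theory on the union of codimension-$d(k-1)$ subspaces making up $\bDelta_k$ so that weak $H^s$-convergence preserves the diagonal vanishing. Both are essentially standard under $2s>d(k-1)$, but need to be woven together carefully enough to close the contradiction.
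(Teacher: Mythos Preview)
Your overall skeleton is the same as the paper's: contradiction, compactness, identify the limit as a polynomial of bounded degree in each variable, pass the diagonal vanishing to the limit, then argue algebraically that such a polynomial must be zero for $N$ large. Two of your implementations are genuinely different and worth noting. First, your pigeonhole/unisolvency argument replaces the paper's inductive Lemma~\ref{lem:Polynomial_vanishing}: picking $D$ unisolvent nodes for $\mathcal P$ and observing that every product node in $\{p_1,\ldots,p_D\}^N$ lies in $\bDelta_k$ once $N\ge Dk$ is clean and yields the same threshold $N\ge\lceil s\rceil^d k$. Second, you pass the zero trace via abstract $H^s$ trace theory on codimension-$d(k-1)$ subspaces, whereas the paper uses an explicit H\"older estimate in the $(k{-}1)$ clustering variables (restricting first to a cube $Q\subset\Omega$).

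There are, however, two genuine gaps you should close. The step ``iterating over $j$ shows $u^*\in\mathcal P^{\otimes N}$'' is precisely Lemma~\ref{lem:multivariate-polynomial} of the paper, and the authors are careful to note that this ``seemingly obvious fact is indeed non-trivial'': a function that is polynomial in each variable separately (for a.e.\ fixing of the others) need not obviously be a multivariate polynomial without some additional argument. You either need to invoke such a lemma or give the short distributional proof. Second, your route leans on $H^s(\Omega^N)$-boundedness of $(u_n)$ via Appendix~\ref{app:norms}, but Lemma~\ref{lem:Equivalence_local_spaces} is stated and proved only for cubes, not for a general Lipschitz domain $\Omega$; this is exactly why the paper works only with $H^\nu$-boundedness for $\nu=\min\{s,1\}$ on $\Omega^N$ (which for $d\ge 2$ uses merely that $\dot H^{1,N}(\Omega)=\dot H^1(\Omega^N)$, and for $d=1$ that an interval is a cube) and then restricts to an interior cube for the trace step. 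Your trace argument needs weak $H^s$-convergence, so you must either extend the norm equivalence to Lipschitz domains or rearrange the argument as the paper does.
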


Since Theorem \ref{thm:Poincare} is of independent interest, we state the result for more general domains although the result for cubes is sufficient for our application. 

\begin{proof}[Conclusion of Theorem  \ref{thm:main-general}] From Theorem \ref{thm:Poincare} and Lemma \ref{lem:sim-local-exclusion} we obtain the local exclusion bound \eqref{eq:loca-exclusion-vanishing}. Theorem \ref{thm:main-general} then immediately follows from the proof strategy in Section~\ref{sec:general-strategy}. 
\end{proof}

It remains to prove Theorem \ref{thm:Poincare}. The central fact used in the proof is that a function minimizing~\eqref{eq:Poincare_ineq_thm} must be a polynomial, and that if a polynomial vanishes on too many diagonals it must be zero.

\begin{lemma}[Low-degree polynomials vanishing on diagonals are trivial]\label{lem:Polynomial_vanishing} 
Given $d,k,S\in \N_1$ and $N \ge (S+1)^d k$. Let the $dN$-variable polynomial $f(\bx_1,\ldots,\bx_N)$, with $\bx_i\in \R^d$, satisfy 
\begin{itemize}
	\item $\deg_{\bx_j} f\le S$ for all $j \in \{1,\ldots,N\}$, 
	\item $f(\bx_1,\ldots,\bx_N)=0$ on $\bDelta_k$.
\end{itemize}
Then $f\equiv 0$. 
\end{lemma}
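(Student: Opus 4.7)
The plan is to combine classical polynomial interpolation on a product grid with a simple pigeonhole argument, exploiting precisely the hypothesis $N \ge (S+1)^d k$. Fix $S+1$ distinct real numbers $a_0, a_1, \ldots, a_S$, form the $d$-dimensional grid $\Pi := \{a_0, \ldots, a_S\}^d \subset \R^d$ of cardinality $M := (S+1)^d$, and label its points $\bp_1, \ldots, \bp_M$. The goal is to show that $f$ vanishes on the product grid $\Pi^N \subset (\R^d)^N$; this will force $f \equiv 0$ by polynomial interpolation.

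First I would recall (and record) the standard grid-vanishing lemma: if $g(\bx_1, \ldots, \bx_N)$ is a polynomial with $\deg_{\bx_j} g \le S$ for each $j$ (read as total degree in the $d$ components of $\bx_j$, which in particular bounds the degree in each individual scalar coordinate by $S$) and $g$ vanishes on $\Pi^N$, then $g \equiv 0$. This is proved by iterated one-variable interpolation: fixing every scalar coordinate of every $\bx_i$ except for one scalar coordinate of $\bx_1$ yields a univariate polynomial of degree $\le S$ vanishing at $S+1$ points, hence identically zero; iterating over the $d$ components of $\bx_1$ and then over $\bx_2,\ldots,\bx_N$ gives $g \equiv 0$.

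The remaining step is a pigeonhole argument. For any map $\phi\colon \{1, \ldots, N\} \to \{1, \ldots, M\}$, the point $(\bp_{\phi(1)}, \ldots, \bp_{\phi(N)})$ is an arbitrary element of $\Pi^N$. Because $N \ge Mk$, at least one fiber satisfies $|\phi^{-1}(m)| \ge k$ (otherwise $N = \sum_m |\phi^{-1}(m)| \le M(k-1) < Mk$, a contradiction). Hence this configuration has at least $k$ coordinates equal to the common value $\bp_m$ and therefore lies in $\bDelta_k$, where $f$ vanishes by hypothesis. Since $\phi$ was arbitrary, $f$ vanishes on all of $\Pi^N$, and the grid-vanishing lemma concludes $f \equiv 0$.

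I do not anticipate a serious obstacle; the argument rests on the clean quantitative matching between $(S+1)^d$, the number of grid points in $\R^d$ needed to separate polynomials of degree $\le S$ coordinate-by-coordinate, and the factor $k$, the size of the coincidence cluster forced by pigeonhole. The only point worth stating explicitly is the interpretation of $\deg_{\bx_j} f \le S$ as a bound on the total degree in the components of $\bx_j$, which is precisely the condition inherited from the polynomial kernel of the Sobolev seminorm $\|\cdot\|_{\dot H^{s,N}(\Omega)}$ with $S = \lceil s \rceil - 1$.
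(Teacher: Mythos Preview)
Your argument is correct and takes a genuinely different, cleaner route than the paper. The paper proceeds by a nested induction: first on $k$ for $d=1$ (using that a univariate polynomial of degree $\le S(k-1)$ with $N-k+1$ distinct roots must vanish, then reducing to the $(k-1)$-diagonal), and then on $d$ (restricting the first coordinate of a block of $(S+1)k$ variables to be equal and applying the $d=1$ case, then the $(d-1)$-dimensional induction hypothesis). Your approach bypasses all of this: the pigeonhole observation that every point of the product grid $\Pi^N$ already lies in $\bDelta_k$ when $N \ge (S+1)^d k$ is the entire combinatorial content, and the remaining interpolation step is completely standard. Your proof also makes it transparent that the threshold could be weakened to $N \ge (S+1)^d(k-1)+1$, and that only the degree bound in each individual scalar coordinate (a consequence of the total-degree hypothesis $\deg_{\bx_j} f \le S$) is actually used. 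The paper's inductive proof, while more intricate, has the minor virtue of staying closer to the structure of the diagonal set and not requiring the introduction of an auxiliary grid; but your version is shorter and arguably more illuminating about why the quantitative bound $(S+1)^d k$ arises.
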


\begin{proof} The case $k=1$ ($\bDelta_1 = \R^d$) is trivial. We prove the other cases by induction. 

\bigskip

\noindent{\bf Step 1:} Consider $d=1$ and $k=2$.  Then $f(x_1,\ldots,x_N)=0$ if $x_i=x_j$ for some $i\ne j$. Consequently, when $x_2,\ldots,x_N$ are mutually different, the one variable polynomial $g(x_1)=f(x_1,x_2,\ldots,x_N)$ has $\deg g \le S$ but it has $N-1$ different roots $x_1=x_2,\ldots,x_1=x_N$.  Therefore, if 
$$N-1 > S$$
(which holds if $N \ge (S+1)k$) then $g(x_1)\equiv 0$. Thus 
$$
f(x_1,\ldots,x_N)=0
$$
for all $x_1,\ldots,x_N \in \R$ satisfying that $x_2,\ldots,x_N$ are mutually different. By continuity, we conclude that $f\equiv 0$. 

\bigskip

\noindent{\bf Step 2:} Consider $d=1$ and $k> 2$. Then $f(x_1,\ldots,x_N)=0$ if at least $k$ points $x_i$'s coincide. Then if $x_{k},\ldots,x_N$ are mutually different, the one-variable polynomial 
$$g(x_1)=f(x_1,\ldots,x_1,x_k,\ldots,x_N)$$
has $\deg g\le S(k-1)$ but it has $N-k+1$ different roots $x_1=x_k,\ldots,x_1=x_N$. Therefore, if 
$$
N-k+1 > S(k-1)
$$ 
(which holds if $N \ge (S+1)k $) then $g\equiv 0$. Thus
$$
f(x_1,\ldots,x_1,x_k,\ldots,x_N)=0
$$
if $x_k,\ldots,x_N$ are mutually different. By continuity, we conclude that
$$
f(x_1,\ldots,x_1,x_k,\ldots,x_N)=0
$$
for all $x_1,\ldots,x_N$. Similarly, by a renumbering, we can show that 
$$
f(x_1,x_2,\ldots,x_N)=0
$$
if at least $(k-1)$ points $x_i$'s coincide. By induction in $k$, we conclude that $f\equiv 0$. 

\bigskip

\noindent{\bf Step 3:} Now consider $d>1$ and $k\ge 2$.  Let us denote 
$$
\bx_i= (y_i,\bz_i) \in \R\times \R^{d-1}.  
$$
Take 
$$n= (S+1)k, \quad N\ge (S+1)^{d} k = (S+1)^{d-1}n.$$ 
Then for any $\bz \in \R^{d-1}$ and $\bx_{n+1},\ldots,\bx_N\in \R^d$, the polynomial 
$$
g(y_1,\ldots,y_n)= f((y_1,\bz), \ldots,(y_n,\bz), \bx_{n+1},\ldots,\bx_N)
$$
satisfies that $\deg_{y_i} g\le S$ and $g=0$ if (at least) $k$ points $y_i$'s coincide. By the result in the 1D case (with the choice $n=(S+1)k$) we conclude that $g \equiv 0$. Similarly, we obtain that 
$$
f(\bx_1,\ldots,\bx_N)= f((y_1,\bz_1),\ldots,(y_N,\bz_N))=0 
$$
if at least $n$ points $\bz_i$'s coincide. By induction in $d$ (i.e.\ using the induction hypothesis with $d-1$ and $k=n$, $N\ge (S+1)^{d-1}n$) we conclude that $f\equiv 0$. 
\end{proof}

We will also need the following technical lemma, which essentially states that if a multivariable function is a polynomial in each variable separately, then it is a multivariable polynomial.  The proof of this seemingly obvious fact is indeed non-trivial; see Carroll~\cite{Carroll-61} for an elegant proof in the two variables case. Here we provide an alternative proof for $n$ variables. 

\begin{lemma} \label{lem:multivariate-polynomial}
	Let $f(x_1,\ldots,x_n) \in L^1_{\rm loc}(\R^n)$ satisfy that for any $j=1,2,\ldots,n$ and for a.e. $(x_1,\ldots,x_{j-1},x_{j+1},\ldots,x_n) \in \R^{n-1}$ the mapping $x_j\mapsto f(x_1,\ldots,x_j,\ldots,x_n)$ is a polynomial
	of degree at most $M_j$. Then f is a polynomial of n variables $(x_1,\ldots,x_n)$
	of degree at most $M=\sum_{j=1}^n M_j$. 
\end{lemma}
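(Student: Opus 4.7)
The plan is to reduce to the smooth case by mollification and then iterate Taylor's theorem. Let $\rho_\epsilon$ be a standard compactly supported mollifier on $\R^n$ and set $f_\epsilon := f * \rho_\epsilon$, so $f_\epsilon \in C^\infty(\R^n)$ and $f_\epsilon \to f$ in $L^1_{\rm loc}(\R^n)$ as $\epsilon \to 0^+$.

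The first step is to show that $\partial_{x_j}^{M_j+1} f = 0$ in $\mathcal{D}'(\R^n)$ for every $j$. Given $\phi \in C_c^\infty(\R^n)$ and writing $\hat{x}_j := (x_1,\ldots,x_{j-1},x_{j+1},\ldots,x_n)$, Fubini's theorem (the integrand lies in $L^1(\R^n)$ since $\phi$ has compact support) gives
\begin{equation*}
	\langle f, \partial_{x_j}^{M_j+1}\phi\rangle = \int_{\R^{n-1}} \biggl(\int_{\R} f(x)\,\partial_{x_j}^{M_j+1}\phi(x)\, dx_j\biggr) d\hat{x}_j.
\end{equation*}
By assumption, for a.e.\ $\hat{x}_j$ the slice $x_j \mapsto f(x)$ is a polynomial of degree $\le M_j$, hence smooth; integrating by parts $M_j+1$ times in $x_j$ (with vanishing boundary terms) therefore makes the inner integral identically zero. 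Consequently $\partial_{x_j}^{M_j+1} f_\epsilon = (\partial_{x_j}^{M_j+1} f) * \rho_\epsilon \equiv 0$ pointwise on $\R^n$ for every $\epsilon > 0$.

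The second step proves by induction on $n$ that any $g \in C^\infty(\R^n)$ with $\partial_{x_j}^{M_j+1} g \equiv 0$ for $j = 1,\ldots,n$ is a polynomial of total degree at most $\sum_{j=1}^n M_j$. The base $n=1$ is immediate. For the inductive step applied to $f_\epsilon$, Taylor expand in $x_1$ about $0$; vanishing of $\partial_{x_1}^{M_1+1} f_\epsilon$ annihilates the remainder, yielding
\begin{equation*}
	f_\epsilon(x) = \sum_{k=0}^{M_1} \frac{x_1^k}{k!}\, g_k(x_2,\ldots,x_n), \qquad g_k(x_2,\ldots,x_n) := (\partial_{x_1}^k f_\epsilon)(0, x_2,\ldots,x_n).
\end{equation*}
By commutativity of partials, each $g_k \in C^\infty(\R^{n-1})$ satisfies $\partial_{x_j}^{M_j+1} g_k \equiv 0$ for $j \ge 2$, so by the induction hypothesis $g_k$ is a polynomial of total degree $\le \sum_{j=2}^n M_j$ and $f_\epsilon$ has total degree $\le M$. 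Finally, the space $\mathcal{P}_M$ of polynomials of total degree $\le M$ is finite-dimensional, hence its image under restriction to any compact $K \subset \R^n$ of positive measure is closed in $L^1(K)$; since $f_\epsilon \in \mathcal{P}_M$ and $f_\epsilon \to f$ in $L^1_{\rm loc}$, the limit $f$ also agrees a.e.\ with an element of $\mathcal{P}_M$.

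The only genuinely technical point is the first step, where the merely \emph{a.e.}\ polynomial structure of slices must be promoted to the distributional identity $\partial_{x_j}^{M_j+1} f = 0$ through the combination of Fubini and slicewise integration by parts. Once this transfer is secured, mollification turns all high-order derivatives into literal pointwise identities and the remainder of the argument is a routine one-dimensional induction; passing to the limit is then immediate from finite-dimensionality.
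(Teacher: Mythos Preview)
Your proof is correct. The first step---using Fubini and slicewise integration by parts to show $\partial_{x_j}^{M_j+1} f = 0$ distributionally---is essentially identical to the paper's Step~1. The two arguments diverge only in how they pass from vanishing high-order derivatives to the polynomial conclusion. The paper works directly with distributions: it shows $D^\alpha f = 0$ for every $|\alpha| > M$ and then runs an induction on $M$, invoking standard facts from Lieb--Loss (a distribution with all first derivatives zero is constant; a distribution whose weak partials are continuous is $C^1$) to climb up one degree at a time. You instead mollify to reduce to a smooth function, induct on the number of variables $n$ via a classical Taylor expansion in one coordinate, and then pass to the limit using finite-dimensionality of $\mathcal{P}_M$. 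Your route is arguably more self-contained (no appeal to distributional regularity theorems), at the modest cost of the extra mollification and limiting argument; the paper's route is slightly more direct once those theorems are granted. Both are short and either would serve in the paper's context.
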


From the proof below, it is clear that we can replace $\R^n$ by a subdomain (e.g. a cube). 

\begin{proof} {\bf Step 1.} We use the notation $\sx=(x_1,\ldots,x_n)\in \R^n$, $\alpha=(\alpha_1,\ldots,\alpha_n)\in \mathbb{N}_0^n$, and for every $j=1,2,\ldots,n$ we write 
$$
\sx=(x_j;\sx_{j}'), \quad \alpha=(\alpha_j;\alpha_j'). 
$$

By assumption, for a.e. $\sx_j'\in \R^{n-1}$, the mapping $x_j\mapsto f(x_j;\sx_j')$ is a polynomial of degree at most $M_j$. Therefore, for any $\alpha_j>M_j$, $D^{\alpha_j} f(\,\cdot\,; \sx_j')=0$ as distribution on $\R$, namely
\begin{equation} \label{eq:pol=0-1D}
	\int_{\R} f(x_j;\sx_j') D^{\alpha_j} h(x_j) \,d x_j =0, \quad \forall h\in C_c^\infty(\R). 
\end{equation}
Consequently, $D^{\alpha}f=0$ as distribution in $\R^n$ if $|\alpha|>M$. Indeed, since $|\alpha|>M$ we have $\alpha_j>M_j$ for some $j$, and hence for any test function $\varphi\in C_c^\infty(\R^n)$ using Fubini's theorem and \eqref{eq:pol=0-1D} we can write 
$$
\int_{\R^n} f D^\alpha \varphi \,d\sx = \int_{\R^{n-1}} \biggl[ \int_{\R} f(x_j;\sx_j') D^{\alpha_j} \Bigl( D^{\alpha_j'} \varphi(x_j;\sx_j') \Bigr) d x_j  \biggr] d \sx_j' =0. 
$$ 

\noindent 
{\bf Step 2.} Thus it remains to prove that if $D^{\alpha}f=0$ as distribution in $\R^n$ for any $|\alpha|>M$, then $f$ is a polynomial of $n$ variables. We prove this statement by induction in $M$. 

If $M=0$, then $D_{x_j}f=0$ as distribution for any $j=1,2,\ldots,n$, and hence $f$ is constant by~\cite[Theorem~6.1]{LiebLoss_book}. 

Now we prove the statement for $M \ge 1$ using the induction hypothesis for $M-1$. From
$$D^{\alpha}f=0, \quad \forall |\alpha|>M$$
we have for any $j=1,2,\ldots,n$, 
$$D^{\alpha} (D_{x_j}f)=0, \quad \forall |\alpha|>M-1.$$
Thus by the induction hypothesis for $M-1$, $D_{x_j}f$ is a polynomial of $n$ variables for any $j=1,2,\ldots,n$. Since $D_{x_j}f \in C(\R^n)$ for all $j=1,2,\ldots,n$, we obtain that $f \in C^1(\R^n)$ by~\cite[Theorem~6.10]{LiebLoss_book} and we have the formula~\cite[Theorem~6.9]{LiebLoss_book}
$$
f(\sx)=f(0)+ \int_0^1 \sx \cdot (\nabla f)(t\sx) \,d t, \quad \forall \sx\in \R^n. 
$$ 
The latter formula and the fact that $D_{x_j}f$ is a polynomial of $n$ variables for any $j=1,2,\ldots,n$ imply that $f$ is a polynomial of $n$ variables. This ends the proof. 
\end{proof}

\begin{proof}[Proof of Theorem~\ref{thm:Poincare}]
	We argue by contradiction. Assume that~\eqref{eq:Poincare_ineq_thm} is false, then there exists a sequence $u_n\in C^\infty(\Omega^N)$ satisfying $\|u_n\|_{L^2}=1,$ $u_n\bigl|_{\scriptbDelta_k}\equiv 0$, and
	\begin{equation}\label{eq:contradicting_seq}
		\|u_n\|_{\dot{H}^{s,N}(\Omega)}\to 0, \quad \mbox{as }n\to \infty.
	\end{equation}
	In particular, $u_n$ is bounded in the Sobolev space $H^{\nu}(\Omega^N)$ with $\nu=\min\{s, 1\}$. Indeed, for $d=1$ this follows from Lemma~\ref{lem:Equivalence_local_spaces} and Sobolev's embedding theorem. If $d\geq 2$ then $s>1$ and the claim follows from Sobolev's embedding theorem combined with that for any $\Omega$ the $\dot H^{1}(\Omega^N)$ and $\dot H^{1,N}(\Omega)$ seminorms are equivalent. By compactness of the embedding $H^\nu(\Omega^N) \subset L^2(\Omega^N)$, up to a subsequence, $u_n$ converges strongly to a function $P$ in $L^2(\Omega^N)$. Since $\|u_n\|_{L^2(\Omega^N)}=1$ we have that $\|P\|_{L^2(\Omega^N)}=1$.

	On the other hand, by Poincar\'e's inequality for $\dot{H}^s(\Omega)$ (combining~\cite[Theorem~8.11]{LiebLoss_book} and~\cite[Lemma~2.2]{Hurri_etal_13}) 
	\begin{align*}
		\|u_n\|_{\dot{H}^{s,N}(\Omega)}^2
		&=
		\sum_{j=1}^N \int_{\Omega^{N-1}}\|u_n(x_j; \sx')\|^2_{\dot{H}^s_{\bx_j}(\Omega)}d\sx'\\
		&\geq
		C\sum_{j=1}^N \int_{\Omega^{N}}\bigl|u_n(\sx)- P^{(n)}_j(\sx)\bigr|^2\,d\sx,
	\end{align*}
	where $P_j^{(n)}(\sx)$ is a polynomial in $\bx_j$ of degree $\leq \lceil s-1\rceil$. 
	In fact, the polynomial can be written explicitly as 
	\begin{equation}
		P_j^{(n)}(\sx) = \sum_{|\beta|\leq \lceil s-1\rceil} \bx_j^\beta \langle \varphi_\beta(\bx_j), u_n(\bx_j; \sx')\rangle_{L^2_{\bx_j}(\Omega)}
	\end{equation}
	for universal functions $\varphi_\beta\in C^\infty(\Omega)$. Since $u_n$ converges strongly in $L^2(\Omega^N)$, we can conclude that $P^{(n)}_j(\sx) \to P_j(\sx)$ strongly in $L^2(\Omega^N)$ and the limit is again a polynomial in $\bx_j$ of degree $\leq \lceil s-1\rceil$. The assumption \eqref{eq:contradicting_seq} allows us to identify the limiting functions and we find that
	\begin{equation*}
		P(\sx)=P_j(\sx)\mbox{ in } L^2(\Omega^N), \quad  \forall j. 
	\end{equation*}

	Thus the function $P(x)$ is a polynomial in each variable $\bx_j$ (of degree $\leq \lceil s-1\rceil$). 
	By Lemma~\ref{lem:multivariate-polynomial}, $P(x)$ is a multivariate polynomial whose degree in each $\bx_j$ is $\leq \lceil s-1\rceil$.

	We now want to use that $u_n=0$ on $\bDelta_k$ to prove that $P=0$ on $\bDelta_k$. Once this is done, then Lemma~\ref{lem:Polynomial_vanishing} implies that $P\equiv 0$ if $N \ge \lceil s \rceil^d k$. 
	This contradicts that $\|P\|_{L^2(\Omega^N)}=1$ and hence completes our proof. Note that if we can prove that $P\equiv 0$ in some open subset this is sufficient, in particular we can find some open cube $Q\subseteq \Omega$ and consider instead $u_n$ and $P$ restricted to $Q^N$.

	We consider the diagonal $\bx_1=\bx_2=\ldots = \bx_k$; the other cases are treated identically.
	By Lebesgue's differentiation theorem it suffices to prove that
	\begin{equation}\label{eq:Lebesgue_differentiation}
		\lim_{\delta\to 0}\frac{1}{\delta^{d(k-1)}}\int_{Q^{N-k+1}}\int_{\max_{j\leq k}|\bx_1-\bx_j|< \delta}|P(\bx_1, \ldots, \bx_k; \sx')| \,d\sx =0.
	\end{equation}
	By Fatou's lemma we have for any $\delta>0$ that
	\begin{equation}\label{eq:Fatou_diagonals}
	\begin{aligned}
		\int_{Q^{N-k+1}}&\int_{\max_{j\leq k}|\bx_1-\bx_j|< \delta}|P(\bx_1, \ldots, \bx_k; \sx')| \,d\sx\\
		&\leq 
		\lim_{n\to \infty}
		\int_{Q^{N-k+1}}\int_{\max_{j\leq k}|\bx_1-\bx_j|< \delta}|u_n(\bx_1, \ldots, \bx_k; \sx')| \,d\sx.
	\end{aligned}
	\end{equation}

	Since $u_n=0$ on $\bDelta_k$ it holds that
	\begin{equation}\label{eq:Holder_step1}
	\begin{aligned}
		\int_{Q^{N-k+1}}&\int_{\max_{j\leq k}|\bx_1-\bx_j|< \delta}|u_n(\bx_1, \ldots, \bx_k; \sx')| \,d\sx\\
		&=
		\int_{Q^{N-k+1}}\int_{\max_{j\leq k}|\bx_1-\bx_j|< \delta}|u_n(\bx_1, \ldots, \bx_k; \sx')-u_n(\bx_1, \ldots, \bx_1; \sx')| \,d\sx.
	\end{aligned}
	\end{equation}
	By Lemma~\ref{lem:Equivalence_local_spaces}, any $u\in L^2(Q^{l})$ with $\|u\|_{\dot{H}^{s,l}(Q)}<\infty$ satisfies that $u\in H^s(Q^{l})$ and moreover there is a constant $C$ depending only on $Q, l, s$ such that
	\begin{equation}\label{eq:Norm_equiv_Poincare}
		\|u\|_{H^s(Q^{l})}\leq C\bigl(\|u\|_{L^2(Q^l)}+\|u\|_{\dot{H}^{s,l}(Q)}\bigr).
	\end{equation}
	If $2s>dl$,
	by Sobolev's embedding theorem (see for instance~\cite[Theorem~8.2]{Hitchhikers}), 
	there is for any $\gamma\in \bigl(0, \min\bigl\{1, \frac{2s-dl}{2}\bigr\}\bigr)$
	a constant $C$ so that 
	\begin{equation*}
		\|u\|_{C^{0,\gamma}(Q^l)}\leq C \|u\|_{H^s(Q^l)}, \quad \mbox{for all }u\in H^s(Q^l).
	\end{equation*}
	By assumption $2s>d(k-1)$, and hence we can apply this result to the function
	\begin{equation*}
		(\bx_2, \ldots, \bx_k)\mapsto u_n(\bx_1, \ldots, \bx_k; \sx')
	\end{equation*} 
	(whose $\dot{H}^{s,k-1}(Q)$-seminorm is bounded for a.e.~$(\bx_1, \sx')$). Equation~\eqref{eq:Holder_step1} then implies that
	\begin{align*}
		\int_{Q^{N-k+1}}&\int_{\max_{j\leq k}|\bx_1-\bx_j|< \delta}|u_n(\bx_1, \ldots, \bx_k; \sx')| \,d\sx\\
		&\leq
		C\int_{Q^{N-k+1}}\int_{\max_{j\leq k}|\bx_1-\bx_j|< \delta} \|u_n(\bx_1, \sx''; \sx')\|_{H^s_{\sx''}(Q^{k-1})}|\sx''-\sx''_1|^\gamma d\sx'' d\bx_1 d\sx',
	\end{align*}
	where we set $\sx''=(\bx_2, \ldots, \bx_k)$ and $\sx''_1=(\bx_1, \ldots, \bx_1)$. Applying~\eqref{eq:Norm_equiv_Poincare} and H\"older's inequality yields
	\begin{align*}
		\int_{Q^{N-k+1}}&\int_{\max_{j\leq k}|\bx_1-\bx_j|< \delta}|u_n(\bx_1, \ldots, \bx_k; \sx')| \,d\sx
		\leq
		C\delta^{d(k-1)+\gamma}\bigl(\|u_n\|_{L^2(Q^N)}+\|u_n\|_{\dot{H}^{s,N}(Q)}\bigr).
	\end{align*}
	Since $\|u_n\|_{L^2(Q^N)}+\|u_n\|_{\dot{H}^{s,N}(Q)}\leq C$ and $\gamma>0$, we arrive at~\eqref{eq:Lebesgue_differentiation} which completes the proof of Theorem~\ref{thm:Poincare}.
\end{proof}

We finally note that the many-body nature of the wave functions is crucial for Theorem~\ref{thm:Poincare} to hold. The following example shows that  the requirement that the particle number $N$ is large, 
in fact typically strictly larger than $k$, is necessary. 

\begin{proposition}[Counterexample to the $k$-body case]\label{prop:counterex}
Theorem~\ref{thm:Poincare} cannot hold for $N < k$, or for $N = k$ if $s$ is integer and
$$
	\max\{d,2\}(k-1) < 2s < (d+k)(k-1).
$$
Replacing the condition $u\rvert_{\scriptbDelta_k}=0$ by the stronger condition
$$
	u \in \cH_{0,k}^{s,N}(\R^d) := \overline{\bigl\{ \Psi \in C_c^\infty(\R^{dN} \setminus \bDelta_k) \bigr\} }^{\, H^s(\R^{dN})},
$$
or
$$
	u \in \cH_{W,k}^{s,N}(\R^d) := \biggl\{  \Psi \in H^s(\R^{dN}):  \int_{\R^{dN}} W_{s,k} |\Psi|^2 < \infty \biggr\},
$$
with the $k$-particle generalization of $W_s$,
$$
	W_{s,k}(\sx) := \sum_{\substack{A \subseteq \{1,\ldots,N\}\\|A|=k}} \Bigl( \sum_{\substack{j,l\in A\\j<l}} |\bx_j-\bx_l|^2 \Bigr)^{-2s},
$$
does not help.
\end{proposition}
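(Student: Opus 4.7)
The plan is to exhibit explicit, mostly polynomial counterexamples in each regime described in the proposition.

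\textbf{The trivial case $N < k$.} Here no $k$ distinct indices exist in $\{1,\dots,N\}$, so $\bDelta_k \cap (\R^d)^N = \varnothing$ and the condition $u\rvert_{\scriptbDelta_k}=0$ is vacuous; moreover $W_{s,k} \equiv 0$ as an empty sum, and $\R^{dN}\setminus \bDelta_k = \R^{dN}$. The three spaces $\cH^{s,N}_k$, $\cH^{s,N}_{0,k}$ and $\cH^{s,N}_{W,k}$ thus all coincide with $H^s(\R^{dN})$, and the constant function $u \equiv 1$ on $\Omega^N$ (approximated by bump functions for the closure-defined spaces) trivially violates~\eqref{eq:Poincare_ineq_thm}.

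\textbf{The case $N=k$, basic space.} The key observation is that, for $s$ integer, the seminorm $\|\cdot\|_{\dot H^{s,N}(\Omega)}$ annihilates every polynomial whose degree in each $\bx_j$ is at most $s-1$, since it involves only $s$-th order pure derivatives in the individual variables $\bx_j$. It thus suffices to produce such a polynomial that vanishes on $\bDelta_k = \{\bx_1 = \dots = \bx_k\}$ without vanishing identically on $\Omega^k$. A natural candidate is the one-dimensional Vandermonde
\[
V(\sx) := \prod_{1\le i<j\le k}\bigl(x_i^{(1)} - x_j^{(1)}\bigr),
\]
where $x_i^{(1)}$ is the first coordinate of $\bx_i$. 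It has degree $k-1$ in each $\bx_j$, vanishes on the full diagonal with order exactly $\binom{k}{2}$, and the lower bound $2s > \max\{d,2\}(k-1) \ge 2(k-1)$ forces $s\ge k$, so that $k-1 \le s-1$ as required. This immediately falsifies~\eqref{eq:Poincare_ineq_thm} in the basic space $\cH^{s,N}_k$.

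\textbf{Strengthening to $\cH^{s,N}_{0,k}$ and $\cH^{s,N}_{W,k}$.} Set $\rho(\sx) := \bigl(\sum_{i<j}|\bx_i-\bx_j|^2\bigr)^{1/2}$, pick $\chi\in C^\infty(\R)$ with $\chi\equiv 0$ on $(-\infty,1]$ and $\chi\equiv 1$ on $[2,\infty)$, and fix $\eta\in C_c^\infty(\Omega^N)$ equal to $1$ on some sub-cube. I would consider the family
\[
u_\eps(\sx) := V(\sx)\,\chi\bigl(\rho(\sx)/\eps\bigr)\,\eta(\sx),
\]
which lies in $C_c^\infty(\R^{dN}\setminus \bDelta_k)$ and hence automatically belongs to both $\cH^{s,N}_{0,k}$ (by definition of the closure) and $\cH^{s,N}_{W,k}$ (since $u_\eps$ is supported away from $\bDelta_k$, where $W_{s,k}$ is locally bounded). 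Clearly $\|u_\eps\|_{L^2(\Omega^N)}\to \|V\eta\|_{L^2(\Omega^N)} > 0$ as $\eps\to 0^+$. For the seminorm, I would expand any $s$-th order pure derivative $\partial_{\bx_i}^\alpha u_\eps$ via Leibniz; since $\partial_{\bx_i}^{\alpha'}V\equiv 0$ whenever $|\alpha'|\ge k$, every surviving term forces at least $\beta := |\alpha|-|\alpha'| \ge s-k+1$ derivatives to hit the cutoff, contributing $\eps^{-\beta}$, while the surviving derivative of $V$ of order $s-\beta\le k-1$ has size $\rho^{k(k-1)/2-(s-\beta)}\sim \eps^{k(k-1)/2-s+\beta}$ in the transition shell $\{\rho\sim\eps\}$. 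The factors combine to $\eps^{k(k-1)/2-s}$ independently of $\beta$. Integrating over the shell, whose Lebesgue measure in $\R^{dN}$ is $O(\eps^{d(k-1)})$ because $\bDelta_k$ has codimension $d(k-1)$, yields
\[
\|u_\eps\|_{\dot H^{s,N}(\Omega)}^2 \;\lesssim\; \eps^{k(k-1) - 2s + d(k-1)} \;=\; \eps^{(k-1)(k+d) - 2s},
\]
which tends to $0$ exactly under the hypothesis $2s<(d+k)(k-1)$. Therefore $\|u_\eps\|_{\dot H^{s,N}}/\|u_\eps\|_{L^2}\to 0$, ruling out any Poincar\'e inequality on $\cH^{s,N}_{0,k}$ and $\cH^{s,N}_{W,k}$.

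\textbf{Main obstacle.} The delicate point is the exponent matching in the scaling estimate: one must simultaneously balance (i) the vanishing order $\binom{k}{2}$ of $V$ along the diagonal, (ii) the $\eps^{-s}$ cost of $s$ cutoff derivatives, partially offset because the limited $\bx_j$-degree of $V$ forces some derivatives onto the cutoff, and (iii) the codimension $d(k-1)$ of $\bDelta_k$. The cancellation produces precisely the exponent $(k-1)(k+d)-2s$, which is why the upper bound in the proposition is sharp for this construction: at the minimal admissible value $s=k$, the Vandermonde already has the maximal vanishing order available to a polynomial of degree $s-1$ in each variable, so raising $s$ beyond the stated range would require a qualitatively different polynomial to preserve the failure.
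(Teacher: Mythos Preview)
Your treatment of $N<k$ and of the basic space $\cH^{s,N}_k$ via the one-dimensional Vandermonde $V$ is correct and coincides with the paper's argument.

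For the stronger spaces there is a genuine gap in the cutoff construction. You take $\eta\in C_c^\infty(\Omega^N)$, so $\eta$ is not identically $1$ on $\Omega^N$. In the Leibniz expansion of $\partial_{\bx_i}^\alpha u_\eps$, the $\beta\ge s-k+1$ excess derivatives need not all land on $\chi(\rho/\eps)$; they may hit $\eta$ instead. The resulting terms $(\partial^{\alpha'}V)\,\chi_\eps\,(\partial^\gamma\eta)$ with $|\gamma|\ge 1$ are supported on the fixed region where $\eta$ varies and are $O(1)$ uniformly in $\eps$. Hence $\|u_\eps\|_{\dot H^{s,N}(\Omega)}$ does not tend to $0$; it converges to $\|V\eta\|_{\dot H^{s,N}(\Omega)}>0$. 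The fix is immediate: choose $\eta\in C_c^\infty(\R^{dN})$ with $\eta\equiv 1$ on $\Omega^N$ (the localizer is needed only to place $u_\eps$ in $H^s(\R^{dN})$, not for the local seminorm). Then on $\Omega^N$ one has $u_\eps=V\chi_\eps$, the $\eta$-terms vanish from $\|\cdot\|_{\dot H^{s,N}(\Omega)}$, and your shell estimate is correct.

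The paper takes a shorter route that avoids the cutoff entirely. It extends $V$ by a bump equal to $1$ on $\Omega^N$ and checks directly that $\int W_{s,k}|V|^2<\infty$ near the diagonal, using $|V|^2\lesssim \rho^{k(k-1)}$ and integrating $\rho^{k(k-1)-2s}$ over the $d(k-1)$ transverse directions; convergence is exactly the condition $2s<(d+k)(k-1)$, by the same exponent count you obtained. Thus the extended $V$ lies in $\cH_{W,k}^{s,N}\subseteq\cH_{0,k}^{s,N}$, and since the local seminorm on $\Omega$ sees only the polynomial $V$, one obtains a \emph{single} counterexample with $\|\cdot\|_{\dot H^{s,N}(\Omega)}$ equal to zero rather than a sequence with ratio tending to zero. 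Your approach (once repaired) is a legitimate alternative that builds explicit $C_c^\infty(\R^{dN}\setminus\bDelta_k)$ approximants; the paper's is more economical and yields the stronger conclusion that the optimal Poincar\'e constant vanishes exactly.
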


\begin{proof}
If $N < k$ there is no diagonal set $\bDelta_k$ and we may take the constant function
as a counterexample.
For $N = k$ we consider the polynomial
$$
	u(\bx_1,\ldots,\bx_k) := \prod_{1 \le j<l \le k} (x_{j,1} - y_{l,1}),
$$
for which, by the arithmetic mean-geometric mean inequality
and the triangle inequality,
$$
	|u(\sx)|^2 \lesssim \biggl( \sum_{j<l} |x_{j,1}-x_{l,1}|^2 \biggr)^{\binom{k}{2}}
	\le \biggl( \sum_{j<l} |\bx_j-\bx_l|^2 \biggr)^{\binom{k}{2}}
	\lesssim \biggl( \sum_{l\geq 2} |\bx_1-\bx_l|^2 \biggr)^{\binom{k}{2}}=: R^{2\binom{k}{2}},
$$
where $R \ge 0$ may serve as a radial coordinate on $\R^{d(k-1)}$ relative to $\bx_1$.
Hence, we have that
$$
	\int_{Q^k} W_{s,k} |u|^2
	\lesssim \int_Q \int_{Q^{k-1}} \frac{R^{2\binom{k}{2}}}{R^{2s}} d\bx_2 \ldots d\bx_k d\bx_1 
	\lesssim \int_0^C R^{k(k-1)-2s+d(k-1)-1} dR < \infty,
$$
if $d(k-1) < 2s < (d+k)(k-1)$. Thus (analogously to Lemma~\ref{lem:inclusion_of_spaces_W}, and by extension)
$$
	u \in \cH_{W,k}^{s,N}(\R^d)
	\subseteq \cH_{0,k}^{s,N}(\R^d).
$$
On the other hand
$$
	\|u\|_{\dot{H}^{s,k}(\Omega)}^2 
	= \sum_{j=1}^k \sum_{|\alpha|=m} \frac{m!}{\alpha!} \int_{Q^k} |D^\alpha_{\bx_j} u|^2
	= 0,
$$
if $s=m > k-1$.
\end{proof}

A particular case included in the above is $d=3$, $s=2$, $k=2$, 
with the function $u(\bx,\by) := x_1 - y_1$.

%\newpage

\appendix
\section{Equivalence of Sobolev spaces}\label{app:norms}

In this appendix we discuss the $N$-particle space
\begin{equation*}
	H^{s,N}(\Omega) := \bigl\{u \in L^2(\Omega^N): \|u\|_{\dot{H}^{s,N}(\Omega)}<\infty\bigr\}
\end{equation*}
and its relation to the standard Sobolev space $H^s(\Omega^N)$.

If $\Omega=\R^d$ the equivalence of the seminorms (and consequently the spaces)
\begin{align}\label{eq:Norm_equiv}
	c_{s,N}\|u\|_{\dot{H}^s(\R^{dN})}\leq \|u\|_{\dot{H}^{s,N}(\R^{d})}\leq C_{s,N}\|u\|_{\dot{H}^s(\R^{dN})}
\end{align}
can be seen via the Fourier transform. 
However, the constants in the equivalence depend on $N$ and $s$. 
In particular, if $s\neq 1$ the equivalence degenerates as $N$ tends to infinity; either $c_{s,N}\to 0$ or $C_{s,N}\to \infty$. Specifically, the sharp constants in~\eqref{eq:Norm_equiv} are given by
\begin{align*}
	c_{s,N}=\min\{1, N^{(1-s)/2}\}
	\quad \mbox{and} \quad
	C_{s,N}= \max\{1, N^{(1-s)/2}\}.
\end{align*}
Thus it is a slightly subtle question of what happens to these spaces in the many-body limit. 
An even more subtle question is what happens to the local versions of these spaces, i.e.\ when $\R^d$ is replaced by $\Omega \subsetneq \R^d$. 
For us, the following equivalence of the spaces in the case of cubes will suffice:

\begin{lemma}\label{lem:Equivalence_local_spaces}
	Let $u\in L^2(Q^N)$, $Q=[0, 1]^{d}$. There exist positive constants $c, C$ depending only on $d, s, N$ so that
	\begin{equation*}
		c\bigl(\|u\|_{L^2(Q^N)}+\|u\|_{\dot{H}^{s,N}(Q)}\bigr)\leq \|u\|_{H^s(Q^N)}\leq C \bigl(\|u\|_{L^2(Q^N)}+\|u\|_{\dot{H}^{s,N}(Q)}\bigr).
	\end{equation*}
\end{lemma}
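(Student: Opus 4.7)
The plan is to reduce the lemma to the analogous equivalence on $\R^{dN}$, which is transparent via the Fourier transform, and then to transfer the result to the cube by means of a bounded extension operator.

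On $\R^{dN}$, the Fourier characterizations
\begin{equation*}
\|u\|_{\dot{H}^s(\R^{dN})}^2 = \int_{\R^{dN}} |\bp|^{2s}|\widehat u(\bp)|^2\,d\bp,
\qquad
\|u\|_{\dot{H}^{s,N}(\R^d)}^2 = \int_{\R^{dN}} \Bigl(\sum_{j=1}^N |\bp_j|^{2s}\Bigr)|\widehat u(\bp)|^2\,d\bp
\end{equation*}
combined with the elementary power-mean comparison
\begin{equation*}
\min\{1,N^{1-s}\}\sum_{j=1}^N |\bp_j|^{2s}
\le \Bigl(\sum_{j=1}^N |\bp_j|^2\Bigr)^s
\le \max\{1,N^{1-s}\}\sum_{j=1}^N |\bp_j|^{2s}
\end{equation*}
yield a two-sided equivalence between $\|u\|_{H^s(\R^{dN})}$ and $\|u\|_{L^2(\R^{dN})} + \|u\|_{\dot{H}^{s,N}(\R^d)}$, with constants depending only on $s$ and $N$.

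To transfer this to the cube, fix a bounded linear extension operator $E\colon H^s(Q)\to H^s(\R^d)$ whose image is supported in a fixed bounded neighborhood of $Q$ (e.g.\ Stein's total extension operator, or a reflection-plus-cutoff construction). Build the $N$-fold extension $\tilde E\colon H^s(Q^N)\to H^s(\R^{dN})$ by applying $E$ successively in each coordinate block, $\tilde E = E_N\circ\cdots\circ E_1$, where $E_j$ extends only the $\bx_j$-variable and acts as the identity in the remaining $N-1$ blocks. Since each $E_j$ commutes with differentiation and with the Gagliardo double integral in any coordinate other than $\bx_j$, iterating its single-variable boundedness produces the two control estimates
\begin{equation*}
\|\tilde E u\|_{H^s(\R^{dN})} \le C\,\|u\|_{H^s(Q^N)},
\qquad
\|\tilde E u\|_{L^2(\R^{dN})} + \|\tilde E u\|_{\dot{H}^{s,N}(\R^d)} \le C\bigl(\|u\|_{L^2(Q^N)} + \|u\|_{\dot{H}^{s,N}(Q)}\bigr),
\end{equation*}
with constants depending only on $d$, $s$, $N$. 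Combining these with the whole-space equivalence above and the obvious restriction inequalities $\|u\|_{H^s(Q^N)} \le \|\tilde E u\|_{H^s(\R^{dN})}$ and $\|u\|_{\dot{H}^{s,N}(Q)} \le \|\tilde E u\|_{\dot{H}^{s,N}(\R^d)}$ then gives both inequalities of the lemma.

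The main technical point, and the step I expect to be the principal obstacle, is the second control estimate in the transfer: one must verify that the extension in each single coordinate $\bx_k$ preserves the partial Sobolev seminorm in every other coordinate $\bx_j$. For integer $s$ this follows immediately by writing $\|u\|_{\dot{H}^{s,N}(Q)}^2$ as a sum of $L^2$-norms of pure $\bx_j$-derivatives, applying Fubini, and using the $L^2$-boundedness of $E_k$, which commutes with $D^\alpha_{\bx_j}$ for $k\ne j$. For fractional $s=m+\sigma$ the same scheme is carried out slicewise on the Gagliardo double integral in $\bx_j$, after which Fubini-integration in the remaining coordinates absorbs the slicewise $L^2(Q)+\dot{H}^s(Q)$ bound for $E_k$ into the full $L^2(Q^N)+\dot{H}^{s,N}(Q)$ norm of $u$; a fixed smooth cutoff ensures $\tilde E u$ has compact support in $\R^{dN}$ without spoiling these iterated bounds.
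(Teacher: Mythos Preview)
Your approach is essentially the paper's: reduce to the Fourier-side equivalence on $\R^{dN}$, then transfer to $Q^N$ via a coordinate-wise extension that is bounded simultaneously in $L^2$ and in the partial seminorm $\dot H^{s,N}$; the paper carries this out explicitly with higher-order reflections (its Lemma~A.2), and you correctly identify the second control estimate as the only nontrivial step. One small caveat: your \emph{first} control estimate $\|\tilde E u\|_{H^s(\R^{dN})}\le C\|u\|_{H^s(Q^N)}$ for the tensorized extension does not follow from single-variable boundedness plus commutation alone (the full $\R^{dN}$ Gagliardo kernel does not factor), so for the left inequality it is cleaner to use the ordinary Stein extension on the Lipschitz domain $Q^N$ instead of $\tilde E$ --- the paper, which deduces the lemma ``immediately'' from the extension lemma, is equally terse on this point.
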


Lemma~\ref{lem:Equivalence_local_spaces} is an immediate consequence of the equivalence 
\eqref{eq:Norm_equiv}
of the two seminorms on~$\R^{dN}$ and the following extension lemma:
\begin{lemma}\label{lem:Extension_on_cubes}
	Let $u\in L^2(Q^N)$, $Q=[0, 1]^{d},$ and assume that $\|u\|_{L^2(Q^N)}+\|u\|_{\dot{H}^{s,N}(Q)}<\infty$. There exists a function $\tilde u\in L^2(\R^{dN})$ with compact support satisfying
	\begin{equation*}
		\tilde u\bigl|_{Q^N}=u, 
		\quad \mbox{and} \quad
		\|\tilde u\|_{L^2(\R^{dN})}+\|\tilde u\|_{\dot{H}^{s,N}(\R^{d})}\leq C \bigl(\|u\|_{L^2(Q^N)}+\|u\|_{\dot{H}^{s,N}(Q)}\bigr),
	\end{equation*}
	where $C$ is a constant depending only on $s$, $d$ and $N$.
\end{lemma}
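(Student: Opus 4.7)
The plan is to construct $\tilde u$ by applying a single-particle extension operator separately in each of the $N$ arguments. Fix a bounded linear extension operator $E\colon H^s(Q)\to H^s(\R^d)$ for the unit cube $Q=[0,1]^d$ (by classical extension results for Lipschitz domains---Stein's universal extension, Calder\'on's extension, or iterated reflection across the faces of $Q$ which handles all $s>0$), postcomposed with multiplication by a fixed smooth cutoff so that $Eg$ is supported in $\{|\bx|\le 2\}$; the cutoff preserves the $H^s$ bound by standard multiplier estimates. In particular $E$ is simultaneously bounded on $L^2$, on $\dot H^s$, and on $H^s$. For $j\in\{1,\ldots,N\}$ let $E^{(j)}$ denote the operator acting as $E$ on the $j$-th argument and as the identity on the remaining $N-1$ arguments. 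These operators mutually commute because they act on disjoint coordinates, and I would set
$$\tilde u := E^{(1)}E^{(2)}\cdots E^{(N)}u.$$
Then $\tilde u|_{Q^N}=u$ and $\operatorname{supp}\tilde u\subseteq\{|\bx|\le 2\}^N$, so $\tilde u$ has compact support in $\R^{dN}$.

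The bound on $\|\tilde u\|_{L^2(\R^{dN})}$ follows by iterating Fubini and the $L^2$-boundedness of each $E^{(j)}$, giving $\|\tilde u\|_{L^2(\R^{dN})}\le C^N\|u\|_{L^2(Q^N)}$. For the seminorm I would bound the $j$-th contribution to $\|\tilde u\|_{\dot H^{s,N}(\R^d)}^2$ separately. Writing $v_j:=\prod_{l\ne j}E^{(l)}u$ so that $\tilde u=E^{(j)}v_j$, the $H^s$-boundedness of $E$ applied in the $\bx_j$-variable gives for a.e.\ $\sx_j'\in\R^{d(N-1)}$ the pointwise estimate
$$\|\tilde u(\,\cdot\,;\sx_j')\|_{\dot H^s_{\bx_j}(\R^d)}^2 \le C\bigl(\|v_j(\,\cdot\,;\sx_j')\|_{L^2(Q)}^2 + \|v_j(\,\cdot\,;\sx_j')\|_{\dot H^s(Q)}^2\bigr).$$
Integrating in $\sx_j'$ over $\R^{d(N-1)}$, the first term contributes at most $C\|u\|_{L^2(Q^N)}^2$ by the $L^2$-boundedness of the remaining extensions $E^{(l)}$, $l\ne j$.

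The crucial step is controlling the second term. The key observation is that the $\dot H^s_{\bx_j}(Q)$-seminorm of $v_j$ is a quadratic form in the differences $v_j(\bx_j;\sx_j')-v_j(\by_j;\sx_j')$ (in the fractional case) or in the derivatives $D^\alpha_{\bx_j}v_j$ (in the integer case). By linearity, both of these commute with the extensions $E^{(l)}$ for $l\ne j$, which act only in the other coordinates. Hence Fubini and the $L^2$-boundedness of each $E^{(l)}$ yield
$$\int_{\R^{d(N-1)}}\|v_j(\,\cdot\,;\sx_j')\|_{\dot H^s(Q)}^2\,d\sx_j' \le C\int_{Q^{N-1}}\|u(\,\cdot\,;\sx_j')\|_{\dot H^s(Q)}^2\,d\sx_j'.$$
Summing over $j$ then gives $\|\tilde u\|_{\dot H^{s,N}(\R^d)}\le C(\|u\|_{L^2(Q^N)}+\|u\|_{\dot H^{s,N}(Q)})$, with constants depending on $s,d,N$ but independent of $u$. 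The main technical obstacle is securing a single-particle extension operator $E$ that is simultaneously bounded on $H^s(Q)$ and on $L^2(Q)$ for the given (possibly fractional) exponent $s$; once this ingredient is in hand, the tensorization argument above reduces the multi-particle statement to a clean repeated application of Fubini, and no multi-particle norm equivalence is invoked (so there is no circularity with Lemma~\ref{lem:Equivalence_local_spaces}).
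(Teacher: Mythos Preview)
Your proof is correct and takes a cleaner route than the paper's. The paper constructs the extension explicitly by higher-order reflection across one face of $Q^N$ at a time --- $2dN$ one-dimensional reflection steps in total --- and at each step verifies directly (and at some length) that the $\dot H^{s,N}$-type quantity is preserved, splitting the fractional double integral according to whether the points lie in the original or the reflected part of the cube. You instead take as a black box a single-particle extension operator $E\colon H^s(Q)\to H^s(\R^d)$ that is simultaneously $L^2$-bounded, and tensorize: $\tilde u=E^{(1)}\cdots E^{(N)}u$. The structural observation that makes this work is that the $\dot H^{s,N}$-seminorm decouples the particles: the $j$-th contribution is an $L^2$-integral in the remaining variables $\sx_j'$, so only $L^2$-boundedness of the $E^{(l)}$ for $l\neq j$, together with their commutation with $D^\alpha_{\bx_j}$ (or with taking differences in $\bx_j$), is needed there. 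Your approach is shorter and more modular, cleanly separating the classical single-particle extension from the many-body tensorization step; the paper's is fully self-contained (it effectively re-derives the reflection extension and checks the fractional estimates by hand) and makes no appeal to outside extension theorems. One minor comment: the assertion that $E$ is ``bounded on $\dot H^s$'' is imprecise --- after cutoff, constants no longer have vanishing seminorm --- but your argument in fact only uses the $H^s$- and $L^2$-boundedness of $E$, so nothing is affected.
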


\begin{proof}
	We shall prove the lemma by using higher-order reflection through one side of the hypercube $Q$ at a time.
	To this end we recall that if $v\in C^n([0, 1])$, for some $n\geq 0$, we can construct an explicit extension $\tilde v\in C^n((-\infty, 1])$ satisfying $\tilde v(x)=0$ when $x<-\delta$. Namely, set
	\begin{equation*}
		\tilde v(x)=
		\begin{cases}
			v(x), \quad &\mbox{if } x\in [0, 1],\\
			\varphi(x)\sum_{j=1}^{n+1} \lambda_j v(-x/j), \quad &\mbox{if } x<0,
		\end{cases}
	\end{equation*}
	where $\varphi \in C^\infty((-\infty, 0])$ such that $\varphi(x)\equiv 0$ for $x<-\delta$ and $\varphi(x)\equiv 1$ in $[-\delta/2, 0]$. What remains is to verify that we can choose the $\lambda_j$'s so that $\tilde v\in C^n$. But if we differentiate $\tilde v$ for $x$ away from zero we see that the system of equations that we need the $\lambda_j$ to satisfy to get continuity of the derivatives across $x=0$ is 
	\begin{equation*}
		\Bigl[(-j)^{1-i}\Bigr]_{i,j=1}^{n+1}\left( \begin{matrix}
			\lambda_1 \\ \vdots \\ \lambda_{n+1}
		\end{matrix}\right) =\left( \begin{matrix}
			1\\ \vdots \\ 1
		\end{matrix}\right).
	\end{equation*}
	But the determinant of this matrix is non-zero (it is a Vandermonde matrix) and hence there exists a unique solution $(\lambda_1, \ldots, \lambda_{n+1})$.

	We shall now prove that we can use this one-dimensional extension repeatedly to construct an extension of $u$ to $\R^{dN}$. The idea is to use the one-dimensional result one coordinate at a time and show that the new function in each step has the quantity corresponding to the $\dot{H}^{s,N}$-seminorm controlled by that of $u$. 

	Without loss we can assume that $u\in C^n(Q^N)$ (the construction is stable under approximation), where we take $n=\lceil s\rceil$. Consider $u(x_1; \sx')$, $x_1\in [0, 1]$ and $\sx'\in [0, 1]^{dN-1}$. And apply the above lemma for each fixed $\sx'$, that is, we define $v_1$ by
	\begin{equation*}
		v_1(x_1; \sx')=
		\begin{cases}
			u(x_1; \sx'), \quad &\mbox{if } x_1\in [0, 1],\\
			\varphi(x_1)\sum_{j=1}^{n+1} \lambda_j u(-x_1/j; \sx'), \quad &\mbox{if } x_1\in [-1, 0).
		\end{cases}
	\end{equation*}
	It is a simple calculation to use Sobolev's embedding theorem to prove that we can bound the $L^p$-norm of $l$-th order derivatives of $v_1$ by the corresponding one for $u$ if $l\leq n$. We need to prove that also the fractional order seminorm is preserved. That is, we wish to show that, with $s=m+\sigma$ and $Q'=[-1, 1]\times [0, 1]^{d-1}$,
	\begin{equation}\label{eq:NormBound_reflection}
	\begin{aligned}
		&\int_{Q^{N-1}} \iint_{Q'\times Q'} \frac{|D^\alpha_{\bx_1} v_1(\bx_1;\sx')-D^\alpha_{\by_1} v_1(\by_1; \sx')|^2}{|\bx_1-\by_1|^{d+2\sigma}}d\bx_1 d\by_1 d\sx'\\[5pt]
		&\quad+
		\sum_{i=2}^N\int_{Q'\times Q^{N-2}}\iint_{Q\times Q} \frac{|D^\alpha_{\bx_i} v_1(\bx_i;\sx')-D^\alpha_{\by_i} v_1(\by_i; \sx')|^2}{|\bx_i-\by_i|^{d+2\sigma}}d\bx_i d\by_i d\sx'\\
		&\leq 
		C \bigl(\|u\|_{\dot{H}^{s,N}(Q)}^2+ \|u\|_{L^2(Q^N)}^2\bigr),
	\end{aligned}
	\end{equation}
	for all multi-indices $|\alpha|=m$. If we can prove this inequality, then by repeating the procedure to extend $v_1$ to $x_1>1$ the same proof gives that we can bound the corresponding $\dot{H}^{s,N}$ quantity in terms of that of $v_1$, and hence $u$. By repeating the procedure for each coordinate at a time we, after $2dN$ reflections, find a function $\tilde u\in L^2(\R^{dN})$ satisfying the claims of the lemma. Thus all that remains is to prove~\eqref{eq:NormBound_reflection}.

	\medskip

	We start with the first term which is also the most difficult:
	\begin{equation}\label{eq:Integral_split_extension}
	\begin{aligned}
		&\int_{Q^{N-1}} \iint_{Q'\times Q'} \frac{|D^\alpha_{\bx_1} v_1(\bx_1;\sx')-D^\alpha_{\by_1} v_1(\by_1; \sx')|^2}{|\bx_1-\by_1|^{d+2\sigma}}d\bx_1 d\by_1 d\sx'\\[5pt]
		&\quad=
		\int_{Q^{N-1}} \biggl[
		\iint_{Q\times Q} \frac{|D^\alpha_{\bx_1} u(\bx_1;\sx')-D^\alpha_{\by_1} u(\by_1; \sx')|^2}{|\bx_1-\by_1|^{d+2\sigma}}d\bx_1 d\by_1\\
		&\quad +
		2\iint_{Q\times (Q'\setminus Q)} \frac{|D^\alpha_{\bx_1} u(\bx_1;\sx')-\sum_{j=1}^{n+1}\lambda_j D^\alpha_{\by_1} (\varphi(y_{1,1}) u(-y_{1,1}/j, \by_1'; \sx'))|^2}{|\bx_1-\by_1|^{d+2\sigma}}d\bx_1 d\by_1\\
		&\quad +
		\iint_{(Q'\setminus Q)\times (Q'\setminus Q)} |\bx_1-\by_1|^{-d-2\sigma}\Bigl|\sum_{j=1}^{n+1}\lambda_j \Bigl(D^\alpha_{\bx_1} \bigl[\varphi(x_{1,1}) u(-x_{1,1}/j, \bx_1';\sx')\bigr]\\
		&\quad 
		-D^\alpha_{\by_1} \bigl[\varphi(y_{1,1}) u(-y_{1,1}/j, \by_1';\sx')\bigr]\Bigr)\Bigr|^2 d\bx_1 d\by_1 \biggr]d\sx'.
	\end{aligned}
	\end{equation}
	Clearly the integral over $Q\times Q$ is bounded by $\|u\|_{\dot{H}^{s,N}(Q)}$. We treat the two remaining terms separately.
	In order to bound the integral over $Q\times (Q'\setminus Q)$ we write 
	\begin{align*}
		Q_1 &=\{\bx\in Q'\setminus Q: x_1>-\delta/2\},\\
		Q_2 &=\{\bx\in Q'\setminus Q: x_1\leq-\delta/2\}.
	\end{align*}
	Thus we can bound the second integral in~\eqref{eq:Integral_split_extension} as follows:
	\begin{align*}
		&\int_{Q^{N-1}}\iint_{Q\times (Q'\setminus Q)} \frac{|D^\alpha_{\bx_1} u(\bx_1;\sx')-\sum_{j=1}^{n+1}\lambda_j D^\alpha_{\by_1} (\varphi(y_{1,1}) u(-y_{1,1}/j, \by_1'; \sx'))|^2}{|\bx_1-\by_1|^{d+2\sigma}}d\bx_1 d\by_1 d\sx'\\
		&\quad=
		\int_{Q^{N-1}}\iint_{Q\times Q_1} \frac{|D^\alpha_{\bx_1} u(\bx_1;\sx')-\sum_{j=1}^{n+1}\lambda_j D^\alpha_{\by_1} (u(-y_{1,1}/j, \by_1'; \sx'))|^2}{|\bx_1-\by_1|^{d+2\sigma}}d\bx_1 d\by_1 d\sx'\\
		&\quad+
		\int_{Q^{N-1}}\iint_{Q\times Q_2} \frac{|D^\alpha_{\bx_1} u(\bx_1;\sx')-\sum_{j=1}^{n+1}\lambda_j D^\alpha_{\by_1} (\varphi(y_{1,1}) u(-y_{1,1}/j, \by_1'; \sx'))|^2}{|\bx_1-\by_1|^{d+2\sigma}}d\bx_1 d\by_1 d\sx'\\
		&\quad\leq
		\int_{Q^{N-1}}\iint_{Q\times Q_1} \frac{|D^\alpha_{\bx_1} u(\bx_1;\sx')-\sum_{j=1}^{n+1}\lambda_j(-j)^{-\alpha_1} D^\alpha_{\by_1}u(-y_{1,1}/j, \by_1'; \sx')|^2}{|\bx_1-\by_1|^{d+2\sigma}}d\bx_1 d\by_1 d\sx'\\
		&\quad+
		\frac{C}{\delta^{d+2\sigma}}\int_{Q^{N-1}}\iint_{Q\times Q_2} |D^\alpha_{\bx_1} u(\bx_1;\sx')-\sum_{j=1}^{n+1}\lambda_j D^\alpha_{\by_1} (\varphi(y_{1,1}) u(-y_{1,1}/j, \by_1'; \sx'))|^2 d\bx_1 d\by_1 d\sx'.
	\end{align*}
	Using the triangle inequality and Sobolev's embedding theorem one finds that the second term is $\lesssim \|u\|_{L^2(Q^N)}^2+\|u\|_{\dot{H}^{s,N}(Q)}^2$. 
	Since $\sum_{j}\lambda_j (-j)^{-\alpha_1} = 1$ 
	for any $\alpha_1\leq m+1$, one obtains for the first integral 
	\begin{align*}
		&\int_{Q^{N-1}}\iint_{Q\times Q_1} \frac{|D^\alpha_{\bx_1} u(\bx_1;\sx')-\sum_{j=1}^{n+1}\lambda_j(-j)^{-\alpha_1} D^\alpha_{\by_1}u(-y_{1,1}/j, \by_1'; \sx')|^2}{|\bx_1-\by_1|^{d+2\sigma}}d\bx_1 d\by_1 d\sx'\\
		&\quad =
		\int_{Q^{N-1}}\iint_{Q\times Q_1} \frac{|\sum_{j=1}^{n+1}\lambda_j(-j)^{-\alpha_1}\bigl(D^\alpha_{\bx_1} u(\bx_1;\sx')-D^\alpha_{\by_1}u(-y_{1,1}/j, \by_1'; \sx')\bigr)|^2}{|\bx_1-\by_1|^{d+2\sigma}}d\bx_1 d\by_1 d\sx'\\
		&\quad \leq
		C\sum_{j=1}^{n+1}\int_{Q^{N-1}}\iint_{Q\times Q_1} \frac{|D^\alpha_{\bx_1} u(\bx_1;\sx')-D^\alpha_{\by_1}u(-y_{1,1}/j, \by_1'; \sx')|^2}{|\bx_1-\by_1|^{d+2\sigma}}d\bx_1 d\by_1 d\sx'\\
		&\quad \leq
		C\sum_{j=1}^{n+1}\int_{Q^{N-1}}\iint_{Q\times Q} \frac{|D^\alpha_{\bx_1} u(\bx_1;\sx')-D^\alpha_{\by_1}u(\by_1; \sx')|^2}{(|\bx_1'-\by_1'|^2+(x_{1,1}+jy_{1,1})^2)^{d/2+\sigma}} d\bx_1 d\by_1 d\sx'\\
		&\quad\leq C \|u\|_{\dot{H}^{s,N}(Q)}^2.
	\end{align*}
	In the last step we used the inequality $(x+jy)^2\geq (x-y)^2$ for $x,y\geq 0$ and $j\geq 1$.

	\medskip

	For the last integral in~\eqref{eq:Integral_split_extension} we have
	\begin{align*}
		&\int_{Q^{N-1}}
		\iint_{(Q'\setminus Q)\times (Q'\setminus Q)} |\bx_1-\by_1|^{-d-2\sigma}\Bigl|\sum_{j=1}^{n+1}\lambda_j \Bigl(D^\alpha_{\bx_1} \bigl[\varphi(x_{1,1}) u(-x_{1,1}/j, \bx_1';\sx')\bigr]\\
		&\qquad 
		-D^\alpha_{\by_1} \bigl[\varphi(y_{1,1}) u(-y_{1,1}/j, \by_1';\sx')\bigr]\Bigr)\Bigr|^2 d\bx_1 d\by_1 d\sx'\\
		&\quad=
		\int_{Q^{N-1}}\iint_{(Q'\setminus Q)\times (Q'\setminus Q)} |\bx_1-\by_1|^{-d-2\sigma}
		\Bigl|\sum_{j=1}^{n+1}\sum_{\gamma+\beta=\alpha_1}\lambda_j(-j)^{-\beta}\\
		&\qquad \times \Bigl( \varphi^{(\gamma)}(x_{1,1}) D_{\bx_1}^{\alpha'}u(-x_{1,1}/j, \bx_1';\sx')
		-\varphi^{(\gamma)}(y_{1,1}) D_{\by_1}^{\alpha'}u(-y_{1,1}/j, \by_1';\sx')\Bigr)\Bigr|^2 d\bx_1 d\by_1 d\sx',
	\end{align*}
	where we set $\alpha'$ as the multi-index $\alpha$ but with $\alpha_1$ exchanged for $\beta$. By the triangle inequality and the fact that $\sum_{j}\lambda_j(-j)^{-\beta}=1$ the integral is smaller than
	\begin{align*}
		&C\sum_{j=1}^{n+1}\int_{Q^{N-1}}\iint_{(Q'\setminus Q)\times (Q'\setminus Q)} |\bx_1-\by_1|^{-d-2\sigma}\Bigl|\sum_{\gamma+\beta=\alpha_1}\Bigl(\varphi^{(\gamma)}(x_{1,1}) D_{\bx_1}^{\alpha'}u(-x_{1,1}/j, \bx_1';\sx')\\
		&\qquad 
		-\varphi^{(\gamma)}(y_{1,1}) D_{\by_1}^{\alpha'}u(-y_{1,1}/j, \by_1';\sx')\Bigr)\Bigr|^2 d\bx_1 d\by_1 d\sx'\\
		&\quad\leq 
		C\sum_{j=1}^{n+1}\int_{Q^{N-1}}\iint_{Q\times Q} (|\bx_1'-\by_1'|^2+j^2(x_{1,1}-y_{1,1})^2)^{-d/2-\sigma}\\
		&\qquad \times 
		\Bigl|\sum_{\gamma+\beta=\alpha_1}\Bigl(\varphi^{(\gamma)}(-j x_{1,1}) D_{\bx_1}^{\alpha'}u(\bx_1; \sx')
		-\varphi^{(\gamma)}(- j y_{1,1}) D_{\by_1}^{\alpha'}u(\by_1; \sx')\Bigr)\Bigr|^2 d\bx_1 d\by_1 d\sx'\\
		&\quad\leq 
		C\sum_{j=1}^{n+1}\int_{Q^{N-1}}\iint_{Q\times Q} \frac{\bigl|D^\alpha_{\bx_1}\bigl[\varphi(-j x_{1,1})u(\bx_1; \sx')\bigr]
		-D^\alpha_{\by_1}\bigl[\varphi(- j y_{1,1}) u(\by_1; \sx')\bigr]\bigr|^2}{|\bx_1-\by_1|^{d+2\sigma}} d\bx_1 d\by_1 d\sx'\\
		&\quad \leq
		C \|u\|_{\dot{H}^{s,N}(Q)}^2,
	\end{align*}
	where we used that $\|\psi u\|_{\dot{H}^s(Q)}\leq C_\psi \|u\|_{\dot{H}^s(Q)}$ for any $\psi\in C^\infty(Q).$

	\medskip

	To show that the remaining terms in~\eqref{eq:NormBound_reflection} are $\lesssim \|u\|_{L^2}^2+\|u\|_{\dot{H}^{s,N}}^2$ one can proceed in an almost identical manner. The main difference is that in these terms the differentiation is with respect other variables than the variable in which the extension has been made, and the splitting of the integrals is slightly different. However, in the end this only simplifies each step of the proof.
\end{proof}

%\newpage

\section{Spaces of contact interaction}\label{app:subspaces}

We consider in the following only 2-particle diagonals $\bDelta$, for simplicity, 
however analogous statements can be made for the case of $k$-particle diagonals.

Define for $N\ge 2$ the restricted $N$-particle spaces
\begin{align*}
	\cH^{s,N}_W(\R^d) &:= \biggl\{  \Psi \in H^s(\R^{dN}):  \int_{\R^{dN}} W_s |\Psi|^2 < \infty \biggr\}, \\
	\cH^{s,N}_0(\R^d) &:= \overline{\bigl\{ \Psi \in C_c^\infty(\R^{dN} \setminus \bDelta) \bigr\} }^{\, H^s(\R^{dN})}, \\
	\cH^{s,N}(\R^d) &:= \overline{\bigl\{ \Psi \in C_c^\infty(\R^{dN}): \Psi\rvert_{\scriptbDelta}=0 \bigr\} }^{\, H^s(\R^{dN})}.
\end{align*}
Then we have for all $s>0$ the chain of inclusions
$$
	\cH^{s,N}_W(\R^d) \subseteq \cH^{s,N}_0(\R^d) \subseteq \cH^{s,N}(\R^d) \subseteq H^s(\R^{dN}).
$$
The latter two inclusions are trivial while the first one will be proved below.
Moreover, for $2s < d$ all four spaces are equal by the 
Hardy--Rellich inequality (see e.g.~\cite{Yafaev-99}):
$$
	\int_{\R^{dN}} |\bx_1-\bx_2|^{-2s}|\Psi(\bx_1;\sx')|^2 \,d\bx_1 d\sx'
	\le C\int_{\R^{d(N-1)}} \|\Psi\|_{\dot{H}^s_{\bx_1}(\R^d)}^2 d\sx'
	\le C\|\Psi\|_{H^s(\R^{dN})}^2.
$$ 
In the critical case $2s = d$ we still have 
$\cH_0^{s,N}(\R^d) = \cH^{s,N}(\R^d) = H^s(\R^{dN})$,
as is also shown below, but a strict inclusion
$\cH_W^{s,N}(\R^d) \subsetneq \cH_0^{s,N}(\R^d)$, as illustrated by
$$
	\Psi(\sx) = e^{-|\sx|^2}
$$
which is in $H^s(\R^{dN})$ but not in $\cH_W^{s,N}(\R^d)$ 
due to the non-integrability of $W_s$.
For $2s > d$ and $s-d/2 \notin \Z$
it again holds by the Hardy--Rellich inequality that 
$\cH_W^{s,N}(\R^d) = \cH_0^{s,N}(\R^d)$,
while not necessarily $\cH_0^{s,N}(\R^d) = \cH^{s,N}(\R^d)$, as with the example
$$
	\Psi(x_1,x_2) = (x_1-x_2)e^{-|\sx|^2}
$$
which is in $\cH^{s,2}(\R^d)$ but not in $\cH_W^{s,2}(\R^d)$ for $s=2$ and $d=1$.

Let $\chi_\eps^{(*)}(\sx) := \prod_{1 \le j < k \le N} \varphi_\eps^{(*)}(\bx_j-\bx_k)$
where $\varphi_\eps(\bx) = \varphi(|\bx|/\eps)$
and $\varphi_\eps^*(\bx) = \varphi^*(\eps\ln|\bx|)$.
We take $\varphi^{(*)}$ as smooth functions from $\R$ to $[0,1]$ such that
$\varphi(x) = 0$ for $x \le 1$,
$\varphi(x) = 1$ for $x \ge 2$, and
$\varphi^*(x) = 0$ for $x \le -2$,
$\varphi^*(x) = 1$ for $x \ge -1$.

\begin{lemma}\label{lem:norm_of_cutoff}
	Let $\Omega \subset \R^d$ be open and bounded. For all $s=m +\sigma > 0$, $d \ge 1$ and $N \ge 1$ it holds as $\eps \to 0$ that
	$$
		\| \chi_\eps \|_{\dot{H}^{s,N}(\Omega)} \le C \eps^{d/2-s},
		\qquad
		\| D^\alpha \chi_\eps \|_{\dot{H}^{\sigma,N}(\Omega)} \le C \eps^{d/2-|\alpha|-\sigma},
	$$
	while for $2s=d$
	$$
		\| \chi_\eps^* \|_{\dot{H}^{s,N}(\Omega)} \le C \eps^{1/2},
		\qquad
		\| D^\alpha \chi_\eps^* \|_{\dot{H}^{\sigma,N}(\Omega)} \le C \eps^{1/2}
	$$
	for $|\alpha| \le d/2-\sigma$.
\end{lemma}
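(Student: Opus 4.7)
The plan is to reduce each estimate to pointwise bounds on the derivatives of a single factor $\varphi_\eps^{(*)}$ and then carry out a Leibniz-rule bookkeeping on the $N$-particle product. A direct chain-rule computation yields for $|\beta|\ge 1$ the estimates
$$
|D^\beta \varphi_\eps(\bz)| \le C_\beta\,\eps^{-|\beta|}\,\1_{\eps\le|\bz|\le 2\eps},
\qquad
|D^\beta \varphi_\eps^*(\bz)| \le C_\beta\,\eps\,|\bz|^{-|\beta|}\,\1_{e^{-2/\eps}\le|\bz|\le e^{-1/\eps}},
$$
where the single power of $\eps$ in the second estimate arises from one chain-rule derivative of $\eps\ln|\bz|$, higher powers of $\eps$ being subdominant as $\eps\to 0$.

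For $\chi_\eps$ in the integer case, I fix $j$ and factorise $\chi_\eps = g(\bx_{\neq j})\,h_{\bx_{\neq j}}(\bx_j)$, with $|g|\le 1$ collecting the pair factors not involving $j$, and $h(\bx_j)=\prod_{k\neq j}\varphi_\eps(\bx_j-\bx_k)$. Applying Leibniz to $D^\alpha_{\bx_j} h$ produces a sum indexed by partitions $(\beta_k)_{k\neq j}$ of $\alpha$; each summand is pointwise bounded by $\prod_k \eps^{-|\beta_k|}\1_{|\bx_j-\bx_k|\le 2\eps}$. Squaring and integrating over $\Omega^N$ forces each $\bx_k$ with $\beta_k\ne 0$ into a ball of radius $2\eps$ about $\bx_j$, producing a factor $\eps^{rd}$ with $r=\#\{k:\beta_k\ne 0\}\ge 1$; combined with the $\eps^{-2|\alpha|}$ from the derivatives, the exponent $rd-2|\alpha|$ is minimised at $r=1$, giving
$$
\|D^\alpha \chi_\eps\|_{L^2(\Omega^N)}^2 \le C\,\eps^{d-2|\alpha|}
$$
for any multi-index $\alpha$. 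Summed over $j$ and $|\alpha|=m$, this is the integer-order case of the first claim.

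For the fractional part $\sigma\in(0,1)$, I split the Gagliardo double integral defining $\|\cdot\|_{\dot H^\sigma_{\bx_j}}$ at $|\bx_j-\by_j|=\eps$. On the far region, the triangle inequality yields an $\eps^{-2\sigma}\|D^\alpha\chi_\eps\|_{L^2}^2$ contribution, while on the near region Cauchy--Schwarz applied to the fundamental theorem of calculus gives $|f(\bx_j)-f(\by_j)|^2\le|\bx_j-\by_j|^2\int_0^1|\nabla f|^2\,dt$, turning the singular kernel into an integrable one and yielding $\eps^{2-2\sigma}\|\nabla D^\alpha\chi_\eps\|_{L^2}^2$. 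Applying the integer $L^2$-bound of the previous step with orders $|\alpha|$ and $|\alpha|+1$ respectively, both halves combine to the required $C\eps^{d-2|\alpha|-2\sigma}$; this proves the first stated inequality with $|\alpha|=m$ and the second with general $|\alpha|$.

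For the critical cutoff $\chi_\eps^*$, the analogous Leibniz expansion reduces each contribution to a radial integral $\eps^2\int_{e^{-2/\eps}}^{e^{-1/\eps}}r^{d-1-2|\beta|}\,dr$: this is exponentially small when $|\beta|<d/2$ and, through a logarithmic factor of order $1/\eps$, exactly $O(\eps)$ at the borderline $|\beta|=d/2$. The hypothesis $|\alpha|\le d/2-\sigma$ precisely keeps every Leibniz summand at or below this critical threshold. The main technical obstacle is that the naive near/far splitting at scale $\eps$ fails for $\chi_\eps^*$, since $\|\nabla D^\alpha\chi_\eps^*\|_{L^2}$ is superpolynomially large in $1/\eps$; instead, the splitting must be performed at the intrinsic scale $e^{-1/\eps}$, and the borderline contribution captured by a direct one-dimensional critical-Sobolev computation on the radial profile of $\varphi_\eps^*$, the logarithmic singularity on its support $[e^{-2/\eps},e^{-1/\eps}]$ exactly producing the $1/\eps$ needed to cancel one power of $\eps$ from the chain rule. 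Beyond this, the remainder of the proof is routine combinatorial bookkeeping, with all $N$-dependence harmlessly absorbed into the constants $C$ that the lemma allows to depend on $N$.
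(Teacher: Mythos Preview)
Your treatment of $\chi_\eps$ is correct and close to the paper's, though organised differently: the paper bounds the Gagliardo integrand directly via the Lipschitz estimate $|D^\alpha\chi_\eps(\bx;\sx')-D^\alpha\chi_\eps(\by;\sx')|\le C\eps^{-|\alpha|-1}|\bx-\by|$ on the support, while you route through a near/far splitting at scale $\eps$ and invoke the $L^2$ bounds on $D^\alpha\chi_\eps$ and $\nabla D^\alpha\chi_\eps$. Both yield $\eps^{d-2|\alpha|-2\sigma}$. Your Leibniz bookkeeping and the integer-order $L^2$ bound are likewise fine and match the paper's reasoning. The even-dimensional critical case for $\chi_\eps^*$ (where $\sigma=0$) also goes through by your radial integral argument, just as in the paper.

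The genuine gap is the odd-dimensional critical case $2s=d$, $\sigma=1/2$, $|\alpha|=(d-1)/2$. Your proposed fix of splitting at the intrinsic scale $e^{-1/\eps}$ handles the far region (indeed $e^{1/\eps}\|D^\alpha\chi_\eps^*\|_{L^2}^2\lesssim\eps^2$), but the near region still fails under the gradient bound: $\|\nabla D^\alpha\chi_\eps^*\|_{L^2}^2\sim\eps^2\int_{e^{-2/\eps}}^{e^{-1/\eps}}r^{-2}\,dr\sim\eps^2 e^{2/\eps}$, so the near contribution is $e^{-1/\eps}\cdot\eps^2 e^{2/\eps}=\eps^2 e^{1/\eps}\to\infty$. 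Your appeal to a ``one-dimensional critical-Sobolev computation on the radial profile'' does not repair this, because the $\dot H^{1/2}$ Gagliardo integral is genuinely $d$-dimensional in $\bx_j$ and does not reduce to the radial variable; the angular geometry relative to each $\bx_k$ enters essentially. The paper's proof of this case is correspondingly the longest part of the lemma: for $d=1$ it uses the Taylor estimate $|\chi_\eps^*(\bx;\sx')-\chi_\eps^*(\by;\sx')|\lesssim\eps\sum_k|\ln|\bx-\bx_k|-\ln|\by-\bx_k||$ and computes the resulting double integral explicitly, while for $d\ge 3$ it passes to spherical coordinates about $\bx_k$, splits by the \emph{angle} $\theta_k$ between $\bx-\bx_k$ and $\by-\bx_k$ (not by $|\bx-\by|$), and combines a Taylor-remainder bound for $\theta_k\in[0,\pi/2)$ with the crude supremum bound \eqref{eq:phi_star_bound} together with $|\bx-\by|\ge\sin(\theta_k/2)\max\{|\bx-\bx_k|,|\by-\bx_k|\}$ for $\theta_k\in[\pi/2,\pi]$. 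This angular decomposition is the missing idea in your proposal.
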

\begin{proof}
	For $\alpha \neq 0$ there are in $D^\alpha_{\bx_j} \chi_\eps$ a total of 
	$|\alpha|$ derivatives of functions $\varphi_\eps(\bx_j-\bx_k)$,
	$k \neq j$, and remaining factors involving the other particles.
	These factors are uniformly bounded while each derivative yields an additional factor $1/\eps$,
	while reducing the support in $\bx_j$ to $B_{2\eps}(\bx_k) \setminus B_{\eps}(\bx_k)$.
	Furthermore, we thus have
	$$
		|D^\alpha \varphi_\eps(\bx)| 
		\le C \eps^{-|\alpha|} \1_{B_{2\eps}(0) \setminus B_{\eps}(0)},
	$$
	$$
		|D^\alpha \varphi_\eps(\bx) - D^\alpha \varphi_\eps(\by)| 
		\le C \eps^{-|\alpha|-1} |\bx-\by| \1_{\bx,\by \in B_{2\eps}(0) \setminus B_{\eps}(0)},
	$$
	and for
	$B(j,\eps) = \cup_{k \neq j} B_{2\eps}(\bx_k) \setminus \cup_{k \neq j} B_{\eps}(\bx_k)$,
	$$
		|\chi_\eps(\bx_j;\sx') - \chi_\eps(\by_j;\sx')| 
		\le C \eps^{-1}|\bx_j-\by_j| 
		\1_{\bx_j,\by_j \in B(j,\eps)},
	$$
	and
	$$
		|D^\alpha \chi_\eps(\bx_j;\sx') - D^\alpha \chi_\eps(\by_j;\sx')| 
		\le C \eps^{-|\alpha|-1}|\bx_j-\by_j| 
		\1_{\bx_j,\by_j \in B(j,\eps)}.
	$$
	Hence, 
	$\|D^\alpha \chi_\eps\|_{L^2_{\bx_j}(\Omega)}^2 \lesssim \eps^{-2|\alpha|+d}$,
	and for any $0<\sigma<1$
	\begin{align*}
		\|D^\alpha \chi_\eps\|_{\dot{H}^\sigma_{\bx_j}(\Omega)}^2 
		&= \iint_{\Omega \times \Omega} \dfrac{|D^\alpha\chi_\eps(\bx;\sx') - D^\alpha\chi_\eps(\by;\sx')|^2}{|\bx-\by|^{d+2\sigma}}\,d\bx d\by \\
		&\lesssim \eps^{-2|\alpha|-2} \sum_{k \neq j} \iint_{B_{2\eps}(\bx_k) \times B_{2\eps}(\bx_k)} 
			|\bx - \by|^{-d-2\sigma+2} d\bx d\by
		\lesssim \eps^{-2|\alpha|-2\sigma+d},
	\end{align*}
	so that $\|\chi_\eps\|_{\dot{H}^{s,N}(\Omega)}^2 \lesssim \eps^{-2s+d}$.
		
	Similarly, for $\chi_\eps^*$ we consider 
	$B(j,\eps) = \cup_{k \neq j} B_{e^{-1/\eps}}(\bx_k) \setminus \cup_{k \neq j} B_{e^{-2/\eps}}(\bx_k)$
	and
	\begin{equation}\label{eq:phi_star_bound}
		|D^\alpha\varphi_\eps^*(\bx)| 
		= |D^\alpha_\bx \varphi^*(\eps\ln|\bx|)| 
		\le C \eps |\bx|^{-|\alpha|} \1_{B_{e^{-1/\eps}}(0) \setminus B_{e^{-2/\eps}}(0)}.
	\end{equation}
	In $\chi_\eps^*$ this could involve different points $\bx_k$
	but the worst case is if they are the same,
	$$
		\|D^\alpha \chi_\eps^*\|_{L^2_{\bx_j}(\Omega)}^2
		\lesssim \eps^{2} \int_{B(j,\eps)} |\bx_j - \bx_k|^{-2|\alpha|} \,d\bx_j
		\lesssim \begin{cases} \eps^{2} & \text{for}\ 0<2|\alpha|<d,  \\ 
			\eps^{2} \int_{-2\eps^{-1}}^{-\eps^{-1}} ds = \eps & \text{for}\ 2|\alpha|=d. \end{cases} 
	$$
	This covers the even-dimensional critical case $d=2m$, $m \in \N_1$.
	
	In the odd-dimensional critical case $d=2m+2\sigma$, $\sigma = 1/2$,
	we observe that
	\begin{align*}
		\|D^\alpha \chi_\eps\|_{\dot{H}^\sigma_{\bx_j}(\Omega)}^2 
		&= \iint_{\Omega \times \Omega} \dfrac{|D^\alpha\chi_\eps(\bx;\sx') - D^\alpha\chi_\eps(\by;\sx')|^2}{|\bx-\by|^{d+1}}\,d\bx d\by \\
		&\lesssim \eps^{-2|\alpha|-2} \sum_{k \neq j} \iint_{B_{2\eps}(\bx_k) \times B_{2\eps}(\bx_k)} 
			|\bx - \by|^{-d+1} d\bx d\by
		\lesssim \eps^{-2|\alpha|-1+d},
	\end{align*}
	which is not enough for $2|\alpha|=d-1$.
	Instead we shall use $\chi_\eps^*$.

	For the case $2|\alpha|=d-1$ things are a bit less straightforward. We start with the case $d=1$ which is the easiest. Here our approach differs slightly due to the fact that in this case $|\alpha|=0$.

	Let $U_1=\cap_{k\neq j} B_{e^{-1/\eps}}(\bx_k)^c$, $U_2= \cup_{k\neq j} B_{e^{-2/\eps}}(\bx_k)$ and $U= \Omega\setminus(U_1\cup U_2)$.

	We estimate the seminorm $\|\chi_\eps^*\|_{\dot H^s_{\bx_j}(\Omega)}$. By construction of $\chi_\eps^{*}$ we have that
	\begin{equation*}
		|\chi_\eps^*(\bx; \sx')-\chi_\eps^*(\by; \sx')|\leq 1, \quad \forall \bx, \by \in \Omega.
	\end{equation*}
	Moreover, the difference is zero whenever $(\bx, \by) \in U_1^2 \cup U_2^2$.

	For $\bx$ and $\by$ close we need to estimate this quantity more precisely. By Taylor's theorem we can estimate
	\begin{align*}
		|\chi_\eps^*&(\bx; \sx') - \chi_\eps^*(\by; \sx')|\\
		&=
		\biggl|
		\int_0^1 \sum_{k\neq j}^{N}\Bigl(\prod_{i\notin\{k, j\}} \varphi^{*}_\eps(\bx-\bx_i)\Bigr)(\varphi^*)'(\eps(\ln|\bx-\bx_k|+t(\ln|\by-\bx_k|-\ln|\bx-\bx_k|)))\\
		&\qquad\times \eps(\ln|\by-\bx_k|-\ln|\bx-\bx_k|)\,dt\biggr|\\
		&\leq
		C \eps \sum_{k\neq j}\bigl|{\ln|\bx-\bx_k|-\ln|\by-\bx_k|}\bigr|.
	\end{align*}

	By symmetry in $\bx, \by$ we find
	\begin{equation}\label{eq:cc_integral_1D}
	\begin{aligned}
		\|\chi_\eps^*\|_{\dot H^s_{\bx_j}(\Omega)}^2
		&=
		\iint_{\Omega\times\Omega} \frac{|\chi_\eps^*(\bx; \sx')-\chi_\eps^*(\by; \sx')|^2}{|\bx-\by|^2}d\bx d\by\\
		&\leq
		2\iint_{U\times \Omega}\frac{|\chi_\eps^*(\bx; \sx')-\chi_\eps^*(\by; \sx')|^2}{|\bx-\by|^2}d\bx d\by
		+
		2\iint_{U_1\times U_2}\frac{1}{|\bx-\by|^2}d\bx d\by.
	\end{aligned}
	\end{equation}
	
	The latter term is fairly easy to estimate:
	\begin{equation*}
		\iint_{U_1\times U_2} \frac{1}{|\bx-\by|^2}d\bx d\by
		\leq
		2|U_1| e^{-2/\eps}\int_{-e^{-2/\eps}}^{e^{-2/\eps}}\frac{1}{(e^{-1/\eps}+r)^2}dr \leq C e^{-2/\eps}.
	\end{equation*}

	We return to the remaining term of~\eqref{eq:cc_integral_1D}:
	\begin{align*}
		\iint_{U\times \Omega}\frac{|\chi_\eps^*(\bx; \sx')-\chi_\eps^*(\by; \sx')|^2}{|\bx-\by|^2}d\bx d\by
		&\leq
		C\eps^2\sum_{k\neq j} \iint_{U\times \Omega} \frac{(\ln|\bx-\bx_k|-\ln|\by-\bx_k|)^2}{|\bx-\by|^2}d\bx d\by\\
		&=
		C\eps^2 \sum_{k\neq j} \iint_{(U-\bx_k)\times(\Omega-\bx_k)}\frac{1}{|\bx|^2}\frac{\ln^2\bigl|\frac{\by}{\bx}\bigr|}{\bigl(1-\bigl|\frac{\by}{\bx}\bigr|\bigr)^2}d\by d\bx\\
		&\leq
		C\eps^2\sum_{k\neq j} \int_{U-\bx_k}\frac{1}{|\bx|}\int_{0}^{\infty}\frac{\ln^2z}{(1-z)^2}dz d\bx.
	\end{align*}
	The inner integral is convergent and hence we are left with
	\begin{equation*}
		\eps^2\sum_{k\neq j} \int_{U-\bx_k}\frac{1}{|\bx|} d\bx
		\leq
		C\eps^2 \int_{e^{-2/\eps}}^{e^{-1/\eps}}z^{-1}dz= C\eps.
	\end{equation*}

	\medskip

	When $2|\alpha|=d-1$ and $d> 1$ the estimates for the difference quotient are a bit more technical. Similarly to above, Taylor's theorem combined with~\eqref{eq:phi_star_bound} yields
	\begin{align*}
		|D_{\bx_j}^\alpha \chi_\eps^*(\bx; \sx')-D^\alpha_{\bx_j} \chi_\eps^*(\by; \sx')|
		&= 
		\biggl|
			\sum_{|\beta|=1}\int_0^1 D^{\alpha+\beta}_{\bx_j}\chi_\eps^*(\bx+t(\by-\bx); \sx')(\by-\bx)^\beta dt
		\biggr|\\
		&\leq
		C\eps |\bx-\by| \sum_{k\neq j}\int_0^1 \frac{\1_{B^c_{e^{-2/\eps}}(\bx_k)}(\bx+t(\by-\bx))}{|\bx-t(\by-\bx)-\bx_k|^{|\alpha|+1}}dt.
	\end{align*}
	
	We estimate the integral
	\begin{equation*}
		\int_0^1 |\bx-t(\by-\bx)-\bx_k|^{-|\alpha|-1}dt.
	\end{equation*}

	Choosing coordinates in a plane containing $\bx_k, \bx$ and $\by$ such that $\bx_k=(0, 0)$, $\bx=(r_1, 0)$ and $\by=(r_2\cos(\theta), r_2 \sin(\theta))$ with $\theta \in [0, \pi)$ we can write this integral as
	\begin{align*}
		\int_0^1 &(((1-t)r_1-t r_2\cos\theta)^2+t^2r_2^2\sin^2\theta)^{-\frac{|\alpha|+1}{2}}dt\\
		&=
		\int_0^1 \Bigl(\Bigl(1-\frac{t r_2}{(1-t)r_1}\cos\theta\Bigr)^2+\frac{t^2r_2^2}{(1-t)^2r_1^2}\sin^2\theta\Bigr)^{-\frac{|\alpha|+1}{2}}dt\\
		&=
		\frac{1}{r_1 r_2^{|\alpha|}}\int_0^\infty \frac{(s+r_2/r_1)^{|\alpha|-1}}{((1+s \cos\theta)^2+s^2\sin^2\theta)^{\frac{|\alpha|+1}{2}}}ds\\
		&\leq
		\frac{1}{r_1 r_2^{|\alpha|}}\int_0^\infty \frac{(s+1)^{|\alpha|-1}}{((1-s)^2+2s(1+\cos\theta))^{\frac{|\alpha|+1}{2}}}ds\\
		&=:
		\frac{g(\theta)}{r_1 r_2^{|\alpha|}}
	\end{align*}
	The integral $g(\theta)$ tends to infinity in the limit $\theta\to \pi$. However, this corresponds to~$\bx$ and~$\by$ being far apart relative to their distance to the~$\bx_k$.

	When $\theta$ is far from $0$ we shall instead use the following bound which follows directly from the supremum bound in~\eqref{eq:phi_star_bound}
	\begin{equation}\label{eq:antipodal_bound1}
		|D_{\bx_j}^\alpha \chi_\eps^*(\bx; \sx')- D_{\bx_j}^\alpha \chi_\eps^*(\by; \sx')|
		\leq
		C \eps \sum_{k\neq j}\Biggl[\frac{\1_{B_{e^{-2/\eps}}^c(\bx_k)}(\bx)}{|\bx-\bx_k|^{|\alpha|}}+\frac{\1_{B_{e^{-2/\eps}}^c(\bx_k)}(\by)}{|\by-\bx_k|^{|\alpha|}}\Biggr]
	\end{equation}
	together with the fact that
	\begin{equation}\label{eq:antipodal_bound2}
		|\bx- \by| \geq \sin(\theta/2)\max\{|\bx-\bx_k|, |\by-\bx_k|\},
	\end{equation}
	where $\theta$ is the angle between the vectors $\by-\bx_k$ and $\bx-\bx_k.$ Note that the bound in~\eqref{eq:antipodal_bound1} does not capture the continuity of $D^\alpha\chi_\eps^*$ and hence cannot be sufficiently accurate for our purposes when $|\bx-\by|$ is small.

	\bigskip

	We are now ready to start estimating the $H^s$-seminorm of $\chi_\eps^*$. Using the same notation as in the $d=1$ case
	\begin{align*}
		\|\chi_\eps^*\|_{\dot H^s_{\bx_j}(\Omega)}^2
		&=
		\iint_{\Omega\times \Omega}\frac{|D_{\bx_j}^\alpha\chi_\eps^*(\bx; \sx')-D_{\bx_j}^\alpha\chi_\eps^*(\by; \sx')|^2}{|\bx-\by|^{d+1}}d\bx d\by\\
		&\leq
		2\iint_{U\times \Omega}\frac{|D_{\bx_j}^\alpha\chi_\eps^*(\bx; \sx')-D_{\bx_j}^\alpha\chi_\eps^*(\by; \sx')|^2}{|\bx-\by|^{d+1}}d\bx d\by,
	\end{align*}
	where we used that $|D_{\bx_j}^\alpha \chi_\eps^*(\bx; \sx')|=0$ for $\bx \in U_1 \cup U_2$, since $|\alpha|\geq 1$.

	To bound the integral we use the estimates derived earlier. Recalling that in the case under consideration $|\alpha|= \frac{d-1}{2}$ the derived bounds tells us that
	\begin{align*}
		&\iint_{U\times \Omega}\frac{|D_{\bx_j}^\alpha\chi_\eps^*(\bx; \sx')-D_{\bx_j}^\alpha\chi_\eps^*(\by; \sx')|^2}{|\bx-\by|^{d+1}}d\bx d\by\\
		&\leq
		 C \eps^2\sum_{k\neq j}
		 \iint_{U\times \Omega}
		 \frac{\min\bigl\{
		 \frac{g(\theta_k)^{2}}{|\bx-\bx_k|^{d-1}|\by-\bx_k|^{2}}, 
		 \frac{g(\theta_k)^{2}}{|\bx-\bx_k|^{2}|\by-\bx_k|^{d-1}}, 
		 \frac{|\bx-\by|^{-2}}{|\bx-\bx_k|^{d-1}}
		 +\frac{|\bx-\by|^{-2}}{|\by-\bx_k|^{d-1}}\bigr\}}{|\bx-\by|^{d-1}}d\bx d\by,
	\end{align*}
	here $\theta_k$ denotes the angle between the vectors $\bx-\bx_k$ and $\by-\bx_k$. For each fixed $\bx$ we rewrite the integral over $\Omega$ in spherical coordinates around $\bx_k$, oriented so that $\bx$ is located at the south pole. With $R=|\bx-\bx_k|, r=|\by-\bx_k|$ and $\theta_k$ as before, the integral becomes
	\begin{align*}
		&\iint_{U\times\Omega}\frac{\min\bigl\{
		\frac{g(\theta_k)^{2}}{R^{d-1}r^{2}}, 
		\frac{g(\theta_k)^{2}}{R^{2}r^{d-1}}, 
		\frac{|\bx-\by|^{-2}}{R^{d-1}}+
		\frac{|\bx-\by|^{-2}}{r^{d-1}}\bigr\}}{|\bx-\by|^{d-1}}d\bx d\by\\
		&\quad\leq 
			\int_U\int_0^{\infty}\int_0^\pi \int_{\S^{d-2}}
		\frac{
		\min\bigl\{ 
		\frac{g(\theta_k)^{2}}{R^{d-1}r^{2}}, 
		\frac{g(\theta_k)^{2}}{R^{2}r^{d-1}},
		\frac{|\bx-\by|^{-2}}{R^{d-1}}
		+\frac{|\bx-\by|^{-2}}{r^{d-1}}
		\bigr\}}
		{((R-r \cos\theta_k)^2+r^2\sin^2 \theta_k |\hat\theta|^2)^{(d-1)/2}}
		r^{d-1}\sin^{d-2}\theta_k dr d\theta_k d S(\hat\theta) d\bx\\
		&\quad=
			C\int_U\int_0^{\infty}\int_0^\pi
		\frac{
		\min\bigl\{
		\frac{g(\theta_k)^{2}}{R^{d-1}r^{2}}, 
		\frac{g(\theta_k)^{2}}{R^{2}r^{d-1}},
		\frac{|\bx-\by|^{-2}}{R^{d-1}}+
		\frac{|\bx-\by|^{-2}}{r^{d-1}}
		\bigr\}}
		{((R-r \cos\theta_k)^2+r^2\sin^2 \theta_k)^{(d-1)/2}}
		r^{d-1}\sin^{d-2}\theta_k dr d\theta_k d\bx.
	\end{align*}

	For $\theta \in [\pi/2, \pi]$ we use the bounds in~\eqref{eq:antipodal_bound1},~\eqref{eq:antipodal_bound2}:
	\begin{align*}
		\int_U\int_0^{\infty}\int_{\pi/2}^\pi&
		\frac{
		R^{-d+1}+r^{-d+1}}
		{\sin^{d+1}(\theta_k/2) \max\{R, r\}^{d+1}}
		r^{d-1}\sin^{d-2}\theta_k dr d\theta_k d\bx\\
		&=
		\int_U\int_R^{\infty}\int_{\pi/2}^\pi
		\frac{
		R^{-d+1}+r^{-d+1}}
		{\sin^{d+1}(\theta_k/2)  r^{2}}\sin^{d-2}(\theta_k)
		dr d\theta_k d\bx\\
		&\quad+
		\int_U\int_0^{R}\int_{\pi/2}^\pi
		\frac{
		R^{-d+1}+r^{-d+1}}
		{\sin^{d+1}(\theta_k/2) R^{d+1}}
		r^{d-1}\sin^{d-2}(\theta_k)dr d\theta_k d\bx\\
		&\leq
		C\int_U\int_R^{\infty}
		\frac{
		R^{-d+1}+r^{-d+1}}
		{r^{2}}
		dr d\bx
		+
		C\int_U\int_0^{R}
		\frac{
		R^{-d+1}+r^{-d+1}}
		{R^{d+1}}
		r^{d-1}dr d\bx\\
		&=
		C \int_U R^{-d}d\bx\\
		&\leq
		C \int_{e^{-2/\eps}}^{e^{-1/\eps}}R^{-1}dR
		 = C\eps^{-1}.
	\end{align*}
	Thus this part of the integral is $O(\eps^{-1})$.

	\medskip

	What remains is to bound the integral when $r\geq 0$ and $\theta_k\in [0, \pi/2).$ To accomplish this we shall use the bound for the difference of the derivatives derived earlier. Note that since $\theta_k<\pi/2$ we can replace the factor $g(\theta_k)$ by a constant without any loss. Using that $|\bx-\by|^2=R^2+r^2-2r R\cos\theta\geq \max\{(R-r)^2, 2r R(1-\cos\theta_k)\}$ we for any fixed $\mu\in (0, 1)$ find
	\begin{align*}
		\int_U\int_{0}^{\infty}&\int_0^{\pi/2}
		\frac{
		\min\bigl\{ 
		\frac{1}{R^{d-1}r^{2}}, 
		\frac{1}{R^{2}r^{d-1}}
		\bigr\}}
		{((R-r \cos\theta_k)^2+r^2\sin^2 \theta_k)^{(d-1)/2}}
		r^{d-1}\sin^{d-2}\theta_k dr d\theta_k d\bx\\
		&=
		\int_U\int_{0}^{R}\int_0^{\pi/2}
		((R-r \cos\theta_k)^2+r^2\sin^2 \theta_k)^{-(d-1)/2}
		R^{-d+1}r^{d-3}\sin^{d-2}\theta_k dr d\theta_k d\bx\\
		&\quad+
		\int_U\int_{R}^{\infty}\int_0^{\pi/2}
		((R-r \cos\theta_k)^2+r^2\sin^2 \theta_k)^{-(d-1)/2}R^{-2}\sin^{d-2}\theta_k dr d\theta_k d\bx\\
		&\leq
		\int_U\int_{0}^{R}
		(R-r)^{-\mu}r^{(d-5+\mu)/2} R^{-(3d-3-\mu)/2}  dr  d\bx 
		\int_0^{\pi/2}\frac{\sin^{d-2}\theta_k}{(1-\cos\theta_k)^{(d-1-\mu)/2}}d\theta_k\\
		&\quad+
		\int_U\int_{R}^{\infty}
		(r-R)^{-\mu}r^{-(d-1-\mu)/2} R^{-(d-3-\mu)/2} dr d\bx
		\int_0^{\pi/2}\frac{\sin^{d-2}\theta_k}{(1-\cos\theta_k)^{(d-1-\mu)/2}}d\theta_k\\
		&\leq
		C\int_U (R^{-d}+R^{-d+3}) d\bx\\
		&\leq
		C\int_{e^{-2/\eps}}^{e^{-1/\eps}} (R^{-1}+R^2) dR
		=
		C \eps^{-1}.
	\end{align*}
	Consequently, also this part of the integral is $O(\eps^{-1})$ which completes the proof.
\end{proof}

\begin{lemma}\label{lem:inclusion_of_spaces_W}
	For all $s > 0$ it holds that $\cH^{s,N}_W(\R^d) \subseteq \cH^{s,N}_0(\R^d)$.
\end{lemma}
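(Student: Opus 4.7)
The plan is to build the approximation in two stages. First, cut off $\Psi \in \cH_W^{s,N}(\R^d)$ by multiplying with the diagonal cut-off $\chi_\eps$ introduced before Lemma~\ref{lem:norm_of_cutoff}; since $\chi_\eps \equiv 0$ on $\{\min_{i<j}|\bx_i - \bx_j| \le \eps\}$, the product $\chi_\eps\Psi$ vanishes in an open neighbourhood of $\bDelta$. Second, for each fixed $\eps>0$, a standard mollification combined with spatial truncation produces a sequence in $C_c^\infty(\R^{dN}\setminus\bDelta)$ converging to $\chi_\eps\Psi$ in $H^s(\R^{dN})$. A diagonal extraction therefore reduces the whole lemma to showing that $\chi_\eps\Psi \to \Psi$ in $H^s(\R^{dN})$ as $\eps \to 0^+$.

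Convergence in $L^2$ is immediate from dominated convergence, since $\chi_\eps \to 1$ pointwise off the null set $\bDelta$, $|\chi_\eps|\le 1$, and $\Psi\in L^2$. For the homogeneous seminorm, by the $\R^{dN}$ version of the equivalence~\eqref{eq:Norm_equiv} it suffices to control $\|(1-\chi_\eps)\Psi\|_{\dot H^{s,N}(\R^d)}$. For integer $s=m$, I would expand each $D^\alpha_{\bx_j}\bigl((1-\chi_\eps)\Psi\bigr)$ via the Leibniz rule: the term $(1-\chi_\eps)D^\alpha_{\bx_j}\Psi$ tends to zero in $L^2$ by dominated convergence, while the remaining cross-terms $D^\beta_{\bx_j}\chi_\eps \cdot D^{\alpha-\beta}_{\bx_j}\Psi$ with $1\le |\beta|\le m$ must be shown to vanish in $L^2$. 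For fractional $s$ one writes an analogous decomposition of the double-integral seminorm, splitting the piece in which $(1-\chi_\eps)$ multiplies $\Psi(\bx)-\Psi(\by)$ from the piece in which the difference of $\chi_\eps$ multiplies $\Psi$, and applies the same strategy.

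The main obstacle is the cross-terms. By Lemma~\ref{lem:norm_of_cutoff}, $|D^\beta_{\bx_j}\chi_\eps| \le C\eps^{-|\beta|}$, supported in the tubular neighbourhood $E_\eps := \bigcup_{k\ne j}\{|\bx_j-\bx_k|\le 2\eps\}$, on which $W_s \ge c\,\eps^{-2s}$. For the top-order piece $|\beta|=m$, the assumption $\int W_s|\Psi|^2 < \infty$ and dominated convergence applied to $\mathbf{1}_{E_\eps}W_s|\Psi|^2$ give $\int_{E_\eps}|\Psi|^2 \le C\eps^{2s}\,o(1)$, exactly compensating the $\eps^{-2m}$ from the derivatives of the cut-off. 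For the lower-order cross-terms $1\le |\beta|<m$, the required decay $\int_{E_\eps}|D^{m-|\beta|}_{\bx_j}\Psi|^2 = o(\eps^{2|\beta|})$ follows from a Hardy-type interpolation between $\Psi \in H^s$ and $|\bx_j-\bx_k|^{-s}\Psi \in L^2$, which delivers the intermediate weighted estimate $|\bx_j-\bx_k|^{-(s-|\beta|)}\,D^{m-|\beta|}_{\bx_j}\Psi \in L^2$; dominated convergence on the shrinking region $E_\eps$ then closes the argument. The fractional case proceeds along the same lines, now also invoking the $\dot H^{\sigma,N}$-bounds on derivatives of $\chi_\eps$ supplied by Lemma~\ref{lem:norm_of_cutoff}.
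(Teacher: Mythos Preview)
Your approach is essentially the paper's: cut off with $\chi_\eps$, reduce to $H^s$-convergence of $\chi_\eps\Psi \to \Psi$, get the $L^2$ part by dominated convergence, expand by Leibniz, control the top-order piece $(D^\alpha\chi_\eps)\Psi$ via $\int W_s|\Psi|^2<\infty$, and treat the intermediate cross-terms by interpolation. The paper runs the interpolation step as a \emph{local} scaling estimate on the balls $B_{4\eps}(\bx_k)$,
\[
\|\Psi\|_{\dot H^{\ell}_{\bx_j}(B_{4\eps}(\bx_k))}^2 \;\le\; C\,\eps^{2(s-\ell)}\Bigl(\|\Psi\|_{\dot H^{s}_{\bx_j}(B_{4\eps}(\bx_k))}^2+\|\Psi\sqrt{W_s}\|_{L^2_{\bx_j}(B_{4\eps}(\bx_k))}^2\Bigr),
\]
rather than invoking a global weighted Hardy-type inequality; this is more elementary and sidesteps any dimensional restrictions such a global inequality might carry.

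One slip in your write-up: the claimed intermediate estimate $|\bx_j-\bx_k|^{-(s-|\beta|)}\,D^{m-|\beta|}_{\bx_j}\Psi\in L^2$ has the wrong weight. Interpolating between $|\bx_j-\bx_k|^{-s}\Psi\in L^2$ (weight $s$, derivative $0$) and $\Psi\in \dot H^s$ (weight $0$, derivative $s$) keeps the sum of weight and derivative order equal to $s$, so with derivative order $m-|\beta|$ the weight must be $s-(m-|\beta|)$ (that is, $|\beta|$ in the integer case $s=m$), not $s-|\beta|$. With the corrected exponent the dominated-convergence step on $E_\eps$ gives exactly $\int_{E_\eps}|D^{m-|\beta|}\Psi|^2=o(\eps^{2|\beta|})$, which is what is needed to kill the cross-term; as written, your exponent would only yield $o(\eps^{2(m-|\beta|)})$.
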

\begin{proof}
	Take $\Psi \in H^s(\R^{dN})$ s.t. $\int W_s|\Psi|^2 < \infty$ and
	let $\Psi_\eps := \chi_\eps\Psi$. Since $\Psi_\eps$ is supported away from
	$\bDelta_\eps := \bDelta + B_\eps(0)$ and thus may be approximated in 
	$C_c^\infty(\R^{dN} \setminus \bDelta)$, it is sufficient to prove that
	$\|\Psi-\Psi_\eps\|_{H^{s}(\R^{dN})} \to 0$ to conclude the lemma.
	We have by dominated convergence
	$$
		\|\Psi-\Psi_\eps\|_{L^2(\R^{dN})}^2 \lesssim \int_{\scriptbDelta_\eps \cap \,\R^{dN}} |1-\chi_\eps|^2 |\Psi|^2 \to 0,
	$$
	while for $\alpha \neq 0$
	$$
		D^\alpha_{\bx_j}((1-\chi_\eps)\Psi) = \sum_{0 \le \beta \le \alpha} D^\beta_{\bx_j}(1-\chi_\eps) D^{\alpha-\beta}_{\bx_j}\Psi,
	$$
	so for $s=m+\sigma$, $|\alpha| = m$, $0 \le \sigma < 1$ (for $\sigma=0$ we replace by $L^2$)
	\begin{align*}
		\|\Psi-\Psi_\eps\|_{\dot{H}^{s,N}(\R^d)} 
		&\lesssim \sum_{j,\alpha} \|(1-\chi_\eps)D^\alpha_{\bx_j}\Psi\|_{\dot{H}^{\sigma,N}(\R^d)} 
		+ \sum_{j,\alpha} \sum_{0<\beta<\alpha} \|(D^\beta_{\bx_j} \chi_\eps) (D^{\alpha-\beta}_{\bx_j}\Psi)\|_{\dot{H}^{\sigma,N}(\R^d)} \\
		&+ \sum_{j,\alpha} \|(D^\alpha_{\bx_j} \chi_\eps) \Psi\|_{\dot{H}^{\sigma,N}(\R^d)}.
	\end{align*}
	We may estimate as in the proof of Lemma~\ref{lem:norm_of_cutoff},
	$$
		\|(1-\chi_\eps)D^\alpha\Psi\|_{\dot{H}^{\sigma,N}(\R^d)}^2
		\lesssim \sum_j\sum_{k\neq j} \int_{\R^{d(N-1)}} \|\Psi\|_{H^{s}_{\bx_j}(B_\eps(\bx_k))}^2
		\to 0,
	$$
	\begin{align*}
		\|\Psi D^\alpha \chi_\eps\|_{\dot{H}^\sigma_{\bx_j}(\R^d)}^2 
		&= \iint_{\R^d \times \R^d} \dfrac{|\Psi D^\alpha\chi_\eps(\bx;\sx') - \Psi D^\alpha\chi_\eps(\by;\sx')|^2}{|\bx-\by|^{d+2\sigma}} \,d\bx d\by \\
		&\lesssim I_\alpha + \eps^{-2|\alpha|-2} \sum_{k,l \neq j} \iint_{B_{2\eps}(\bx_k) \times B_{2\eps}(\bx_l)} 
			|\Psi(\bx;\sx')|^2|\bx - \by|^{-d-2\sigma+2} d\bx d\by \\
		&\lesssim I_\alpha + \eps^{-2|\alpha|-2} \sum_{k \neq j} \int_{B_{2\eps}(\bx_k)}
			|\Psi(\bx;\sx')|^2 \int_{B_{4\eps}(\bx)} |\bx - \by|^{-d-2\sigma+2} d\by \,d\bx \\
		&\lesssim I_\alpha + \eps^{-2|\alpha| - 2\sigma} \sum_{k \neq j} 
			\int_{B_{2\eps}(\bx_k)} |\Psi(\bx;\sx')|^2 \,d\bx,
	\end{align*}
	where
	\begin{equation*}
		I_\alpha
		= \sum_{k \neq j} \int_{B_{2\eps}(\bx_k)}
			|D^\alpha\chi_\eps(\bx;\sx')|^2 \int_{B_{4\eps}(\bx_k)} \dfrac{|\Psi(\bx;\sx') - \Psi(\by;\sx')|^2}{|\bx-\by|^{d+2\sigma}} d\by \,d\bx.
	\end{equation*}

	For the highest-order derivatives $2|\alpha|=2s-2\sigma$:
	$$
		\|\Psi D^\alpha \chi_\eps\|_{\dot{H}^{\sigma,N}(\Omega)}^2 
		\lesssim \int_{\Omega^{N-1}} I_\alpha + \eps^{-2s} 
			\int_{\scriptbDelta_{2\eps}} |\Psi(\sx)|^2 \,d\sx
		\lesssim \int_{\Omega^{N-1}} I_\alpha +  
			\int_{\scriptbDelta_{2\eps}} W_s(\sx) |\Psi(\sx)|^2 \,d\sx,
	$$
	where the last term tends to zero as $\eps \to 0$ by dominated convergence.
	
	For $I_\alpha$ we have that
	\begin{align*}
		I_\alpha
		&= \sum_{k \neq j} \int_{B_{2\eps}(\bx_k)}
			|D^\alpha\chi_\eps(\bx;\sx')|^2 \int_{B_{4\eps}(\bx_k)} \dfrac{|\Psi(\bx;\sx') - \Psi(\by;\sx')|^2}{|\bx-\by|^{d+2\sigma}} d\by \,d\bx \\
		&\lesssim \eps^{-2|\alpha|} \sum_{k \neq j} \iint_{B_{4\eps}(\bx_k) \times B_{4\eps}(\bx_k)}
			\dfrac{|\Psi(\bx;\sx') - \Psi(\by;\sx')|^2}{|\bx-\by|^{d+2\sigma}} d\by \,d\bx \\
		&\lesssim \eps^{-2|\alpha|} \sum_{k \neq j} \|\Psi\|_{\dot{H}^\sigma_{\bx_j}(B_{4\eps}(\bx_k))}^2.
	\end{align*}
	By interpolation of Sobolev spaces and scaling we have
	for $C = C(d,\sigma,m)>0$
	\begin{equation*}
		\|\Psi\|_{\dot{H}^\sigma_{\bx_j}(B_{4\eps}(\bx_k))}^2 
		\le C\eps^{2m} \bigl(\|\Psi\|_{\dot{H}^{s}_{\bx_j}(B_{4\eps}(\bx_k))}^2+ \|\Psi \sqrt{W_s}\|^2_{L^2_{\bx_j}(B_{4\eps}(\bx_k))}\bigr),
	\end{equation*}
	and thus by dominated convergence 
	\begin{equation*}
	\int_{\R^{d(N-1)}} I_\alpha \lesssim \sum_{k\neq j}\int_{\R^{d(N-1)}} \bigl(\|\Psi\|_{\dot{H}^{s}_{\bx_j}(B_{4\eps}(\bx_k))}^2+\|\Psi\sqrt{W_s}\|^2_{L^2_{\bx_j}(B_{4\eps}(\bx_k))}\bigr) \to 0,
	\end{equation*}
	for $|\alpha| = s-\sigma$.
	Similarly, for the lower-order mixed terms
	\begin{align*}
		\|(D^{\alpha-\beta}_{\bx_j}\Psi)(D^\beta_{\bx_j} \chi_\eps)\|_{\dot{H}^{\sigma}_{\bx_j}(\R^d)}
		&\lesssim \eps^{-2|\beta|} \sum_{k \neq j} \|D^{\alpha-\beta}_{\bx_j}\Psi\|_{\dot{H}^\sigma_{\bx_j}(B_{4\eps}(\bx_k))}^2\\
		&\quad 
		+ \eps^{-2|\beta|-2\sigma} \sum_{k \neq j} \|D^{\alpha-\beta}_{\bx_j}\Psi\|_{L^2_{\bx_j}(B_{4\eps}(\bx_k))}^2 \\
		&\lesssim 
		\sum_{k \neq j} \bigl(\|\Psi\|_{\dot{H}^{s}_{\bx_j}(B_{4\eps}(\bx_k))}^2
		+
		\|\Psi\sqrt{W_s}\|_{L^2_{\bx_j}(B_{4\eps}(\bx_k))}^2\bigr),
	\end{align*}
	which implies that also
	$\|(D^{\alpha-\beta} \Psi)(D^\beta \chi_\eps)\|_{\dot{H}^{\sigma,N}(\R^d)} 
		\to 0$.
\end{proof}

\begin{lemma}\label{lem:inclusion_of_spaces_0}
	For all $0 < 2s \le d$ it holds that 
	$\cH^{s,N}_0(\R^d) = \cH^{s,N}(\R^d) = H^s(\R^{dN})$.
\end{lemma}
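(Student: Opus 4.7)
The inclusions $\cH^{s,N}_0(\R^d) \subseteq \cH^{s,N}(\R^d) \subseteq H^s(\R^{dN})$ are immediate from the definitions, so the content of the lemma is the reverse inclusion $H^s(\R^{dN}) \subseteq \cH^{s,N}_0(\R^d)$. I would treat the subcritical case $2s<d$ and the critical case $2s=d$ separately.

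In the subcritical case, the Hardy--Rellich inequality cited in the paper, $\int_{\R^{dN}} |\bx_j-\bx_k|^{-2s}|\Psi|^2 \le C\|\Psi\|_{H^s(\R^{dN})}^2$, applied to each pair and summed yields $\int W_s|\Psi|^2 \le C\|\Psi\|_{H^s}^2 <\infty$ for every $\Psi \in H^s(\R^{dN})$. Hence $H^s(\R^{dN}) = \cH^{s,N}_W(\R^d)$, and Lemma~\ref{lem:inclusion_of_spaces_W} concludes this case.

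In the critical case, Hardy--Rellich fails and a direct mollification is needed. By density of $C_c^\infty(\R^{dN})$ in $H^s(\R^{dN})$, it suffices to approximate any $\Psi\in C_c^\infty(\R^{dN})$ by smooth functions supported away from $\bDelta$. I would set $\Psi_\eps := \chi_\eps^*\Psi$ with $\chi_\eps^*$ as in Lemma~\ref{lem:norm_of_cutoff}. Since $\chi_\eps^*$ is smooth and vanishes on the neighborhood $\{\min_{j\ne k}|\bx_j-\bx_k|\le e^{-2/\eps}\}$ of $\bDelta$, $\Psi_\eps\in C_c^\infty(\R^{dN}\setminus\bDelta)$. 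The task is then to show $\|\Psi-\Psi_\eps\|_{H^s(\R^{dN})}\to 0$ as $\eps\to 0$. The $L^2$-convergence is immediate via dominated convergence, since $0\le\chi_\eps^*\le 1$ and $1-\chi_\eps^*$ is supported on a shrinking neighborhood of the null set $\bDelta$. For the $\dot H^s$-seminorm, writing $s=m+\sigma$ and expanding by Leibniz,
\begin{equation*}
    D_{\bx_j}^\alpha\bigl((1-\chi_\eps^*)\Psi\bigr) = (1-\chi_\eps^*)\,D_{\bx_j}^\alpha\Psi \,-\, \sum_{0<\beta\le\alpha}\binom{\alpha}{\beta}(D_{\bx_j}^\beta\chi_\eps^*)\,D_{\bx_j}^{\alpha-\beta}\Psi,\qquad |\alpha|=m,
\end{equation*}
the diagonal term has $\dot H^{\sigma,N}$-seminorm tending to zero by dominated convergence applied to the corresponding $L^2$ and $\dot H^\sigma$ double-integral pieces, using boundedness of $D^\alpha\Psi$ and the shrinking support of $1-\chi_\eps^*$. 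Each mixed term is controlled via the key bound $\|D^\beta\chi_\eps^*\|_{\dot H^{\sigma,N}(\Omega)} \le C\eps^{1/2}$ of Lemma~\ref{lem:norm_of_cutoff}, which is applicable precisely because $|\beta|\le m=d/2-\sigma$, combined with $L^\infty$-bounds on the derivatives of $\Psi$ and a splitting of the double integral according to whether both points lie in the shrinking support of $D^\beta\chi_\eps^*$.

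The main obstacle is that $\|D^\beta\chi_\eps^*\|_{L^\infty}$ blows up as $\eps\to 0$, so the naive fractional Leibniz estimate $\|fg\|_{\dot H^\sigma}\lesssim\|f\|_{L^\infty}\|g\|_{\dot H^\sigma}+\|g\|_{L^\infty}\|f\|_{\dot H^\sigma}$ cannot be applied directly with $f=D^\beta\chi_\eps^*$ and $g=D^{\alpha-\beta}\Psi$. I would sidestep this exactly as in the proof of Lemma~\ref{lem:inclusion_of_spaces_W} (the estimation of $I_\alpha$): use that $D^\beta\chi_\eps^*$ is supported where some $|\bx_j-\bx_k|\le e^{-1/\eps}$, and exploit the smoothness of $\Psi$ (with bounded Sobolev norms on bounded sets, by interpolation and scaling) to absorb the pointwise blow-up of $D^\beta\chi_\eps^*$ against the vanishing Lebesgue measure of its support. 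Summing the resulting estimates over all multi-indices $|\alpha|=m$ and all $j=1,\ldots,N$ yields $\|\Psi-\Psi_\eps\|_{\dot H^{s,N}(\R^d)}\to 0$, which together with the equivalence~\eqref{eq:Norm_equiv} gives the $H^s$-convergence and completes the proof.
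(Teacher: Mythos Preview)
Your proposal is correct and follows essentially the same route as the paper: the subcritical case $2s<d$ via Hardy--Rellich plus Lemma~\ref{lem:inclusion_of_spaces_W}, and the critical case $2s=d$ by mollifying an arbitrary $\Psi\in C_c^\infty(\R^{dN})$ with the logarithmic cutoff $\chi_\eps^*$ and invoking the $\eps^{1/2}$-bounds of Lemma~\ref{lem:norm_of_cutoff} together with the argument from the proof of Lemma~\ref{lem:inclusion_of_spaces_W}. The paper's proof is simply a terser version of what you have written.
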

\begin{proof}
	As mentioned above, combining Lemma~\ref{lem:inclusion_of_spaces_W} with the Hardy--Rellich inequality implies the claim when $0<2s<d$. For $2s=d$ we argue as follows.

	It suffices to prove that $C_c^\infty(\R^{dN} \setminus \bDelta)$
	is dense in~$H^s$, and moreover, 
	using that $C_c^\infty(\R^{dN})$ is dense in $H^s$,
	it suffices to prove that if $\Psi \in C_c^\infty(\R^{dN})$
	then $\Psi_\eps := \chi^*_\eps\Psi \to \Psi$ in~$H^s$ as $\eps \to 0$.
	Clearly
	$$
		\|\Psi-\Psi_\eps\|_{L^2(\R^{dN})}^2 
		\lesssim \int_{\scriptbDelta_\eps \cap\, \mathrm{supp} \Psi} |1-\chi^*_\eps|^2 \to 0.
	$$
	Moreover, by Lemma~\ref{lem:norm_of_cutoff} and arguing as in the proof of Lemma~\ref{lem:inclusion_of_spaces_W}
	\begin{equation*}
		\|\Psi-\Psi_\eps\|_{\dot{H}^{s,N}(\R^{d})}^2 
		\lesssim \eps \to 0.\qedhere
	\end{equation*}
	%For $2s=d$ we replace $\chi_\eps$ with $\chi_\eps^*$ and use Lemma \ref{lem:norm_of_cutoff}. 
	% It suffices to prove that $C_c^\infty(\R^{dN} \setminus \bDelta)$
	% is dense in~$H^s$, and moreover, 
	% using that $C_c^\infty(\R^{dN})$ is dense in $H^s$,
	% it suffices to prove that if $\Psi \in C_c^\infty(\R^{dN})$
	% then $\Psi_\eps := \chi_\eps\Psi \to \Psi$ in~$H^s$ as $\eps \to 0$.
	% We have in fact already covered the case $2s < d$ above:
	% $$
	% 	\|\Psi-\Psi_\eps\|_{L^2(\R^{dN})}^2 
	% 	\lesssim \int_{\scriptbDelta_\eps \cap\, \R^{dN}} |1-\chi_\eps|^2 \to 0
	% $$
	% and by Lemma~\ref{lem:norm_of_cutoff}
	% $$
	% 	\|\Psi-\Psi_\eps\|_{\dot{H}^{s,N}(\R^{d})}^2 
	% 	\lesssim \|\chi_\eps\|_{\dot{H}^{s,N}(\R^d)}^2 
	% 	\lesssim \eps^{d-2s} \to 0.
	% $$
	
	% For $2s=d$ we replace $\chi_\eps$ with $\chi_\eps^*$ and use Lemma \ref{lem:norm_of_cutoff}. 
\end{proof}

The above generalizes the case $d=2$ and $s=1$ where it is well known that
hard-core bosons have non-extensive energy in the dilute limit \cite{LieYng-01}
and thus that a Lieb--Thirring inequality of the type \eqref{eq:LT-diagonal} cannot hold.
See also \cite{Svendsen-81} for generalizations with integer $s$.

% ------------------  Bibliography  --------------------

\end{document}